\theoremstyle{plain}
\newtheorem{theorem}{Theorem}[section]
\newtheorem{lemma}[theorem]{Lemma}
\newtheorem{corollary}[theorem]{Corollary}
\theoremstyle{definition}
\newtheorem{definition}[theorem]{Definition}
\theoremstyle{remark}
\newtheorem{openquestion}[theorem]{Open question}
\DeclareMathOperator{\polylog}{polylog}
\DeclareMathOperator{\Start}{start}
\DeclareMathOperator{\End}{end}
\DeclareMathOperator{\Edge}{edge}
\DeclareMathOperator{\Node}{node}
\DeclareMathOperator{\loutput}{Output}
\DeclareMathOperator{\happy}{happy}
\newcommand{\lup}{\mathsf {Up}}
\newcommand{\lleft}{\mathsf {Left}}
\newcommand{\lright}{\mathsf {Right}}
\newcommand{\ldown}{\mathsf {Down}}
\newcommand{\lzero}{\mathsf {Zero}}
\newcommand{\lflag}{\mathsf {Flag}}
\newcommand{\lgrid}{\mathsf {grid}}
\newcommand{\llcl}{\mathsf {LCL}}
\newcommand{\fakeparagraph}[2]{\par\noindent\textbf{#1}\hspace{1em}#2}
\newcommand{\myemail}[1]{\,$\cdot$\, {\small #1}}
\newcommand{\myaff}[1]{\,$\cdot$\, {\small #1}\par\smallskip}
\newenvironment{myabstract}
{\list{}{\listparindent 1.5em%
        \itemindent    \listparindent
        \leftmargin    0cm
        \rightmargin   0cm
        \parsep        0pt}%
    \item\relax}
{\endlist}
\newenvironment{mycover}
{\list{}{\listparindent 0pt
        \itemindent    \listparindent
        \leftmargin    0cm
        \rightmargin   0cm
        \parsep        0pt}%
    \raggedright
    \item\relax}
{\endlist}
\begin{document}

\begin{mycover}
{\huge\bfseries\boldmath Local Mending \par}
\bigskip
\bigskip
\bigskip

\textbf{Alkida Balliu}
\myemail{alkida.balliu@cs.uni-freiburg.de}
\myaff{University of Freiburg}

\textbf{Juho Hirvonen}
\myemail{juho.hirvonen@aalto.fi}
\myaff{Aalto University}

\textbf{Darya Melnyk}
\myemail{darya.melnyk@aalto.fi}
\myaff{Aalto University}

\textbf{Dennis Olivetti}
\myemail{dennis.olivetti@cs.uni-freiburg.de}
\myaff{University of Freiburg}

\textbf{Joel Rybicki}
\myemail{joel.rybicki@ist.ac.at}
\myaff{IST Austria}

\textbf{Jukka Suomela}
\myemail{jukka.suomela@aalto.fi}
\myaff{Aalto University}

\end{mycover}
\medskip

\begin{myabstract}
\fakeparagraph{Abstract.}
In this work we introduce the graph-theoretic notion of \emph{mendability}: for each locally checkable graph problem we can define its \emph{mending radius}, which captures the idea of how far one needs to modify a partial solution in order to ``patch a hole.''

We explore how mendability is connected to the existence of efficient algorithms, especially in distributed, parallel, and fault-tolerant settings. It is easy to see that $O(1)$-mendable problems are also solvable in $O(\log^* n)$ rounds in the LOCAL model of distributed computing. One of the surprises is that in paths and cycles, a converse also holds in the following sense: if a problem $\Pi$ can be solved in $O(\log^* n)$, there is always a restriction $\Pi' \subseteq \Pi$ that is still efficiently solvable but that is also $O(1)$-mendable.

We also explore the structure of the landscape of mendability. For example, we show that in trees, the mending radius of any locally checkable problem is $O(1)$, $\Theta(\log n)$, or $\Theta(n)$, while in general graphs the structure is much more diverse.
\end{myabstract}

\section{Introduction}\label{sec:intro}

Naor and Stockmeyer~\cite{Naor1995} initiated the study of the following question: given a problem that is locally \emph{checkable}, when is it also locally \emph{solvable}? In this paper, we explore the complementary question: given a problem that is locally checkable, when is it also locally \emph{mendable}?

\subsection{Warm-up: greedily completable problems}

There are many graph problems in which \emph{partial solutions can be completed greedily, in an arbitrary order}. In particular, several classic problems that have been extensively studied in the field of distributed computing fall into this category: the canonical example is vertex coloring with $\Delta+1$ colors in a graph with maximum degree $\Delta$. Any partial coloring can be always completed; the neighbors of a node can use at most $\Delta$ distinct colors, so a free color always exists for any uncolored node. This simple observation has far-reaching consequences:
\begin{itemize}
	
	\item Any such problem \emph{can be solved efficiently} not only in the centralized sequential setting, but also in distributed and parallel settings. For example, for maximum degree $\Delta = O(1)$ any such problem can be solved in $O(\log^* n)$ communication rounds in the usual LOCAL model \cite{Linial1992,Peleg2000} of distributed computing.
	\item Any such problem often admits simple \emph{fault-tolerant and dynamic algorithms}. For example, one can simply clear the labels in the immediate neighborhood of any point of change and then greedily complete the solution.
\end{itemize}
Classic symmetry-breaking problems such as maximal matching and maximal independent set also fall in this class of problems. However, there are problems that admit efficient distributed solutions even though they are not greedily completable.

In this work, we will introduce the notion of \emph{local mendability} that captures a much broader family of problems, and that has the same attractive features as greedily completable problems: it implies efficient centralized, distributed, and parallel solutions, as well as fault-tolerant and dynamic algorithms.

\subsection{Informal example: mending partial colorings in grids}\label{ssec:grid4col}

Let $G$ be a large two-dimensional grid graph; this is a graph with maximum degree $4$. As discussed above, $5$-colorings in such a graph can be found greedily; any partial solution can be completed. However, $4$-coloring is much more challenging. Consider, for example, the partial $4$-coloring in Figure~\ref{fig:4col}a: the unlabeled node in the middle does not have any free color left; this is not greedily completable. Also, the four neighbors of the node do not have any other choices.

However, one can make the empty region a bit larger and create a $2\times2$ hole in the grid, as shown in Figure~\ref{fig:4col}b. This way, we will always create a partial coloring that is completable---this is a simple corollary of more general results by Chechik and Mukhtar \cite{chechik2019optimal}.

To see this, notice that each node in the $2\times2$ part has got at least two possible choices left; hence we arrive at the task of list coloring of a $4$-cycle with lists of size at least $2$ (Figure~\ref{fig:4col}c). It is known that any even cycle is $2$-choosable~\cite{erdos1980choosability}, i.e., list-colorable with lists of size $2$. We can therefore find a feasible solution, e.g., the one shown in Figure~\ref{fig:4col}d.

Note that here to complete the partial coloring, we had to change the label of a node at distance $2$ from the hole, but we never needed to modify the given partial labeling any further than that. Therefore we say that \textbf{\boldmath $4$-coloring in grids is $2$-mendable}; informally, \emph{we can mend any hole in the solution with a ``patch'' of radius at most $2$}. We can contrast this with the $5$-coloring problem, which is $0$-mendable (i.e., greedily completable without touching any of the previously assigned labels), and the $3$-coloring problem, which turns out to be not $T$-mendable for any constant $T$ (i.e., there are partial $3$-colorings that no constant-radius patch will mend).

\medskip
\begin{framed}
\fakeparagraph{Something swept under the carpet for now.} The above example is informal, and when we later formalize the notion of mendability so that it is well-defined for \emph{any} locally checkable problem, not just the particularly convenient problem of graph coloring, it will also change the precise mending radius of graph coloring by an additive $+1$. Both the above informal view and the formalism are fine for understanding our results; we are, after all, primarily concerned about the asymptotics. We return to this aspect later in Section~\ref{sec:mendability}.
\end{framed}

\begin{figure}[t]
        \centering
        \includegraphics[page=1,scale=0.9]{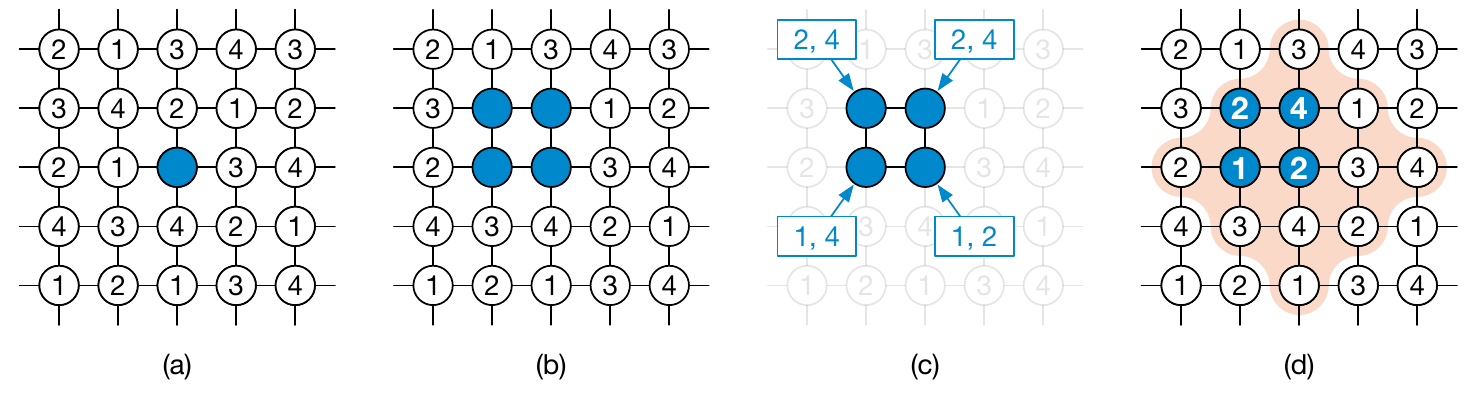}
        \caption{Mending of $4$-colorings in a two-dimensional grid: (a)~A partial solution given as input, with one hole (blue). (b)~The hole enlarged. (c)~This leads to a list coloring instance with lists of size $2$ in a $4$-cycle. (d)~As $4$-cycles are $2$-choosable, we can always complete the coloring in the hole. The orange shading indicates the radius-$2$ neighborhood of the hole, and by following this procedure, it is sufficient to modify the partial input in this region to patch the hole; therefore we say that $4$-coloring in grids is $2$-mendable.}\label{fig:4col}
\end{figure}

\subsection{Consequences of local mendability}\label{ssec:consequences}

Problems that are locally mendable have several attractive properties. In particular, as the notion of mendability does not depend on any specific computational model, it naturally lends itself to algorithm design paradigms in centralized, parallel, and distributed systems. We will now discuss a few motivating examples.

Let us consider any graph problem $\Pi$ that is $T$-mendable for a constant $T$; we formally define this notion in Section~\ref{sec:mendability}, but for now, the informal idea that holes can be mended with radius-$T$ patches will suffice. To simplify the discussion, we will work in bounded-degree graphs, i.e., $\Delta = O(1)$, and assume that we have a discrete graph problem in which nodes are labeled with labels from some finite set of labels; $4$-coloring grids is a simple example of a problem that satisfies all these properties, with $T = 2$.

We can now make use of \emph{mendability as an algorithm design paradigm}. First, we obtain a very simple centralized, sequential linear-time algorithm for solving $\Pi$: process the nodes in an arbitrary order and whenever we encounter a node that has not been labeled yet, mend the hole with a constant-size patch. Mendability guarantees that such a patch always exists: as there are only constantly many candidates, one can find a suitable patch e.g.\ by checking each candidate.

A key observation is that we can \emph{mend two holes simultaneously in parallel} as long as the distance between them is at least $2T+1$; this ensures that the patches do not overlap. This immediately suggests efficient parallel and distributed algorithms: find a set of holes that are sufficiently far from each other, and patch all of them simultaneously in parallel. Concretely, one can first find a distance-${(2T+1)}$ coloring of the graph, and then proceed by color classes, mending all holes in one color class simultaneously in parallel. For example, in the usual LOCAL model~\cite{Linial1992,Peleg2000} of distributed computing, this leads to an $O(\log^* n)$-time algorithm for solving~$\Pi$.

Local mendability also directly leads to very efficient \emph{dynamic graph algorithms} that can maintain a feasible solution in constant time per update (at least for so-called LCL problems, see Section~\ref{sec:definitions}): simply clear the solution around the point of change and patch the hole locally.
This approach holds both in the classic setting of centralized dynamic graph algorithms~\cite{demetrescu2010dynamic,Henzinger18}, but also in so-called distributed input-dynamic settings~\cite{foerster2021input}, where the communication topology remains static but the local inputs may dynamically change.

Furthermore, local mendability can be used to design algorithms that can \emph{recover from failures}. For example, it can be used as a tool for designing \emph{self-stabilizing algorithms}~\cite{Dolev2000,Dijkstra1974self-stabilizing}: in brief, nodes can repeatedly check if the solution is locally correct, switch to an empty label if there are local errors, and then patch any holes in a parallel manner as sketched above.

\subsection{Case study: Mending as an (automatic) algorithm design tool}

So far, we have seen that mendability can be used as a tool to design efficient algorithms; but is this a powerful tool that makes algorithm design \emph{easier}? The problem of $4$-coloring grids provides a rather convincing example: directly designing an efficient $4$-coloring algorithm e.g.\ for the LOCAL model of distributed computing is challenging; none of the prior algorithms~\cite{Brandt2017,panconesi95delta,holroyd2017finitary} are easy to discover or explain, but the above algorithm that makes use of the concept of mending is short and~simple.

In Appendix~\ref{app:134orientation}, we give another example: \emph{$\{1,3,4\}$-orientation}. In this problem, the task is to orient the edges in a two-dimensional grid such that each (internal) node has indegree $1$, $3$, or $4$. This is a problem that served as an example of computational algorithm synthesis in prior work~\cite{Brandt2017}. It is a nontrivial problem for algorithm designers. In fact, it is the most constrained orientation problem that is still $O(\log^* n)$-round solvable---for any $X \subsetneq \{1,3,4\}$ the analogous $X$-orientation problem is \emph{not} solvable in $O(\log^* n)$ rounds~\cite{Brandt2017}.

The prior $O(\log^* n)$-round algorithm for $\{1,3,4\}$-orientation was discovered with computers~\cite{Brandt2017}; this algorithm is in essence a large lookup table, and as such not particularly insightful from the perspective of human beings. It moreover made additional assumptions on the input (specifically, oriented grids).

With the help of mending we can design a much simpler algorithm for the same problem. We show that $\{1,3,4\}$-orientations are locally mendable, and we can patch them as follows:
\begin{itemize}[noitemsep]
	\item Let $G$ be a two-dimensional grid, and let $H$ be the subgraph induced by $3 \times 3$ nodes in $G$.
	\item Regardless of how the edges of $G$ outside $H$ are oriented, we can always orient the edges of $H$ such that all nodes of $H$ have indegree $1$, $3$, or $4$.
\end{itemize}
Note that we can mend any partial orientation by simply patching the orientations in a $3 \times 3$ square around each hole. This way, we arrive at a very simple $O(\log^* n)$-round algorithm for the orientation problem. To show that this works, it is sufficient to show the \emph{existence} of a good orientation of $H$. And since there are only $12$ edges in $H$, it is quick to \emph{find} a suitable orientation for any given scenario.

We give the proof of the mendability of the $\{1,3,4\}$-orientation problem in Appendix~\ref{app:134orientation}. While we give a human-readable proof, we also point out that one can quickly verify that $\{1,3,4\}$-orientations are mendable using a simple computer program that enumerates all possible scenarios (there are only 1296 cases to check). This demonstrates the power of mending also as an \emph{algorithm synthesis} tool: for some problems we can determine automatically with computers that they are locally mendable, and hence we immediately also arrive at simple and efficient algorithms for solving them.

\section{Contributions and key ideas}\label{sec:contrib}

\paragraph{Concept.}
Our first main contribution is the new notions of mendability and the mending radius.
While ideas similar to mendability have been used in numerous papers before this work (see Section~\ref{sec:related}), we are not aware of prior work that has given a definition of mendability that is as broadly applicable as ours:
\begin{itemize}[noitemsep]
	\item Our definition is \emph{purely graph-theoretic}, without any references to any model of computation.
	\item Our definition is applicable to the study of \emph{any locally verifiable graph problem}; this includes all LCL problems (locally checkable labelings), as defined by Naor and Stockmeyer~\cite{Naor1995}.
\end{itemize}
We emphasize that with our definition it makes sense to ask questions like, for example, what is the mendability of the maximal total dominating set problem or the locally optimal cut problem. Moreover, we aim at giving a simple definition that is as robust and universal as possible. As it does not refer to any model of computation, it is independent of the numerous modeling details that one commonly encounters in distributed computing (deterministic or randomized; private or public randomness; whether the nodes know the size of the network; if there are unique identifiers and what their range is; whether the message sizes are limited; synchronous or asynchronous; fault-free or fault-tolerant). The formal definition of mendability is presented and discussed in Section~\ref{sec:mendability}.

\paragraph{From solvability to mendability.}
As our second main contribution, we explore whether efficient mendability is \emph{necessary} for efficient solvability.
As we have already discussed in Section~\ref{ssec:consequences}, it is well-known and easy to see that \emph{local mendability implies efficient solvability}, not only in the centralized setting but also in parallel and distributed settings. One concrete example is the following result (here LCLs are a broad family of locally verifiable graph problems and LOCAL is the standard model of distributed computing; see Section~\ref{sec:definitions} for the details): Let us consider an LCL problem $\Pi$. Then, if $\Pi$ is $O(1)$-mendable, it is also solvable in $O(\log^* n)$ communication rounds in the LOCAL model. The proof of this result is presented in Section~\ref{sec:mendability_to_solvability}.

However, we are primarily interested in exploring the converse: given e.g.\ an LCL problem $\Pi$ that is solvable in $O(\log^* n)$ rounds in the LOCAL model, what can we say about the mendability of $\Pi$? Is local mendability not only sufficient but also necessary for efficient solvability?

It turns out that the answer to this is not straightforward. We first consider the simplest possible case of \emph{unlabeled paths and cycles}. This is a setting in which local solvability is nowadays very well understood \cite{chang21automata-theoretic,Brandt2017,Naor1995}. In the following, we briefly outline the connections between local solvability and mendability that we will show in this work:
\begin{enumerate}
	\item There are LCL problems that are $O(\log^* n)$-round solvable but $\Omega(n)$-mendable in paths and cycles.
	\item However, for every LCL problem $\Pi$ that is $O(\log^* n)$-round solvable in paths and cycles, there exists a \emph{restriction} $\Pi' \subseteq \Pi$ such that $\Pi'$ is $O(\log^* n)$-round solvable and also $O(1)$-mendable. Furthermore, given the description of $\Pi$, there is an efficient algorithm that constructs $\Pi'$.
\end{enumerate}
The second point states that we can always turn any locally solvable LCL problem into a locally mendable one, without making it any harder to solve. In this sense, in paths and cycles, local mendability is \emph{equivalent} to efficient solvability. These results are discussed in Section~\ref{sec:solvability_to_mendability}.

Let us next consider a more general case. The first natural step beyond paths and cycles is rooted trees; for problems similar to graph coloring (more precisely, for so-called edge-checkable problems) their local solvability in rooted trees is asymptotically equal to their local solvability in directed paths \cite{chang21automata-theoretic}, and hence one might expect to also get a very similar picture for the relation between local mendability and local solvability. However, the situation is much more~diverse:
\begin{enumerate}
	\item We start with bad news: the above idea of restrictions does not hold in rooted trees. As a concrete example, let $\Pi$ be the problem of $3$-coloring binary trees. This problem is $O(\log^* n)$-round solvable but not $O(1)$-mendable. We show that, for many natural encodings of $\Pi$, no restriction $\Pi' \subseteq \Pi$ is simultaneously $O(\log^* n)$-round solvable and~$O(1)$-mendable. 
	\item However, one can always augment the problem with auxiliary labels to construct a new problem $\Pi^*$ with the following properties: if $\Pi$ is $O(\log^* n)$-round solvable then $\Pi^*$ is also $O(\log^* n)$-round solvable; furthermore, any solution of $\Pi^*$ can be projected to a feasible solution of $\Pi$, and $\Pi^*$ is $O(1)$-mendable. This works not only in rooted trees but also in general graphs.
	\item In general, the mechanical construction of $\Pi^*$ is rather heavyweight and leads to unnatural problems. However, this idea can be used to derive natural problems that are both easy to mend and easy to solve. We use again $3$-coloring in binary trees as an example: we show how to construct a natural problem that is equivalent to $3$-coloring, that is $O(\log^* n)$-round solvable, and that is also $O(1)$-mendable.
\end{enumerate}
Mendability on rooted trees is discussed in Section~\ref{ssec:mendability_rooted_trees}. The problem of $3$-coloring on binary trees highlights that it is often possible to modify our classical problems such that they remain easily solvable and simultaneously become efficiently mendable. This way, one can turn existing efficient distributed algorithms into fault-tolerant algorithms with many desirable properties (for example, even if the initial configuration is adversarial, recomputation is not needed in regions that are locally feasible).

The key open question in this line of research is the following:
\begin{openquestion}
	Can we develop \emph{efficient, general-purpose techniques} that can turn any locally solvable problem into a concise and natural locally mendable problem?
\end{openquestion}

\paragraph{Landscape of mendability.}
Our third main contribution is that we initiate the study of the \emph{landscape} of mendability, in the same spirit as what has been done recently for the local solvability of LCL problems \cite{Balliu2018stoc,Balliu2018disc,Chang2019,Ghaffari2018,balliu20lcl-randomness,fischer17sublogarithmic,Rozhon2019,Brandt2016,chang16exponential,ghaffari17distributed,Naor1995,balliu20binary-labeling,chang21automata-theoretic,balliu19weak-col,chang2020trees,Brandt2017,balliu19lcl-decidability,suomela2020landscape}. The question that we ask here is simple: what are possible functions $T$ such that there exists an LCL problem $\Pi$ that is $T(n)$-mendable in graphs with $n$ nodes?

In particular, if we can \emph{exclude} the possibility of a broad range of possible functions $T$, this will greatly help us with the analysis of mendability for any given problem. In this work, we show the following results:
\begin{enumerate}
	\item In cycles and paths, there are only two possible classes: $O(1)$-mendable problems and $\Theta(n)$-mendable problems.
	\item In trees, there are exactly three classes: $O(1)$-mendable, $\Theta(\log n)$-mendable, and $\Theta(n)$-mendable problems.
	\item In general bounded-degree graphs, there are additional classes; we show that e.g.\ $\Theta(\sqrt{n})$-mendable problems exist.
\end{enumerate}
These results are presented in Section~\ref{sec:landscape}. The key open question in this line of research is to complete the characterization of mendability in general graphs:
\begin{openquestion}
	Are there LCL problems that are $\Theta(n^\alpha)$-mendable for all rational numbers $0 < \alpha < 1$? For all algebraic numbers $0 < \alpha < 1$? 
\end{openquestion}
\begin{openquestion}
	Are there LCL problems that are $T(n)$-mendable for some $T(n)$ that is between $\log^{\omega(1)} n$ and $n^{o(1)}$?
\end{openquestion}

\paragraph{An application.}

We highlight one example of nontrivial corollaries of our work: in trees, any $o(n)$-mendable problem can be solved in $O(\log n)$ rounds in the LOCAL model. To see this, we can put together the following results:
\begin{enumerate}
	\item Our work on the landscape of mendability shows that $o(n)$-mendability implies $O(\log n)$-mendability in trees.
	\item As we will see, $O(\log n)$-mendable problems can be solved with the help of network decomposition algorithms in $\polylog n$ rounds.
	\item Finally, by applying the known results on the landscape of distributed computational complexity in trees \cite{Chang2019}, one can speed up algorithms that solve LCL problems from $\polylog n$ rounds to $O(\log n)$ rounds.
\end{enumerate}
The formal proof of this result is given in Section~\ref{sec:corollaries}.

\section{Related work}\label{sec:related}

The underlying idea of local mending is not new. Indeed, similar ideas have appeared in the literature over at least three decades, often using terms such as \emph{fixing}, \emph{correcting}, or, similar to our work, \emph{mending}. However, none of the prior work that we are aware of captures the ideas of mendability and mending radius that we introduce and discuss in this work.

Despite some dissimilarities, our work is heavily inspired by the extensive prior work. The graph coloring algorithm by Chechik and Mukhtar~\cite{chechik2019optimal} serves as a convincing demonstration of the power of the mendability as an algorithm design technique. Discussion on the maximal independent set problem in Kutten and Peleg~\cite{Kutten2000FaultLocality} and König and Wattenhofer~\cite{kw13localfix} highlights the challenges of choosing appropriate definitions of ``partial solutions'' for locally checkable problems.

\paragraph{Mending as an ad-hoc tool.}
The idea of mendability has often been used as a tool for designing algorithms for various graph problems of interest.
However, so far the idea has not been introduced as a general concept that one can define for any locally checkable problem, but rather as specific algorithmic tricks that can be used to solve the specific problems at hand.

For example, the key observation of Panconesi and Srinivasan~\cite{panconesi95delta} can be phrased such that $\Delta$-coloring is $O(\log n)$-mendable. In their work, they show how this result leads to an efficient distributed algorithm for $\Delta$-coloring. Aboulkeret al.~\cite{Aboulker2019} and Chierichetti and Vattani~\cite{Chierichetti2010local} study list coloring from a similar perspective. Barenboim~\cite{Barenboim2016Coloring} uses the idea of local correction in the context of designing fast distributed algorithms for the $(\Delta+1)$-coloring problem.

Harris et al.~\cite{harris2020locality} show that an edge-orientation with maximum out-degree $a^*(1+\varepsilon)$ can be solved using ``local patching'' or, in other words, mending; $a^*$ here denotes the pseudo-arboricity of the considered graph. The authors first show that this problem is $O(\log n / \varepsilon)$-mendable. They then use network decomposition of the graph and the mending procedure to derive an $O(\log^3 n / \varepsilon)$ round algorithm for $a^*(1+\varepsilon)$-out-degree orientation.

Chechik and Mukhtar~\cite{chechik2019optimal} present the idea of ``removable cycles'', and show how it leads to an efficient algorithm for $4$-coloring triangle-free planar graphs (and $6$-coloring planar graphs). When one applies their idea to $2$-dimensional grids, one arrives at the observation that $4$-coloring in grids is $2$-mendable, as we saw in the warm-up example of Section~\ref{ssec:grid4col}.

Chang et al.~\cite{Chang2018edgeColoring} study similar ideas from the complementary perspective: they show that edge colorings with few colors are \emph{not} easily mendable, and hence mending cannot be used (at least not directly) as a technique for designing efficient algorithms for these problems.

\paragraph{Mending with advice.}
The work by Kutten and Peleg~\cite{Kutten1999Mending,Kutten2000FaultLocality} is very close in spirit to our work. They study the \emph{fault-locality} of mending: the idea is that the cost of mending only depends on the number of failures, and is independent of the number of nodes. However, one key difference is that they assume that the solution of a graph problem is augmented with some auxiliary precomputed data structure that can be used to assist in the process of mending. They also show that this is highly beneficial: \emph{any} problem can be mended with the help of auxiliary precomputed data structures. We note that the addition of auxiliary labels is similar in spirit to the brute-force construction discussed in Section \ref{ssec:mendability_rooted_trees} that turns any locally solvable problem into an equivalent locally mendable one.

Censor-Hillel et al.~\cite{censorhillel2020fast} also comes close to our work with their definition of locally-fixable labelings (LFLs). In essence, LFLs are specific LCLs that can be (by construction) mended fast. In our work we seek to understand which LCLs can be mended fast, and also look beyond the case of constant-radius mendability.

Mending with the help of advice is also closely connected to distributed verification with the help of \emph{proofs}. The idea is that a problem is not locally verifiable as such, but it can be made locally verifiable if we augment the solution with a small number of proof bits; see e.g.\ Göös and Suomela~\cite{goos11lcp}, Korman and Kutten~\cite{korman06distributed,korman07distributed}, Korman et al.~\cite{korman10proof,korman10constructing}.

\paragraph{Mending small errors.}
Another key difference between our work and that of e.g.\ Kutten and Peleg~\cite{Kutten1999Mending,Kutten2000FaultLocality} is that we want to be able to mend holes in \emph{any} partial solution, including one in which there is a linear number of holes. In particular, mending as defined in this work is something one can use to efficiently compute a correct solution from scratch.

In the work by Kutten and Peleg~\cite{Kutten1999Mending,Kutten2000FaultLocality}, the cost of fixing can be very high if the number of faults is e.g.\ linear. The work by König and Wattenhofer~\cite{kw13localfix} is similar in spirit: they assume that holes are small and well-separated in space or time.

\paragraph{Making algorithms fault-tolerant.}
There is an extensive line of research of systematically turning existing efficient distributed algorithms into  dynamic or fault-tolerant algorithms; examples include Afek and Dolev~\cite{Afek1997LocalStabilizer}, Awerbuch et al.~\cite{Awerbuch1991Self-stabilization}, Awerbuch and Sipser~\cite{Awerbuch1988dynamic}, Awerbuch and Varghese~\cite{Awerbuch1991DistributedProgramChecking}, Ghosh et al.~\cite{ghosh2007fault}, and Lenzen et al.~\cite{Lenzen2009local}. This line of research is distinct from our work, where the starting point is a property of the problem (mendability); then both efficient distributed algorithms and fault-tolerant algorithms follow.

\section{Defining mendability}\label{sec:mendability}

\paragraph{Starting point: edge-checkable problems.}
In Section~\ref{sec:intro}, we chose vertex coloring as an introductory example. Vertex coloring is an \emph{edge-checkable} problem, i.e., the problem can be defined in terms of a relation that specifies what label combinations at the endpoints of each edge are allowed. For such a problem the notion of \emph{partial solutions} is easy to define.

More precisely, let $G = (V,E)$ be a graph, and let $\Pi$ be the problem of finding a $k$-vertex coloring in $G$; a feasible solution is a labeling $\lambda\colon V \to \{1,2,\dotsc,k\}$ such that for each edge $\{u,v\} \in E$, we have $\lambda(u) \ne \lambda(v)$. We can then easily \emph{relax} $\Pi$ to the problem $\Pi^*$ of finding a \emph{partial} $k$-vertex coloring as follows: a feasible solution is a function $\lambda\colon V \to \{\bot,1,2,\dotsc,k\}$ such that for each edge $\{u,v\} \in E$, we have $\lambda(u) = \bot$ or $\lambda(v) = \bot$ or $\lambda(u) \ne \lambda(v)$. Here we used $\bot$ to denote a missing label. The same idea could be generalized to \emph{any} problem $\Pi$ that we can define using some edge constraint $C$:
\begin{itemize}[noitemsep]
    \item $\Pi$: function $\lambda\colon V \to \Gamma$ such that each edge $\{u,v\} \in E$ satisfies $\bigl(\lambda(u), \lambda(v)\bigr) \in C$.
    \item $\Pi^*$: function $\lambda\colon V \to \{\bot\} \cup \Gamma$ such that each edge $\{u,v\} \in E$ with $\bot \notin \{\lambda(u), \lambda(v)\}$ satisfies $\bigl(\lambda(u), \lambda(v)\bigr) \in C$.
\end{itemize}
Note that our original definition $\Pi$ was edge-checkable, and we arrived at a definition $\Pi^*$ of partial solutions for $\Pi$ such that $\Pi^*$ is still edge-checkable. This is important---the problem definition itself remained as local as the original problem.

However, most of the problems that we encounter e.g.\ in the theory of distributed computing are \emph{not} edge-checkable; we will next discuss how to deal with any locally verifiable problem.

\paragraph{Locally verifiable problems.}
In general, a \emph{locally verifiable problem} $\Pi$ is defined in terms of a set of input labels $\Sigma$, a set of output labels $\Gamma$, and a \emph{local verifier} $\psi$. In $\Pi$ we are given a graph $G = (V,E,\sigma)$ with a vertex set $V$, an edge set $E$, and some input labeling $\sigma\colon V \to \Sigma$, and the task is to find an output labeling or \emph{solution} $\lambda\colon V \to \Gamma$ that makes the verifier $\psi$ \emph{happy} at each node $v \in V$. In general a verifier $\psi$ is a function that maps a $(G,\lambda,v)$ to ``happy'' or ``unhappy'', and we say that $\psi$ \emph{accepts} $(G,\lambda)$ if $\psi(G,\lambda,v) = \happy$ for all $v \in V$.

Finally, verifier $\psi$ is a \emph{local} verifier with verification radius $r \in \mathbb{N}$, if $\psi(G,\lambda,v)$ only depends on the input and output within radius-$v$ neighborhood of $v$. That is, $\psi(G,\lambda,v) = \psi(G',\lambda',v')$ if the radius-$r$ neighborhood of $v$ in $G$ (together with the input and output labels) is isomorphic to the radius-$r$ neighborhood of $v'$ in $G'$. Note that we can also generalize the definitions in a straightforward manner from node labelings to edge labelings.

Now edge-checkable problems are clearly locally verifiable problems, with verification radius $r = 1$. But there are also numerous problems that are locally verifiable yet not edge-checkable; examples include maximal independent sets (note that independence is edge-checkable while maximality is not) and more generally ruling sets, minimal dominating sets, weak coloring (at least one neighbor has to have a different label), distance-$k$ coloring (all other nodes within distance $k$ must have different labels), and many other constraint satisfaction problems.

\paragraph{Partial solutions of locally verifiable problems.}

To capture the mendability of a locally verifiable problem, we first need to have an appropriate notion of partial solutions. Ideally, we would like to be able to handle \emph{any} locally verifiable problem $\Pi$ and define a relaxation $\Pi^*$ of $\Pi$ with all the desirable properties as what we had in the graph coloring example:
\begin{enumerate}[noitemsep,label={(P\arabic*)}]
    \item\label{prop1} Problem $\Pi^*$ captures the intuitive idea of partial solutions for $\Pi$, and it serves the purpose of forming the foundation for the notion of mendability and mending radius of $\Pi$.
    \item\label{prop2} An empty solution (all nodes labeled with $\bot$) is a feasible solution for $\Pi^*$.
    \item\label{prop3} Problem $\Pi^*$ is a relaxation of $\Pi$: any feasible solution of $\Pi$ is a feasible solution for $\Pi^*$.
    \item\label{prop4} A feasible solution for $\Pi^*$ without any empty labels is also a feasible solution for $\Pi$.
    \item\label{prop5} The definition of $\Pi^*$ is exactly as local as the definition of $\Pi$: if $\Pi$ is defined in terms of labelings in the radius-$r$ neighborhoods, so is $\Pi^*$.
\end{enumerate}

It turns out that there is a definition with all of these properties, and it is surprisingly simple to state. Let $\Pi$ be a locally verifiable problem with the set $\Gamma$ of output labels and local verifier $\psi$ with verification radius $r$. By definition, $\psi(G,\lambda,v)$ only depends on the radius-$r$ neighborhood of $v$.

We can define a new verifier $\psi^*$ that extends the domain of $\psi$ in a natural manner so that $\psi^*(G,\lambda,v)$ is well-defined also for partial labelings $\lambda \colon V \to \Gamma^*$, where $\Gamma^* = \{\bot\} \cup \Gamma$, as follows:
\begin{framed}
\noindent If there is a node $u$ within distance $r$ from $v$ with $\lambda(u) = \bot$, let $\psi^*(G,\lambda,v) = \happy$.
Otherwise let $\lambda' \colon V \to \Gamma$ be any function that agrees with $\lambda$ in the radius-$r$ neighborhood of $v$, and let $\psi^*(G,\lambda,v) = \psi(G,\lambda',v)$.
\end{framed}
\noindent Note that such a $\lambda'$ always exists, and $\psi^*$ is independent of the choice of $\lambda'$, so $\psi^*$ is well-defined. Furthermore, if $\lambda\colon V \to \Gamma$ is a complete labeling, then $\psi^*$ and $\psi$ agree everywhere, and an empty labeling makes $\psi^*$ happy everywhere. Finally, $\psi^*(G,\lambda,v)$ only depends on the radius-$r$ neighborhood of $v$. If we now define problem $\Pi^*$ using the local verifier $\psi^*$, it clearly satisfies properties \ref{prop2}--\ref{prop5}. Let us now see how to use it to formalize the notion of mendability, and this way also establish \ref{prop1}.

\paragraph{Mendability of local verifiers.}

We first define mendability with respect to a specific verifier $\psi$; as before, $\psi^*$ is the relaxation that also accepts partial labelings. Our definition is minimalistic: we only require that we can make \emph{one unit of progress} by turning any given empty label into a non-empty label---this will be both sufficient and convenient. The key definition is this:
\begin{framed}
\noindent
Let $\lambda \colon V \to \Gamma^*$ be a partial labeling of $G$ such that $\psi^*$ accepts $\lambda$. We say that $\mu \colon V \to \Gamma^*$ is a \emph{$t$-mend of $\lambda$ at node $v$} if:
\medskip
\begin{enumerate}[nolistsep]
    \item $\psi^*$ accepts $\mu$,
    \item $\mu(v) \neq \bot$,
    \item $\mu(u) = \bot$ implies $\lambda(u) = \bot$,
    \item $\mu(u) \neq \lambda(u)$ implies that $u$ is within distance $t$ of $v$.
\end{enumerate}
\end{framed}
That is, in $\mu$ we have applied a radius-$t$ patch around node $v$. If there are some other empties around $v$, they can be left empty (note that this will never make mending harder, as empty nodes only help to make $\psi^*$ happy, and this will not make mending substantially easier, either, as we will need to be able to eventually also patch all the other holes).

Fix a graph family $\mathcal{G}$ of interest. We now define the mending radius of $\psi$ for the entire graph family:
\begin{definition}
Let $T \colon \mathbb{N} \to \mathbb{N}$ be a function. We say that a verifier $\psi$ is $T$-mendable if for all graphs $G \in \mathcal{G}$ and all partial labelings $\lambda$ accepted by $\psi^*$, there exists a $T(|V|)$-mend of $\lambda$ at $v$ for any $v \in V$.
\end{definition}
Note that in general, the mending radius may depend on the number of nodes in the graph; it is meaningful to say that, for example, $\psi$ is $\Theta(\log n)$-mendable, i.e., the mending radius of $\psi$ is $\Theta(\log n)$. We will use throughout this work $n = |V|$ to denote the number of nodes in the input graph.

\paragraph{Mendability of locally verifiable problems.}

So far we have defined the mendability of a particular verifier $\psi$. In general, the same graph problem $\Pi$ may be defined in terms of many equivalent verifiers $\psi$ (for example, vertex coloring can be locally verified with a local verifier $\psi$ that checks the coloring in the radius-$7$ neighborhoods, even if this is not necessary). Indeed, there are problems for which it is not easy to define a ``canonical'' verifier, and it is not necessarily the case that the smallest possible verification radius coincides with the smallest possible mending radius. Hence we generalize the idea of mendability from local verifiers to locally verifiable problems in a straightforward manner:

\begin{definition}
A problem $\Pi$ is \emph{$T$-mendable} if for some $r \in \mathbb{N}$ there exists a $T$-mendable radius-$r$ verifier for the problem $\Pi$. The \emph{mending radius} of a problem $\Pi$ is $T$ if $\Pi$ is $T$-mendable but not $T'$-mendable for any $T' \neq T$ such that $T'(n) \le T(n)$. 
\end{definition}

\paragraph{Discussion and examples.}

Now we have formally defined the mendability of any locally verifiable problem, in a way that makes it applicable to any locally checkable problem. The definition is a natural generalization of the informal idea of mendability for vertex coloring discussed in the introduction. However, it is good to stop here and carefully look at what all of this means in the context of some concrete problem.

Let $\Pi$ be the problem of coloring paths and cycles with $3$ colors. This is a locally verifiable problem, and we can define a local verifier $\psi$ with verification radius $r = 1$ so that $\psi(G,\lambda,v) = \happy$ if all neighbors of $v$ have colors different from $v$. If we have a path with e.g.\ $n = 4$ nodes, $\psi$ will accept labelings that encode a proper coloring (e.g., $1,2,1,3$) and reject improper colorings (e.g., $2,1,1,3$).

However, the relaxed verifier $\psi^*$ is more interesting: it will accept not only labelings like $\bot,\bot,\bot,\bot$ and $\bot,2,1,\bot$ but also labelings like $\bot,1,1,\bot$ (the empty labels next to the nodes labeled with $1$s make the verifier happy). It is easy to see that $\psi$ is $1$-mendable. For example, let
\[
    \lambda_0 = \bot, 1, 1, \bot, \quad
    \lambda_1 = 2, 3, 1, \bot, \quad
    \lambda_2 = 2, 3, 1, 2.
\]
Here $\lambda_1$ is a $1$-mend of $\lambda_0$ at the first node, and $\lambda_2$ is a $0$-mend of $\lambda_0$ at the last node. In general, to fill an empty slot we may need to touch other labels within distance $1$, but not further.

At first, it may seem like a bad idea that e.g.\ the mendability radius of $3$-coloring in paths is $1$ and not $0$, and one may want to start to look for alternative definitions that would impose more stringent constraints for partial solutions. But it turns out that this is not necessary as long as one is only interested in mendability in the asymptotic sense---in essence, all reasonable alternative definitions are equivalent modulo additive constants in the mending radius!

To see this, consider a partial solution $\lambda$ that makes a radius-$r$ verifier $\psi^*$ happy in our sense. Consider now an alternative radius-$r$ verifier $\psi^\dagger$ that is less forgiving near the holes (but is still based on some local rule, and accepts a completely empty solution). Now we can simply ``expand each hole'' in $\lambda$ by $r$ steps and arrive at a solution that will also make $\psi^\dagger$ happy. A bit more precisely, a black-box mending rule $M$ that is applicable w.r.t.\ $\psi^\dagger$ is also applicable w.r.t.\ $\psi^*$: to patch a hole, expand it first, and repeatedly apply $M$ within the expanded hole to fill it in. This will only increase the mending radius by an additive $+r$.

In particular, whatever good news related to the applicability of mending in the design of distributed, parallel, fault-tolerant, and dynamic algorithms we had with the alternative definition $\psi^\dagger$, we have got the same good news also for our minimalistic definition. This is a key reason why we argue that our specific definition is universal: it is applicable to any problem in a very straightforward manner, yet it brings (asymptotically) all the same benefits as any other definition that might be more customized to a specific problem family.

Now we have finally formalized the concept of mendability. We will next briefly recall the definitions of LCL problems and the LOCAL model of computing, and we can then move on to proving the results that we already discussed in Section~\ref{sec:contrib}.

\section{Additional definitions}\label{sec:definitions}

\paragraph{LCL problems.}

We have defined mendability for any locally verifiable problems, but a particularly important special case of locally verifiable problems is the family of \emph{locally checkable problems} (LCL problems), as defined by Naor and Stockmeyer~\cite{Naor1995}. We say that a locally verifiable problem $\Pi$ on a graph family $\mathcal{G}$ is an LCL problem if we have that
\begin{enumerate}[noitemsep]
	\item the set $\Sigma$ of input labels and the set $\Gamma$ of output labels are both finite,
	\item $\mathcal{G}$ is a family of bounded-degree graphs, i.e., there is some constant $\Delta$ such that for any node $v$ in any graph $G \in \mathcal{G}$ the degree of $v$ is at most $\Delta$.
\end{enumerate}
Note that an LCL problem always has a \emph{finite description}: we can simply list all possible non-isomorphic labeled radius-$r$ neighborhoods and classify them based on whether they make the local verifier $\psi$ happy.

\paragraph{LOCAL model.}

Mendability is independent of any model of computing. However, the key applications for the concept are in the context of distributed graph algorithms. For concreteness, we use the LOCAL model~\cite{Linial1992,Peleg2000} of distributed computing throughout this work. In this model, the distributed system is represented as a graph $G=(V,E)$, where each node $v \in V$ denotes a processor and every edge $\{u,v\} \in E$ corresponds to a direct communication link between nodes $u$ and $v$. At the start of the computation, every node receives some local input.

The computation proceeds synchronously in discrete communication rounds and each round consists of three steps: (1) all nodes send messages to their neighbors, (2) all nodes receive messages from their neighbors, and (3) all nodes update their local state. In the last round, all nodes declare their local output. An algorithm has running time $T$ if all nodes can declare their local output after $T$ communication rounds. The bandwidth of the communication links is unbounded, i.e., in each round nodes can exchange messages of any size.
We say that a problem is \emph{$T$-solvable} if it can be solved in the LOCAL model in $T(n)$ communication rounds.

\section{From mendability to solvability}\label{sec:mendability_to_solvability}

In this section, we show that, in some cases, a bound on the mending radius implies an upper bound on the time complexity of a problem in the LOCAL model. Hence, the concept of mendability can be helpful in the process of designing algorithms in the distributed setting.

We start by proving a generic result, that relates mendability with network decomposition; we will make use of the following auxiliary lemma.

\begin{lemma}\label{lem:normalizedmending}
	Let $\Pi$ be an LCL problem with mending radius $f(n)$ and checkability radius $r=O(1)$. Then we can create a mending procedure that only depends on the nodes at distance $f(n)+r$ from the node $u$ that needs to be mended, and it does not even need to know $n$.
\end{lemma}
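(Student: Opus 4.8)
The statement asks us to convert a mending procedure whose behavior might, a priori, depend on $n$ (since the mending radius $f(n)$ is a function of $n$, and more subtly the *choice* of patch might depend on $n$) into one that is "normalized": it inspects only the radius-$(f(n)+r)$ ball around the hole $u$ and makes its decision purely from that local view, without any knowledge of $n$. The plan is to exploit the fact that $\Pi$ is an LCL, so there are only finitely many isomorphism types of labeled radius-$(f(n)+r)$ neighborhoods *for each fixed value of $f(n)$*, and to argue that a single local rule can be extracted that works uniformly.

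**Key steps.**

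First I would recall what "mending radius $f(n)$" gives us: for every $G$ of order $n$, every partial labeling $\lambda$ accepted by $\psi^*$, and every node $u$, there exists an $f(n)$-mend of $\lambda$ at $u$. The only labels that change lie within distance $f(n)$ of $u$, and whether $\psi^*$ is happy at any node affected by the change depends only on radius-$r$ balls, hence the relevant "scene" is the radius-$(f(n)+r)$ ball $B = B_G(u, f(n)+r)$ together with its partial labeling. Second, I would define a candidate mending rule $M$ purely as a function of such a scene: given the isomorphism type of $(B, \lambda|_B)$, let $M$ output *some* choice of new labels on $B_G(u,f(n))$ that (i) is consistent with a valid $f(n)$-mend in *some* completion, and (ii) makes $\psi^*$ happy on every node whose radius-$r$ ball is contained in $B$. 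The content to verify is that such a choice always exists and that choosing it *locally*—from the scene alone—is enough to guarantee global acceptance, i.e., the patched labeling is accepted by $\psi^*$ everywhere in $G$, not just inside $B$. This follows because outside $B_G(u,f(n))$ nothing changed, so $\psi^*$'s verdict at a node $w$ with $B_G(w,r)\not\subseteq$ (changed region) is unaffected; and for nodes $w$ near $u$, their radius-$r$ ball lies inside $B$, so the locally-checked happiness transfers. Third, to handle "does not need to know $n$": the rule $M$ as just described takes as input a labeled ball of radius $f(n)+r$; for the procedure to be genuinely $n$-independent we should phrase it as: "inspect your radius-$(f(n)+r)$ neighborhood; this is a finite object once $f(n)$ is fixed; apply the fixed lookup table $M$." The subtlety is that $f(n)$ itself depends on $n$, but the lemma only promises a procedure *parameterized by the radius* $f(n)+r$; the node running the procedure is told (or has hard-coded) that radius, not $n$ per se. Since two different $n$'s with the same $f(n)$ yield the same radius and hence the same table, and since distinct values of $f(n)$ only ever *increase* the ball that is inspected, one can take $M$ to be the rule obtained from the worst case and argue it still certifies correctness when applied with a larger radius. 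I would finish by packaging $M$ into the claimed "mending procedure depending only on distance $f(n)+r$."

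**Main obstacle.**

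The delicate point is the interaction between "$M$ decides from the local scene" and "the resulting labeling is a genuine mend globally." The definition of a $t$-mend requires $\psi^*$ to accept the *entire* labeling $\mu$ on $G$, and the guarantee from mending radius $f(n)$ is an *existential* statement about $\mu$, not a local one—the witness mend could in principle have been constructed using global information. I must argue that the existence of *some* global mend implies the existence of one whose restriction to $B$ depends only on $(B,\lambda|_B)$: this is where the LCL/finiteness of neighborhood types does the real work, together with the observation that $\psi^*$'s happiness at every node affected by the patch is determined inside $B$, while happiness elsewhere is inherited from $\lambda$ unchanged. Concretely, one fixes the isomorphism type of the scene, picks any graph $G_0$ realizing it, takes the promised mend $\mu_0$ on $G_0$, reads off $\mu_0$ on the ball, and transplants that patch to every occurrence of the scene in every $G$; the transplanted patch is still a valid $f(n)$-mend because all constraints it must satisfy are radius-$r$ constraints local to $B$, and the only nodes whose output changed are inside $B_G(u,f(n))$. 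I expect writing this transplantation argument cleanly—carefully tracking which verifier-checks are "internal" to $B$ and which are "external" and untouched—to be the bulk of the proof; the $n$-independence claim is then essentially a corollary of the fact that the lookup table is indexed by the scene type, not by $n$.
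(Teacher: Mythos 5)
Your first step --- that the validity of a candidate patch is certifiable from the local scene, because the patch only changes labels within distance $f(n)$ of $u$ and $\psi^*$'s verdict at each node is a radius-$r$ condition, so only nodes within distance $f(n)+r$ of $u$ need rechecking --- is exactly the paper's first argument, and it is correct (modulo the same harmless off-by-$O(r)$ in the radius that the paper itself commits). Two caveats, though. First, your ``transplantation'' framing has a wrinkle you do not address: if you instantiate the scene in an arbitrary witness graph $G_0$ with $|V(G_0)| = n_0 \neq n$, the mend you are promised there has radius $f(n_0)$, which may exceed $f(n)$ and spill outside the ball you read the patch from, so its restriction to the ball need not be a valid patch. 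The clean fix is to invoke mendability on $G$ itself to get \emph{existence} of a valid patch confined to radius $f(n)$, and then find one by brute force over all candidate patches, checking each locally --- which is what the paper does.

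The genuine gap is in the ``does not need to know $n$'' part. Your resolution --- that the procedure is ``parameterized by the radius $f(n)+r$'' and the node is ``told (or has hard-coded) that radius'' --- is circular: knowing $f(n)$ is precisely the knowledge of $n$ the lemma says is unnecessary, and your fallback (``take $M$ to be the rule obtained from the worst case'') is not an argument; a table built for one radius cannot obviously be applied at a different one, in either direction. The missing idea is iterative deepening: inspect the neighborhood of $u$ at radius $1, 2, 3, \dots$, at each radius brute-force over patches confined to that radius and test local validity, and stop at the first success. Any patch accepted at any radius is globally valid by your own locality argument, and termination within $f(n)+r$ steps is guaranteed because the mendability hypothesis supplies a witness at that radius. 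This one extra step closes the gap and completes the proof.
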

\begin{proof}
	We first show that it is only needed to inspect the $(f(n)+r)$-radius neighborhood of $u$. We start from $u$ and we inspect its $(f(n) + r)$-radius neighborhood, where $r=O(1)$ is the LCL checkability radius. Since we know that there exists a $f(n)$-mend at $u$, and since the output of a node $v$ may only affect the correctness of the outputs of the nodes at distance at most $r$ from $v$, then it is possible to find a correct mending by brute force. We now remove the dependency on $n$ as follows. We start by gathering the neighborhood of $u$ at increasing distances and at each step we check if there is a feasible mending by brute force. This procedure must stop after at most $f(n)+r$ steps, since we know that such a solution exists.
\end{proof}

We now show that \emph{network decompositions} can be used to relate the mending radius of a problem with its distributed time complexity.
A \emph{$(c,d)$-network decomposition} is a partition of the nodes into $c$ color classes such that within each color class, each connected component has diameter at most $d$~\cite{Awerbuch1996}. Also, recall that $G^i$, the $i$-th power of $G = (V,E)$, is the graph $(V,E')$ satisfying that $\{u,v\} \in E'$ if and only if $u$ and $v$ are at distance at most $i$ in $G$.

\begin{theorem}\label{thm:mending_to_nd}
	Let $\Pi$ be an LCL problem with mending radius $k$ and checkability radius~$r$. Then $\Pi$ can be solved in $O(cd(k+r))$ rounds in the LOCAL model if we are given a $(c,d)$-network decomposition of $G^{2k+r}$. 
\end{theorem}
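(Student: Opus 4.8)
The plan is to process the network decomposition color class by color class, and within each color class, process each connected component (in the power graph) sequentially, but handle the nodes of a single component in a way that exploits the non-overlap of patches. First I would fix a $(c,d)$-network decomposition of $G^{2k+r}$: this gives a partition $V = V_1 \cup \dots \cup V_c$ such that each $V_i$, viewed as a subgraph of $G^{2k+r}$, splits into connected components of $G^{2k+r}$-diameter at most $d$. I start with the empty labeling $\lambda_0$ (all $\bot$), which is accepted by $\psi^*$ by property \ref{prop2}. I then iterate over $i = 1, \dots, c$, and after processing all of $V_1, \dots, V_i$, I will maintain the invariant that every node of $V_1 \cup \dots \cup V_i$ is non-$\bot$ and the current partial labeling is still accepted by $\psi^*$.

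The core step is: given a color class $V_i$ with the current accepted partial labeling, fill in all nodes of $V_i$. Here I use the standard trick that two patches of radius $k$ do not interfere if their centers are at distance more than $2k$ in $G$ — but more carefully, since a patch can change labels up to distance $k$ and $\psi^*$ has checking radius $r$, a patch at $u$ can only affect the happiness of nodes within distance $k+r$ of $u$; two patches at $u, u'$ are "independent" if the $(k+r)$-balls around them are disjoint, which is guaranteed when $\mathrm{dist}_G(u,u') > 2(k+r)$, hence certainly when $u$ and $u'$ lie in different $G^{2k+r}$-components of $V_i$. So I can process all $G^{2k+r}$-components of $V_i$ in parallel. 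Within a single component $C$ of $G^{2k+r}$-diameter $\le d$: I process the nodes of $C$ one at a time, in BFS order from an arbitrary root of $C$ (with respect to the $G^{2k+r}$ structure); for each such node $u$ that is currently $\bot$, I apply Lemma~\ref{lem:normalizedmending} to obtain a mend of radius $k$ centered at $u$, which can be computed by inspecting only the radius-$(k+r)$ neighborhood of $u$ in $G$ and does not need to know $n$. By the definition of a $t$-mend, this turns $\lambda(u)$ non-$\bot$, keeps $\psi^*$ accepting, and only decreases the set of $\bot$-nodes on nodes that were already non-$\bot$ (condition 3), so nodes filled in earlier stay filled. After at most $|C|$ such steps — and $|C|$ is bounded by the number of nodes in a $G^{2k+r}$-ball of radius $d$, which in a bounded-degree graph is a constant depending on $\Delta, d, k, r$, but for the round count what matters is the $G^{2k+r}$-diameter — all of $C$ is non-$\bot$.

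For the round complexity: each single mend step at $u$ requires gathering the radius-$(k+r)$ neighborhood of $u$ in $G$, i.e.\ $O(k+r)$ LOCAL rounds. Within one $G^{2k+r}$-component we perform $O(d)$ such steps in sequence (processing nodes in $G^{2k+r}$-BFS layers; nodes in the same layer are either independent or can be serialized, but the depth is $O(d)$), and the sequential nodes are within $G^{2k+r}$-distance $O(d)$ of each other so coordinating the order costs $O(d(2k+r))$ rounds of communication. Thus each color class costs $O(d(k+r))$ rounds — wait, more precisely $O(d \cdot (2k+r)) = O(d(k+r))$ rounds (absorbing constants) — and there are $c$ classes, giving the claimed $O(cd(k+r))$ total. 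At the end, every node is non-$\bot$ and $\psi^*$ accepts, so by property \ref{prop4} the labeling is a feasible solution of $\Pi$.

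The main obstacle I anticipate is making the "sequential within a component" step truly run in $O(d(k+r))$ rounds rather than something depending on $|C|$: a naive sequential scheme would take $\Theta(|C|)$ rounds per component. The fix is to observe that within a $G^{2k+r}$-component we only need to order the mends along $G^{2k+r}$-BFS layers from the root — all nodes in a given layer that are pairwise at $G$-distance $> 2(k+r)$ can be mended simultaneously, and the only genuine dependencies are between consecutive layers, so the sequential depth is the $G^{2k+r}$-radius $O(d)$, not $|C|$; nodes within a layer that are too close can be further sub-colored with $O(1)$ colors (a distance-$(2k+r)$ coloring of a bounded-degree graph needs $O(1)$ colors, computable in $O(\log^* n)$ rounds, which is absorbed) so that even that is handled with $O(1)$ extra serial steps per layer. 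Once this scheduling is in place the rest is just bookkeeping with the four conditions in the definition of a $t$-mend to verify the invariant is maintained.
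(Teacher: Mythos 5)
Your global structure (iterate over the $c$ color classes, argue that different components of the same class can be patched independently, maintain the invariant via properties \ref{prop2}--\ref{prop4} and the definition of a $t$-mend) matches the paper. The genuine gap is in how you discharge the step you yourself flag as the main obstacle: filling in a single component within $O(d(k+r))$ rounds. Your fix --- process the component layer by layer in $G^{2k+r}$-BFS order and break ties within a layer by a distance-$(2k+r)$ coloring --- does not give the claimed bound. First, the number of sub-colors is $\Delta^{\Theta(k+r)}$, so each layer costs $\Delta^{\Theta(k+r)}\cdot O(k+r)$ rounds rather than $O(k+r)$; this is harmless only when $k+r=O(1)$, whereas the theorem is applied in Section~\ref{sec:corollaries} with $k=\Theta(\log n)$, where $\Delta^{\Theta(k)}$ is polynomial in $n$. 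Second, the $O(\log^* n)$ rounds needed to compute that sub-coloring are not ``absorbed'' by $O(cd(k+r))$: in the corollary immediately following the theorem, $c,d,k,r$ are all constants and the claimed bound is $O(1)$.

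The paper sidesteps scheduling altogether: a component of a $(c,d)$-decomposition of $G^{2k+r}$ has diameter at most $d(2k+r)$ in $G$, so in $O(d(k+r))$ rounds every node of the component can collect the entire component together with its radius-$(k+r)$ halo, and then \emph{locally simulate} the sequential mending of all $|C|$ nodes of the component, each step existing and being computable by Lemma~\ref{lem:normalizedmending}; local computation is free in the LOCAL model, so the per-phase cost is just the gathering cost $O(d(k+r))$, independent of $|C|$. If you replace your layer-by-layer scheduling with this gather-and-simulate step, the rest of your argument goes through. One smaller slip worth fixing: lying in different components of $G^{2k+r}[V_i]$ only guarantees $G$-distance strictly greater than $2k+r$, not greater than $2(k+r)$ as you assert; to justify independence at that distance one uses the read/write asymmetry (a patch writes only within radius $k$ but is computed from radius $k+r$), or one simply takes a decomposition of a slightly larger power of $G$ --- an $O(1)$ adjustment either way.
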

\begin{proof}
	We prove the claim by providing an algorithm for solving $\Pi$.
	We start by temporarily assigning $\bot$ to all nodes. Then, we process the nodes in $c$ phases. In phase $i$, we mend all nodes that are in components of color $i$ in parallel as follows. By \autoref{lem:normalizedmending}, for each node $v$, we do not need to see the whole graph to find a valid mending of radius $k$ for $v$, but only nodes that, in $G$, are at distance at most $k+r$ from $v$. This implies that we can find a valid mending for all nodes of each component by gathering the whole component and the nodes at a distance of at most $k+r$ from it. This mending only needs to modify the solution of nodes inside the component and nodes at distance at most $k$ from it. Since the network decomposition is done on $G^{2k+r}$, then, in $G$, nodes of different components are at distance strictly larger than $2k+r$ from each other. This implies that the mending applied on some component $C_1$ does not modify the temporary solution of nodes at distance of at most $k+r$ from some other component $C_2 \neq C_1$.
	Hence, we obtain the same result that we would have obtained by mending each component of color $i$ sequentially. Since we process all color classes and perform a valid mending, at the end, no node is labeled $\bot$, and hence the temporary labeling is a valid solution for $\Pi$. Each connected component has diameter at most $O(d(k+r))$ in $G$, so each phase requires $O(d(k+r))$ rounds. The total running time is $O(cd(k+r))$.
\end{proof}

As a corollary of this theorem, we show that in order to prove an $O(\log^* n)$ upper bound on the time complexity of a problem, it is enough to prove that a solution can be mended by modifying the labels within a constant distance.
\begin{corollary}
	Let $\Pi$ be an LCL problem with constant mending radius. Then $\Pi$ can be solved in $O(\log^* n)$ rounds in the LOCAL model. 
\end{corollary}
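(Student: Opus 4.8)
The plan is to combine \autoref{thm:mending_to_nd} with the classical fact that bounded-degree graphs can be colored in $O(\log^* n)$ rounds. Write $k = O(1)$ for the mending radius of $\Pi$ and $r = O(1)$ for its checkability radius; both are constants by assumption. To invoke \autoref{thm:mending_to_nd} it suffices to produce a $(c,d)$-network decomposition of $G^{2k+r}$ with $c,d = O(1)$, since then the theorem yields a solution in $O(cd(k+r)) = O(1)$ further rounds.

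First I would observe that $G^{2k+r}$ is again a bounded-degree graph: since every $G \in \mathcal{G}$ has maximum degree at most $\Delta = O(1)$ and $2k+r = O(1)$, each node has at most $\Delta^{2k+r} = O(1)$ nodes within distance $2k+r$, so the maximum degree of $G^{2k+r}$ is a constant. Moreover, one communication round in $G^{2k+r}$ can be simulated by $2k+r = O(1)$ rounds in $G$, so any $t$-round LOCAL algorithm on $G^{2k+r}$ runs in $O(t)$ rounds on $G$. Next I would run Linial's coloring algorithm~\cite{Linial1992} (or any $O(\log^* n)$-round coloring routine for bounded-degree graphs) on $G^{2k+r}$ to obtain a proper coloring with $c = O(1)$ colors in $O(\log^* n)$ rounds of $G$. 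Such a coloring is exactly a $(c,d)$-network decomposition of $G^{2k+r}$ with $c = O(1)$ and $d = 0$, since each color class is an independent set of $G^{2k+r}$ and so each of its connected components is a single node.

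Feeding this decomposition into \autoref{thm:mending_to_nd} with $c,d,k,r$ all constant gives an algorithm that solves $\Pi$ in $O(1)$ additional rounds; adding the $O(\log^* n)$ rounds spent computing the coloring yields the claimed bound. There is essentially no hard step here: the only point that needs a line of care is that the network decomposition of the \emph{power graph} $G^{2k+r}$ — which is what \autoref{thm:mending_to_nd} consumes — is itself computable in $O(\log^* n)$ rounds of the original graph $G$, and this follows from the two observations above (the power of a bounded-degree graph is bounded-degree, and each of its rounds costs $O(1)$ rounds in $G$). One might also worry that $d = 0$ makes the factor $O(cd(k+r))$ degenerate, but this is harmless: with singleton components each mending phase of \autoref{thm:mending_to_nd} still only gathers a radius-$(k+r)$ ball in $G$, costing $O(k+r) = O(1)$ rounds per color class and hence $O(c(k+r)) = O(1)$ in total.
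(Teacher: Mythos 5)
Your proof is correct and follows essentially the same route as the paper: compute a distance-$(2k+r)$ coloring (i.e., a proper coloring of $G^{2k+r}$) with $O(1)$ colors in $O(\log^* n)$ rounds and feed it into \autoref{thm:mending_to_nd} as a network decomposition with constantly many color classes of singleton components. The only cosmetic difference is that the paper records the decomposition as $(c,1)$ rather than $(c,0)$, sidestepping the degenerate-product worry you correctly dispose of.
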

\begin{proof}
	We prove the claim by providing an algorithm running in $O(\log^* n)$. Let $k = O(1)$ be the mending radius of $\Pi$ and $r$ be its checkability radius. We start by computing a distance-$(2k+r)$ coloring using a palette of $c = \Delta^{2k+r} + 1 = O(1)$ colors, that can be done in $O(\log^* n)$ rounds. Note that such a coloring is a $(c,1)$ network decomposition of $G^{2k+r}$, and we can hence apply \autoref{thm:mending_to_nd} to solve $\Pi$ in constant time.
\end{proof}

\section{Making problems mendable}\label{sec:solvability_to_mendability}

In Section~\ref{sec:mendability_to_solvability}, we saw that local mendability implies local solvability. In this section, we consider the converse: does local solvability imply local mendability? First, we show that mending can be much harder than solving by considering an edge-checkable problem on undirected paths.

\begin{theorem}\label{thm:path_logstar_linear}
	There are LCLs that can be solved in $O(\log^*n)$ time that require $\Theta(n)$ distance for mending.
\end{theorem}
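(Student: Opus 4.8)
The plan is to exhibit one edge-checkable LCL on undirected paths that is trivially solvable but rigid enough that patching a single hole forces global changes. I would take output alphabet $\Gamma=\{1,2,3\}$ and edge constraint $C=\{(1,2),(2,1),(3,3)\}$: on every edge the two endpoints either carry the colours $1$ and $2$, or both carry $3$. This is checkable with radius $r=1$ and the degree is bounded (paths have $\Delta=2$), so it is an LCL. Solvability is immediate: outputting $3$ at every node is feasible on every path, so the problem is $0$-round solvable, hence $O(\log^* n)$-solvable. (If an example that is genuinely $\Theta(\log^* n)$ rather than $O(1)$ is preferred, replace colour $3$ by three colours $\{3,4,5\}$ forming a proper-$3$-colouring world disjoint from $\{1,2\}$, so the problem becomes ``properly $2$-colour or properly $3$-colour the path''; it is solved by $3$-colouring and the argument below is unchanged.)

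For the matching $O(n)$ upper bound on the mending radius, note that for any partial labelling $\lambda$ accepted by $\psi^*$ and any node $v$, the constant labelling $\mu\equiv 3$ is an $(n-1)$-mend of $\lambda$ at $v$: it is feasible, it contains no empty label (so the requirement ``$\mu(u)=\bot$ implies $\lambda(u)=\bot$'' is vacuous), $\mu(v)\neq\bot$, and every node of an $n$-node path lies within distance $n-1$ of $v$. Hence the problem is $O(n)$-mendable.

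The core of the proof is the $\Omega(n)$ lower bound. The structural fact I would use is that on a connected path every feasible \emph{complete} labelling $\mu$ is either $\equiv 3$ or uses only $\{1,2\}$ and is then a proper $2$-colouring: a single occurrence of $3$ propagates along the whole path since $3$ is adjacent only to $3$, and otherwise consecutive labels lie in $\{1,2\}$ and differ. I then fix $n$, set $v=\lceil n/2\rceil$, and build the hard instance $\lambda$: a proper $2$-colouring of $\{1,\dots,v-1\}$ with $\lambda(v-1)=1$, a proper $2$-colouring of $\{v+1,\dots,n\}$ with $\lambda(v+1)=2$, and $\lambda(v)=\bot$. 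That $\psi^*$ accepts $\lambda$ is a quick check: every node other than $v-1,v,v+1$ sees only edges inside one of the two properly $2$-coloured stretches, while $v-1,v,v+1$ see the empty label at $v$. Now take any $t$-mend $\mu$ of $\lambda$ at $v$. Since $\lambda$ has its only empty slot at $v$ and $\mu(v)\neq\bot$, $\mu$ is complete and feasible, so by the structural fact it is either $\equiv 3$ or a proper $2$-colouring of the whole path. If $\mu\equiv 3$, it disagrees with $\lambda$ at every node except $v$, so $t\ge\max(v-1,n-v)=\Omega(n)$. If $\mu$ is a proper $2$-colouring it is determined by $\mu(1)$; if $\mu(1)=\lambda(1)$ then $\mu$ agrees with $\lambda$ on all of $\{1,\dots,v-1\}$, forcing $\mu(v)=2$ and $\mu(v+1)=1\neq\lambda(v+1)$, whence $\mu$ (being a $2$-colouring) disagrees with $\lambda$ on all of $\{v+1,\dots,n\}$ and $t\ge n-v$; the case $\mu(1)\neq\lambda(1)$ symmetrically gives $t\ge v-1$. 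In all cases $t\ge\min(v-1,n-v)=\Omega(n)$, so $\lambda$ has no $o(n)$-mend at $v$, and together with the upper bound the mending radius is $\Theta(n)$.

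The step needing the most care is the lower bound: making precise that a run of $3$'s must extend to both endpoints of the path, that two proper $2$-colourings of a path agreeing at one vertex are equal, and the boundary bookkeeping (if $t$ is already large enough that $B(v,t)$ contains an endpoint the bound is trivial). Everything else --- LCL membership, constant-time solvability, and the constant-label $O(n)$ mend --- is routine.
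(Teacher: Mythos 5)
Your proposal is correct and takes essentially the same route as the paper: an LCL on paths whose solution space is the disjoint union of a rigid properly-$2$-colored world and an incompatible alternative world, with the hard instance being two mismatched $2$-colorings separated by a single hole, so that any mend must either switch worlds globally or recolor half the path. The only difference is that your primary example is $0$-round solvable (the paper's alternative world is a $3$-coloring, making its example genuinely $\Theta(\log^* n)$-round to solve), but this still satisfies the stated $O(\log^* n)$ bound, and your $\{3,4,5\}$ variant coincides with the paper's construction; your argument is, if anything, slightly more explicit about the $O(n)$ upper bound and the rigidity of $2$-colorings.
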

\begin{proof}
	Consider the following LCL problem $\Pi$ on undirected paths. Nodes have no input
	and must produce one of the possible outputs in $\Gamma = \{A,B,1,2,3\}$. The LCL constraints are defined by providing a list of valid edge neighborhoods $C_E = \{\{A,B\},\{1,2\},\{1,3\},\{2,3\}\}$.
	In other words, nodes can either $2$-color the path using labels $\{A,B\}$, or $3$-color the path using labels $\{1,2,3\}$, but they cannot mix the labels in the solution.
	
	Solving this problem requires $\Theta(\log^* n)$ time, as it is necessary and sufficient to produce a $3$-coloring. We now prove that this LCL is $\Theta(n)$-mendable.
	Consider a path $P = (p_0,p_1,\ldots)$ of length $n=2k+1$ and the following partial solution on this path:
	\begin{itemize}[noitemsep]
		\item All nodes $p_i$ such that $i < k$ and $i \equiv 0 \pmod 2$ are labeled $A$.
		\item All nodes $p_i$ such that $i < k$ and $i \equiv 1 \pmod 2$ are labeled $B$.
		\item Node $p_{k}$ is labeled $\bot$.
		\item All nodes $p_i$ such that $i > k$ and $i \equiv 0 \pmod 2$ are labeled $B$.
		\item All nodes $p_i$ such that $i > k$ and $i \equiv 1 \pmod 2$ are labeled $A$.
	\end{itemize}
	Note that there are two regions that are labeled with a valid $2$-coloring, these regions are separated by a node labeled $\bot$, and the two $2$-colorings are not compatible, meaning that it is not possible to assign a label to $p_k$ such that the LCL constraints are satisfied on both its incident edges.
	
	We argue that mending this solution requires linear distance. Observe first that mending this solution using labels $\{1,2,3\}$ would require us to undo the labelings of all nodes. This is because the LCL constraints require that no node of the graph should be labeled $\{A,B\}$ in order to use labels $\{1,2,3\}$. Since there are nodes labeled $\{A,B\}$ at distance $k=\Theta(n)$ from $p_k$, changing the labels of these nodes would require linear time. The remaining option is to mend the solution by only using labels $\{A,B\}$. In this case, at least half of the nodes of the path need to be relabeled in order to produce a valid $2$-coloring and satisfy the constraints.
\end{proof}

\subsection{From local solvability to local mendability: the case of cycles}

In Theorem~\ref{thm:path_logstar_linear}, we showed that local solvability does not imply local mendability on undirected paths. The presented counterexample can be modified to also hold for directed paths or cycles by defining the edge constraints more carefully. The main idea of the counterexample was to use two different sets of labels $\{A,B\}$ and $\{1,2,3\}$ that cannot be both part of the solutions. In order to make this problem efficiently mendable, we could try to remove the set $\{A,B\}$ from the set of possible labels and only work with the labels $\{1,2,3\}$. The restricted problem would still be $O(\log^* n)$-solvable and, in fact, such a restriction would be sufficient to make the problem $O(1)$-mendable. 

In this section, we will consider restrictions of a given LCL problem $\Pi$ under which the problem becomes efficiently mendable. We will define such restrictions with respect to the diagram representations of the respective LCL problems using results from~\cite{Brandt2017,chang21automata-theoretic}. In the diagram representation, an LCL problem is described as a graph $D$ that encodes feasible solutions of the LCL problem. The nodes of $D$ represent the edge-constraints of the LCL problem, and the edges correspond to the node-constraints, i.e. the feasible radius-$r$ neighborhoods of the LCL problem. A walk through the graph $D$ defines a feasible labeling of the LCL problem. A formal definition of the diagram representation of LCLs is given in Appendix~\ref{app:intro_automata}.  

\begin{theorem}\label{thm:cycle_mend_hard_to_easy}
	Suppose $\Pi$ is an LCL problem on directed paths or cycles with no input. If $\Pi$ is $O(\log^* n)$-solvable, we can define a new LCL problem $\Pi'$ with the same round complexity, such that a solution for $\Pi'$ is also a solution for $\Pi$, and $\Pi'$ is $O(1)$-mendable. 
\end{theorem}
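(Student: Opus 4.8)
The plan is to analyze the structure of $O(\log^* n)$-solvable LCL problems on directed paths and cycles through their diagram representation $D$, and then to pick out a suitable subgraph of $D$ that defines the restriction $\Pi'$. Recall from the automata-theoretic characterization~\cite{Brandt2017,chang21automata-theoretic} that an LCL on directed paths/cycles with no input is $O(\log^* n)$-solvable precisely when its diagram $D$ contains a \emph{flexible} cycle, i.e., a closed walk of some length $\ell$ such that for all sufficiently large $k$ there is a closed walk of length $\ell k$ (equivalently, walks of all lengths $\geq \ell_0$ between the relevant states). The labels produced along such a flexible cycle are the ones we want to keep; all other labels — like the $\{A,B\}$ component in Theorem~\ref{thm:path_logstar_linear} — are exactly what obstructs mending, so we discard them.

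First I would fix a flexible cycle $W$ in $D$ witnessing $O(\log^* n)$-solvability, say of length $\ell$ with flexibility threshold $\ell_0$, and let $\Pi'$ be the LCL whose allowed labelings are exactly those corresponding to walks that stay within the set of states and transitions of $W$ (more precisely, within the strongly connected, aperiodic "flexible core" reachable appropriately). Concretely, $\Pi'$ uses only the output labels appearing on $W$, with node- and edge-constraints inherited from $\Pi$ restricted to these labels. Then: (i) any solution of $\Pi'$ is by construction a solution of $\Pi$, since we only removed allowed configurations; and (ii) $\Pi'$ is still $O(\log^* n)$-solvable, because $W$ is flexible, so the standard construction (break symmetry with a distance-$k$ coloring, then fill in consistent blocks of the flexible cycle between marked points) still works for $\Pi'$ — indeed $\Pi'$ retains a flexible cycle, so it satisfies the same characterization.

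The core of the argument is showing $\Pi'$ is $O(1)$-mendable. Given a partial solution $\lambda$ accepted by the relaxed verifier and a hole at node $v$, I would look at the two maximal runs of non-$\bot$ labels on either side of the hole (or handle the case where the whole neighborhood is $\bot$ trivially). Because these labels lie on the flexible cycle $W$, the endpoints adjacent to the hole correspond to two states $q_{\mathrm{left}}, q_{\mathrm{right}}$ in the flexible core of $D$. Flexibility guarantees that for some constant $T$ (depending only on $\ell_0$ and the size of $D$, hence on $\Pi$) there is a walk of length exactly the hole-size-plus-a-bit from $q_{\mathrm{left}}$ to $q_{\mathrm{right}}$ — and more importantly, if the hole is small we may need to also rewrite a constant-length stretch of already-placed labels on one side to realize a walk of the right length and parity, but never more than $T$ of them. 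This is exactly a radius-$T$ mend: change $\lambda$ only within distance $T$ of $v$, fill $v$ with a non-$\bot$ label, and leave any further-away holes untouched (which only helps the relaxed verifier). The main obstacle I anticipate is the bookkeeping around \emph{parity and short gaps}: when the gap between two fixed blocks is shorter than the flexibility threshold $\ell_0$, a consistent connecting walk of the exact required length may not exist without also modifying a bounded number of labels just outside the hole, and one must check that the required number of such modifications is bounded by a constant independent of $n$ — this follows from flexibility and finiteness of $D$, but it is the delicate point. A secondary subtlety is making sure this works uniformly on cycles (no endpoints) as well as paths, and that when $\lambda$ already contains labels not on $W$ the statement is vacuous because such $\lambda$ is not even a partial $\Pi'$-solution, so we only ever mend configurations already living on the flexible core.
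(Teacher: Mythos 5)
Your proposal is correct and takes essentially the same route as the paper: the paper likewise defines $\Pi'$ by restricting the diagram to the (smallest) strongly connected component containing a flexible state, and mends by first clearing a constant-radius neighbourhood of the hole and then using flexibility to connect the two boundary states by a walk of the appropriate length --- exactly your ``parity and short gaps'' step. The only caution is that the restriction must be to the full strongly connected component containing the flexible state rather than literally to the single closed walk $W$ (a lone cycle of length $\ell>1$ is periodic and would break $O(\log^* n)$-solvability), which is precisely the clarification you already make with your ``aperiodic flexible core'' remark.
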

\begin{proof}
	Let $D$ be the diagram of $\Pi$. Since $\Pi$ is $O(\log^* n)$-solvable, it must contain at least one flexible state, that is, a state, to which we can return in $k, k+1, k+2, k+3,\ldots$ steps for some constant $k$. The constant $k$ is called the flexibility radius of the respective state. Note that the flexible state would be a self-loop, if $\Pi$ is $O(1)$-solvable. Let $f$ be the flexible state contained in the smallest possible strongly connected component and break ties arbitrarily. Let $S$ be the new diagram induced by the smallest strongly connected component containing~$f$. Let $\Pi'$ be the LCL induced by $S$. 
	
	Since the $f$ is contained in both diagrams, $\Pi'$ has the same complexity as $\Pi$. We now prove that $\Pi'$ is $O(1)$-mendable. Let $k = O(1)$ be the flexibility radius of $\Pi'$ and let $r$ be the size of the neighborhoods of the diagram. In order to mend the solution of some node $u$, we can first delete the solution of all nodes at distance at most $k+r$ from $u$, fix any valid labeling for nodes at distance $k+1,\ldots,k+r$, and rewrite a solution by following the path $(p_1,\ldots,p_{2k+1})$ on the diagram, where $p_1$ and $p_{2k+1}$ are labeled with the same neighborhoods of nodes at distance $k$ from $v$.
\end{proof}

Note that similar results can be derived for undirected paths and cycles. These cases require more restrictive diagrams in order to satisfy $O(\log^* n)$ solvability. In Appendix~\ref{app:restrictions:undirected_paths} and~\ref{app:restrictions:undirected_cycles}, we show how such diagrams have to be restricted in order for the problem to be $O(1)$-mendable.

\subsection{From local solvability to local mendability: the case of rooted trees}\label{ssec:mendability_rooted_trees}

In this section, we will consider mendability of rooted trees. In~\cite{chang21automata-theoretic}, it has been shown that the time complexity of so-called \emph{edge-checkable} LCL problems on rooted trees is asymptotically the same as the time complexity on directed paths. We cannot derive a similar statement for mendability. Consider therefore the $3$-coloring problem on rooted binary trees. Its diagram representation consists only of flexible states that form a strongly connected component. This problem is therefore $O(1)$-mendable on directed paths. It is, however, not $o(\log n)$-mendable on rooted binary trees: if the children of a node have different colors, this node can only be colored with the one remaining color. This means that the leaf nodes can define a unique coloring of the whole tree. Assume now that the root of such a tree is uncolored and that the children and a parent of a node have $3$ different colors. In order to mend the color of the node, we would need to undo the coloring of all nodes in one of its subtrees, thus requiring a mending radius of $\Omega(\log n)$. 

Alternatively, we can instead consider the $3$-coloring problem as a set of possible child-parent configurations. Analogously to the previous section, we could restrict the problem to a subset of possible configurations and mend the problem on the restricted instances. However, independent of how we select such a subset, there still exists an assignment of labels to the leaf nodes that would define a unique coloring of the tree. A formal proof of this observation is given in Appendix~\ref{app:restrictions:rooted_trees}.

There is, however, a straightforward approach that makes any locally solvable problem also locally mendable: we can apply the standard way for solving an $O(\log^* n)$-computable problem $\Pi$ by first computing a distance-$k$ coloring. The local identifiers formed by the distance-$k$ coloring can then be used to solve the problem in a constant number of rounds. To make the problem locally mendable, we require $\Pi'$ to output a distance-$k$ coloring and the output from the constant round algorithm for problem $\Pi$. When mending an instance of $\Pi'$, we need to mend the labels from the distance-$k$ coloring and then complete the solution for the original problem. Both steps can be completed in a constant number of rounds. In the following theorem, we present an idea for a restriction to $\Pi'$ for the $3$-coloring problem on binary trees. The presented approach also makes use of additional labels, but in a simpler way than the discussed straightforward approach.

\begin{theorem}\label{thm:Three-coloring_binary_trees}
	Given the $3$-coloring problem $\Pi$ on rooted binary trees without input with any mending complexity, we can define a new LCL problem $\Pi'$ with the same round complexity, such that a solution for $\Pi'$ is also a solution for $\Pi$, and $\Pi'$ is $O(1)$-mendable. 
\end{theorem}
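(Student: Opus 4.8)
The plan is to construct $\Pi'$ by augmenting the $3$-coloring labels with a small amount of directional bookkeeping that records, at each node, how the colors of its two children were chosen relative to its own color. The obstacle with plain $3$-coloring is exactly the rigidity described before the theorem: if a node's parent and both children already carry three distinct colors, the node is forced, and "unforcing" it costs an entire subtree. The fix is to include in the output, at each node $v$, a label from $\{1,2,3\} \times \{\lleft, \lright\}^{?}$ — more precisely a color $c(v)$ together with, for each child, a bit saying whether that child "agrees to be flexible." The intended meaning: at most one child of $v$ is allowed to be in a "rigid" relationship with $v$ (sharing the tight $3$-color pattern up the path), and the other child must be colorable with $2$ free choices given $v$'s color. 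This is enforceable with verification radius $1$: $v$ checks its own color against each child's color and against the consistency of the flexibility bits on its two incident child-edges. A completely empty labeling is trivially accepted, so property (P2) holds; stripping the auxiliary bits yields a valid $3$-coloring, so (P4)/(P3) hold; and since any proper $3$-coloring can be annotated (greedily pick, at each node, one child to call "rigid" arbitrarily), $\Pi'$ is a relaxation consistent with $\Pi$.

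**Next I would** verify the round complexity is unchanged. Since $3$-coloring binary trees is $O(\log^* n)$-solvable and the auxiliary bits are determined locally from any valid coloring, $\Pi'$ is also $O(\log^* n)$-solvable: run the known coloring algorithm, then in $O(1)$ extra rounds have every node pick and cross-check its flexibility bits. Conversely any $\Pi'$ solution restricts to a $\Pi$ solution, so the complexities coincide exactly (not just asymptotically), matching the theorem statement.

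**The core step is the $O(1)$-mend.** Given a partial labeling accepted by the relaxed verifier and a node $u$ with $\lambda(u) = \bot$, I would clear the labels in the radius-$O(1)$ ball around $u$ (say radius $r+1$ where $r$ is the verification radius), then re-fill. The key structural point is that along the path from $u$ down into the subtree rooted at its "flexible" child, the color is \emph{not} forced: at each such node we have $\ge 2$ legal colors, and a binary tree path with lists of size $2$ can be relabeled to match any prescribed boundary color a constant number of steps down — this is the same even-path/list-coloring slack exploited in the grid example, and it terminates the propagation within a constant radius because we only need to reach the small cleared annulus, not a leaf. Toward $u$'s parent and toward $u$'s "rigid" child we inherit fixed colors, but $u$ still has a free choice on its flexible side, which is precisely what makes $u$ fillable. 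So the mend touches only the constant-radius ball, re-picking colors and flexibility bits consistently.

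**The main obstacle** I anticipate is handling the interaction of multiple nearby holes and the flexibility bits at the boundary of the cleared region: when we clear the ball around $u$, the node just outside the annulus on the "rigid" side may have previously been using $u$'s position as its flexible side, so after re-filling we must make sure its flexibility bit still points at a genuinely $2$-choosable direction. The clean way around this is to choose, when re-filling, the direction of rigidity so that it always points \emph{away} from $u$ and away from any other hole — i.e., orient all the new "rigid" edges toward the leaves — which is always possible in the constant-radius patch and guarantees the freshly written bits are consistent with whatever lies outside. Making this orientation rule precise, and checking it composes correctly when the ball around $u$ overlaps previously-patched regions, is where the real care in the proof will go; everything else reduces to the list-coloring slack already used earlier in the paper.
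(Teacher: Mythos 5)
Your construction is genuinely different from the paper's: the paper does not add auxiliary labels at all, but instead \emph{restricts} the set of valid $3$-colorings by distinguishing \emph{monochromatic} nodes (children share a color) from \emph{mixed} nodes (children differ) and requiring the mixed nodes to form an independent set; your ``at most one rigid child, the other child is $2$-choosable'' condition is essentially an edge-pointer reformulation of the same structural idea. That part of your plan is sound in spirit. However, there is a genuine gap in your core mending step. You justify constant-radius mending by claiming that ``a binary tree path with lists of size $2$ can be relabeled to match any prescribed boundary color a constant number of steps down,'' citing the even-cycle $2$-choosability used in the grid example. This is false for paths: if every node on the path has list $\{1,2\}$ (which happens when every node's off-path child subtree forbids color $3$), the only proper colorings alternate, so changing the color at one end forces changes along the entire path. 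The $2$-choosability of even cycles is an \emph{existence} statement for a closed $4$-cycle, not a local-modifiability statement for long paths, so the mechanism you invoke does not terminate the propagation. The correct source of slack --- and what the paper actually uses --- is different: a monochromatic node can be recolored between its two legal colors \emph{without touching its subtree at all}, and if you need it to take the one forbidden color, you re-unify its two (monochromatic) children on a new common color, which exists because any two $2$-element subsets of a $3$-set intersect; this recoloring stops after one level, giving mending radius $3$.

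Two smaller issues. First, your solvability argument (``run the known coloring algorithm, then annotate'') is incomplete: an arbitrary proper $3$-coloring may have a node both of whose children are rigid/mixed, in which case no assignment of your flexibility bits is consistent, so $\Pi'$ is a strict restriction of $\Pi$ and you must exhibit a specific algorithm whose output satisfies it. The paper handles this with the $4$-coloring plus ``shift down'' routine, which outputs a coloring in which \emph{every} node is monochromatic. Second, you explicitly defer the boundary-consistency analysis (``where the real care in the proof will go''), but that composition argument is precisely the content of the theorem; the paper's bottom-up case analysis (mend the mended node's mixed children first, then the node, then adjust only if the parent is the root) is what replaces it, and your write-up does not yet contain an equivalent.
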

\begin{proof}
	
	In order to be able to find a successful restriction, we use an auxiliary labeling. In particular, we will differentiate between nodes that have children of the same color and those that do not. We call a node \emph{monochromatic}, if its children are colored with the same color, and \emph{mixed} otherwise. We further assume that the leaves of the tree are monochromatic nodes. We define a new problem $\Pi'$ by restricting the original problem $\Pi$ such that the mixed nodes in the partial coloring must form an independent set.
	
	Note that under such a restriction, $\Pi'$ can be still solved in $O(\log^* n)$ rounds. This solution is based on a $4$-coloring algorithm and a subsequent ``shift down'' strategy, where, in one round, all nodes pick the color of their parent node. Thus, the algorithm always computes a solution with only monochromatic nodes.
	
	In order to show that the solution of $\Pi'$ is $O(1)$-mendable, we will consider two possible cases. Assume first that node $u$ that is to be mended is a monochromatic node. In this case, $u$ can always pick the same color as its sibling thus extending the coloring without introducing any mixed nodes. If $u$ is a mixed node, then the children of $u$ must have different colors. If both children are monochromatic nodes, there must exist a common color that both children can pick, and thus turn $u$ into a monochromatic node. If at least one of the children is a mixed node, $u$ cannot also be mixed, as $\Pi'$ would not be satisfied in this case. Let $v$ be this mixed child of $u$. Observe that the children of $v$ must be monochromatic, since the partial coloring satisfies $\Pi'$. Therefore, there exists a common color that the children of $v$ can pick that would turn $v$ into a monochromatic node. This way, we can turn all mixed children of $u$ into monochromatic children and make the node $u$ monochromatic. Once $u$ is monochromatic, it can be colored with two colors and therefore it can always pick a color that is not taken by its parent. In the special case when $u$ is the root of the tree, it might not be possible to turn $u$ into a monochromatic node. In this case, we need to make sure that two of the children of $u$ have the same color, such that $u$ can pick the third color. Such a solution has a mending radius of $3$.
\end{proof}

In Appendix~\ref{app:restrictions:rooted_trees}, we will extend this result to $\Delta$-coloring of $\Delta$-regular trees using the same definition for monochromatic and mixed nodes. We will further show that there exist $O(\log^* n)$-solvable edge-checkable problems that cannot be restricted using monochromatic and mixed nodes to be efficiently mendable. 

\section{Landscape of mendability}\label{sec:landscape}

In this section, we analyze the structure of the landscape of mendability in three settings: (1)~cycles and paths, (2)~trees, and (3)~general bounded-degree graphs. We start with a general result that we can use to prove a gap in all of these three settings.

\subsection{A general gap result}

\begin{theorem}\label{thm:lowgap}
	Let $\mathcal{G}_{G_\Delta}$, $\mathcal{G}_{T_\Delta}$, and $\mathcal{G}_C$,  be, respectively, the family general graphs of maximum degree $\Delta$, the family of trees of maximum degree $\Delta$, and cycles (in this case $\Delta=2$). Let $d(n) = n$ if $\Delta \le 2$, and $d(n) = \log n$ otherwise.
	Let $\mathcal{G}$ be one of the above families. There is no LCL problem $\Pi$ defined on $\mathcal{G}$ that has mending radius between $\omega(1)$ and $o(d(n))$.
\end{theorem}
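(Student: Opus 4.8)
The plan is to show that if an LCL problem $\Pi$ on one of these families has mending radius $f(n) = o(d(n))$, then in fact $f(n) = O(1)$. I would argue by a pumping/contradiction scheme tailored to the structure of the family. Suppose $\Pi$ has checkability radius $r = O(1)$ and a $T$-mendable verifier with $T(n) = f(n)$; assume for contradiction that $f(n) = \omega(1)$ but $f(n) = o(d(n))$. The core idea is that a mend of radius $f(n)$ around a hole in a graph with $n$ nodes is a local object — it lives in a ball of radius roughly $f(n) + r$ — and for $n$ large enough this ball, together with whatever partial labeling it sees, can be "copied" into a much larger graph $G'$ on $n' \gg n$ vertices in which the same local picture occurs but where the mending radius is only allowed to be $f(n') \ll f(n)$ once $n'$ is chosen appropriately. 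If we can always realize the worst-case radius-$f(n)$ scenario inside an arbitrarily large graph, we get a contradiction with $f$ being the true (non-improvable) mending radius: the instance forcing radius $f(n)$ gets embedded into an $n'$-vertex graph, forcing $f(n') \geq f(n)$, but by monotonicity-type reasoning and the fact that $n'$ can be pushed up without bound while $f(n)$ stays fixed, we conclude $f$ must actually be bounded.

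Concretely, for cycles and paths ($\Delta \le 2$, $d(n) = n$): a hole in a cycle of length $n$ together with its radius-$(f(n)+r)$ neighborhood is just a labeled path segment of length $O(f(n)+r) = o(n)$. If no $O(1)$-mend exists, then for arbitrarily large $n$ there is a partial labeling $\lambda_n$ on $C_n$ with a hole at $v$ admitting no $o(f(n))$-mend. But the relevant data — the labels within distance $f(n)+r$ of $v$ — fit inside a segment of length $o(n)$; I would take this segment and insert it into a much longer cycle $C_{n'}$ (padding the rest with any locally-valid partial labeling, e.g.\ all $\bot$), obtaining a partial labeling on $C_{n'}$ whose only feasible mend at $v$ still requires radius $\ge f(n)$ — so $f(n') \ge f(n)$. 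Since this works for every large $n$ and we can take $n'$ as large as we like, the function $f$ cannot grow: if $f$ were unbounded, fixing $n$ with $f(n)$ huge and choosing $n'$ with $f(n') < f(n)$ (possible because... wait, $f$ is nondecreasing? — the subtlety is handled below) gives the contradiction. For trees and general graphs ($\Delta \ge 3$, $d(n) = \log n$): here a ball of radius $f(n)+r = o(\log n)$ around the hole contains $\Delta^{O(f(n))} = n^{o(1)}$ nodes, so again it is a small gadget; the same embedding argument works by grafting the bad radius-$f(n)$ gadget into an arbitrarily large tree (or bounded-degree graph), padding the rest with $\bot$, and concluding the mending radius there is still $\ge f(n)$.

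The main obstacle I anticipate is making the "embedding forces $f(n') \ge f(n)$, hence $f$ is bounded" step rigorous, because mending radius is defined as a *worst case over all graphs in $\mathcal{G}$ of size $n'$* — so to get a contradiction I need the bad instance to genuinely be a size-$n'$ instance, and I need to rule out that the padding I added accidentally created a *shorter* valid mend (e.g.\ a mend that "escapes" through the $\bot$-padded region). The right way around this is to choose the padding so that the hole at $v$ is the *only* place the labeling is nontrivial, and argue that any valid mend $\mu$, when restricted to the radius-$f(n)$ ball, must itself be a valid mend of the original small instance $\lambda_n$ — which it is, because $\psi^*$ is a radius-$r$ local verifier and the neighborhoods agree — so if $\lambda_n$ had no small mend, neither does the padded instance. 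A secondary technical point is the monotonicity of $f$: one should first observe that WLOG the mending radius function can be taken nondecreasing (pad $\mathcal{G}$ appropriately, or replace $f$ by $n \mapsto \max_{m \le n} f(m)$), after which "$f$ unbounded and $f(n') < f(n)$ for some $n' > n$" is genuinely contradictory, closing the argument. I expect the formal write-up to also need the hypothesis $f(n) = o(d(n))$ precisely to guarantee the gadget (of size $\Delta^{O(f(n))}$, resp.\ $O(f(n))$) is strictly smaller than $n$, so that there is "room" to do the embedding into the original $n$-vertex graph family at the same size — making $f(n) \ge f(n)$ trivially true but combined with being able to *also* shrink is where the gap is forced.
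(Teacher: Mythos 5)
Your proposal contains the right local-indistinguishability ingredient (a radius-$\ell$ mend is determined by, and can be transplanted between, instances that agree on a slightly larger ball around the hole --- this is exactly the paper's Lemma~\ref{lem:normalizedmending}), but the overall logical structure does not prove the theorem. Your main construction embeds a hard size-$n$ instance into a larger $n'$-node graph and concludes $f(n') \ge f(n) - O(1)$. That establishes only that the mending radius is essentially nondecreasing, which is perfectly consistent with intermediate growth rates such as $f(n) = \log\log n$ (nondecreasing, $\omega(1)$, and $o(\log n)$). The step ``$n'$ can be pushed up without bound while $f(n)$ stays fixed, hence $f$ is bounded'' is a non sequitur, and your closing remark that the gap is ``where we can also shrink'' points at the right idea but never carries it out.

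The paper's proof is precisely that shrinking argument, and it is not symmetric to your embedding. Given a target size $n$, one uses $f = \omega(1)$ to pick a (possibly much larger) $N$ and an instance $I_N$ in which some node $v$ needs mending radius $f(N) \ge d(n)$. One then takes the ball around $v$ of the \emph{largest} radius $r$ such that it contains at most $cn$ nodes; the volume-versus-radius relation of the graph family is what makes $r = \Omega(n)$ when $\Delta \le 2$ and $r = \Omega(\log n)$ otherwise --- this is where $d(n)$ comes from, and it has nothing to do with assuming $f = o(d(n))$. Writing $\bot$ on the boundary sphere and padding back up to exactly $n$ nodes yields a legitimate size-$n$ instance; by the locality lemma, any mend of radius $\ell < \min\{r, f(N)\} - t$ there would transfer back to $I_N$, contradicting the choice of $I_N$. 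Hence $f(n) = \Omega(d(n))$ for every $n$. To repair your write-up you would need to replace the small-into-large embedding by this large-into-small truncation, with the truncation radius chosen by node count rather than by $f(n)+r$.
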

\begin{proof}
	We prove the claim by showing that a mending complexity of $f(n) = \omega(1)$ implies a mending complexity of $\Omega(d(n))$. In particular, we show that for any $n$, we can construct an instance where the mending complexity is $\Omega(d(n))$. Given $n$, we construct the instance as follows. Let $I_N$ be the instance of size $N$ where there is a node $v$ that requires $f(N) = \Omega(d(n))$ to be mended. Such an instance must exist, since by assumption $f(N) = \omega(1)$. Starting from $I_N$, we now create a smaller instance of size at most $n$ having mending complexity $\Omega(d(n))$. The new instance $I'$ is defined as follows. Consider the largest radius $r$ such that the graph induced by all nodes at distance at most $r$ from $v$ contains at most $n' = c n$ nodes, for some small enough constant $c > 0$. If $\Delta=2$, then note that $r = \Omega(n)$, while if $\Delta> 2$ then $r = \Omega(\log_\Delta n) = \Omega(\log n)$. Hence $r = \Omega(d(n))$. Assign $\bot$ as output for the nodes at distance exactly $r$ from $v$, and let $w$ be one of such nodes. We then create a new instance $I''$ of exactly $n$ nodes by adding $n-n'$ nodes as follows. If $\mathcal{G}$ is the family of cycles we add a path of $n-n'$ nodes such that its endpoints are connected to the two endpoints of $I'$. Otherwise, we connect an arbitrary tree of $n - n'$ nodes to $w$ such that the obtained graph contains exactly $n$ nodes and has still maximum degree $\Delta$. Note that the obtained graph is in the same family of the original graph.
	
	We now argue that the mending radius on $I''$ is at least $\min\{r,f(N)\} - t = \Omega(d(n))$, where $t = O(1)$ is the checkability radius of $\Pi$. Assume for a contradiction that the mending radius is $\ell < \min\{r,f(N)\} - t$. Then, by Lemma \ref{lem:normalizedmending} we could have mended node $v$ in the same way on $I_N$, since the neighborhood of $v$ at distance $\ell+t$ is the same on both $I_N$ and $I''$. This contradicts the fact that mending $I_N$ requires $f(N)$.
\end{proof}

\subsection{Landscape of mendability in cycles}\label{chap:cycles}

Now the case of cycles and paths is fully understood. Theorem \ref{thm:lowgap} implies the following corollaries:

\begin{corollary}
	There is no LCL problem with mending radius between $\omega(1)$ and $o(n)$ on cycles.
\end{corollary}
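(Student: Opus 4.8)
The plan is to obtain the statement as a direct specialization of Theorem~\ref{thm:lowgap}. Concretely, I would take the family $\mathcal{G}$ in that theorem to be $\mathcal{G}_C$, the family of cycles. Since every node of a cycle has degree exactly $2$, we are in the case $\Delta \le 2$, so the threshold function of Theorem~\ref{thm:lowgap} is $d(n) = n$. The theorem then asserts verbatim that no LCL problem defined on cycles has mending radius lying strictly between $\omega(1)$ and $o(n)$, which is exactly the corollary.

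The only point that deserves a sentence of justification is why the gap for cycles extends all the way up to $\Theta(n)$ rather than stopping at $\Theta(\log n)$ as it does for bounded-degree trees and general graphs. This is precisely the one-dimensionality of cycles: in the construction inside the proof of Theorem~\ref{thm:lowgap} one starts from a hard instance $I_N$ with a node $v$ requiring mending radius $f(N) = \omega(1)$, chooses the largest radius $r$ for which the ball of radius $r$ around $v$ has at most $cn$ nodes, and pads the instance back up to $n$ nodes within the same family; in a cycle that ball contains $\Theta(r)$ nodes, so $r = \Theta(n)$, and Lemma~\ref{lem:normalizedmending} forces the padded instance to still require mending radius $\min\{r,f(N)\} - O(1) = \Omega(n)$.

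There is essentially no obstacle here: the entire argument is already carried out in Theorem~\ref{thm:lowgap}, and the corollary follows by instantiating $\Delta = 2$. If one wished to complement the corollary and conclude that cycles have exactly two mendability classes, one would additionally invoke the straightforward facts that some LCL on cycles is $O(1)$-mendable and that the mending radius on a cycle is always $O(n)$, but neither is needed for the stated gap itself.
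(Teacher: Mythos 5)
Your proposal is correct and matches the paper exactly: the corollary is obtained as an immediate instantiation of Theorem~\ref{thm:lowgap} with $\mathcal{G} = \mathcal{G}_C$, where $\Delta = 2$ gives $d(n) = n$, and your remark about the radius-$r$ ball containing only $\Theta(r)$ nodes in a cycle is precisely the step inside that theorem's proof that yields $r = \Omega(n)$ in this case. Nothing further is needed.
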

\begin{corollary}
	There is no LCL problem with mending radius between $\omega(1)$ and $o(n)$ on paths.
\end{corollary}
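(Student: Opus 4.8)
The plan is to obtain this as an immediate special case of the general gap result, Theorem~\ref{thm:lowgap}. The family of paths is exactly the family $\mathcal{G}_{T_2}$ of trees of maximum degree~$2$ (a connected acyclic graph of maximum degree~$2$ is a path), and since $\Delta = 2 \le 2$ we have $d(n) = n$ in the notation of that theorem. Hence Theorem~\ref{thm:lowgap}, instantiated with $\mathcal{G} = \mathcal{G}_{T_2}$, directly states that no LCL problem on paths has mending radius that is simultaneously $\omega(1)$ and $o(n)$, which is precisely the claim. So strictly speaking nothing further is needed; below I sketch why the construction in the proof of Theorem~\ref{thm:lowgap} specializes cleanly to the path family, so that one can also read the corollary off that proof directly.

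Suppose for contradiction that $\Pi$ is an LCL on paths whose mending radius $f$ satisfies $f(n) = \omega(1)$ but $f(n) = o(n)$; I would fix an arbitrary $n$ and build an $n$-node path instance requiring $\Omega(n)$ mending. Since $f$ is unbounded, pick $N$ so large that $f(N) \ge n$, together with an instance $I_N$ — a long path — and an accepted partial labeling with a node $v$ that genuinely needs a radius-$f(N)$ mend. Let $r$ be the largest radius such that the ball $B$ of radius $r$ around $v$ in $I_N$ has at most $cn$ nodes, for a small enough constant $c>0$; in a path this ball has at most $2r+1$ nodes, so $r = \Omega(n)$, and in particular $r < n \le f(N)$, whence $\min\{r,f(N)\} = r$. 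Relabel with $\bot$ every node at distance at least $r$ from $v$ (in particular the at most two nodes $w$ at distance exactly $r$), pick one such endpoint $w$, and attach at $w$ a fresh path of $\bot$-labeled nodes so that the resulting path $I''$ has exactly $n$ nodes; $I''$ is still a path since $w$ has degree~$1$, and the partial labeling is accepted by $\psi^*$ because every node now has a $\bot$ within distance~$r$.

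Finally, let $t = O(1)$ be the checkability radius of $\Pi$. If $v$ admitted a mend of radius $\ell < r - t$ in $I''$, then by Lemma~\ref{lem:normalizedmending} that mend depends only on the radius-$(\ell+t)$ neighborhood of $v$, which is identical in $I''$ and in $I_N$ — both equal the corresponding sub-path inside $B$ — so the same mend would work in $I_N$, contradicting that $v$ needs radius $f(N) \ge r > \ell + t$ there. Hence the mending radius of $\Pi$ on the $n$-node path is at least $r - t = \Omega(n)$, contradicting $f(n) = o(n)$. The only points that need care are the order of quantifiers — $N$ must be chosen large relative to $n$ so that $f(N)$ dominates the extracted radius $r$ — and the observation that extending at the degree-$1$ endpoint keeps the instance inside the path family; both are routine here, which is why this corollary really is just a direct reading of Theorem~\ref{thm:lowgap}.
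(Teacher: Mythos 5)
Your proposal is correct and matches the paper exactly: the paper also obtains this corollary directly from Theorem~\ref{thm:lowgap} by viewing paths as the family $\mathcal{G}_{T_2}$ of trees of maximum degree $2$, for which $d(n)=n$. Your additional specialization of the theorem's construction to paths is a faithful (and sound) re-tracing of that proof, so nothing further is needed.
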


That is, there are only two possible classes: $O(1)$-mendable problems and $\Theta(n)$-mendable problems.

\subsection{Landscape of mendability in trees}

In the case of trees, Theorem \ref{thm:lowgap} implies a gap between $\omega(1)$ and $o(\log n)$:
\begin{corollary}
	There is no LCL problem with mending radius between $\omega(1)$ and $o(\log n)$ on trees.
\end{corollary}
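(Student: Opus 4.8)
The plan is to obtain this corollary as an immediate instantiation of Theorem~\ref{thm:lowgap}. That theorem is already stated uniformly for three graph families at once—general bounded-degree graphs, bounded-degree trees, and cycles—and asserts that for the relevant function $d(n)$ (which is $n$ when the degree bound $\Delta \le 2$ and $\log n$ otherwise) no LCL problem on the family in question can have mending radius strictly between $\omega(1)$ and $o(d(n))$. So essentially all the work—the instance-shrinking construction together with Lemma~\ref{lem:normalizedmending}—has already been done, and here it only remains to read off the right specialization.

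Concretely, I would fix the degree bound $\Delta$ and take $\mathcal{G} = \mathcal{G}_{T_\Delta}$, the family of trees of maximum degree $\Delta$. For $\Delta \ge 3$ we have $d(n) = \log n$ by definition, so Theorem~\ref{thm:lowgap} applied to $\mathcal{G}_{T_\Delta}$ states precisely that there is no LCL problem on that family with mending radius between $\omega(1)$ and $o(\log n)$, which is the claim. The one point worth a sentence is the degenerate case $\Delta \le 2$: then the family of trees of maximum degree $\Delta$ is just (a subfamily of) paths, where Theorem~\ref{thm:lowgap} gives the stronger gap up to $o(n)$ and hence a fortiori the gap up to $o(\log n)$; this case is already covered by the corollary for paths stated above.

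I do not expect any genuine obstacle here, since the entire content is carried by Theorem~\ref{thm:lowgap}; the only thing to be careful about is matching notation and correctly invoking the branch $d(n) = \log n$ of that theorem for trees with $\Delta \ge 3$. No additional ingredients are needed.
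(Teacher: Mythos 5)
Your proposal is correct and matches the paper exactly: the paper also obtains this corollary as a direct instantiation of Theorem~\ref{thm:lowgap} with $\mathcal{G} = \mathcal{G}_{T_\Delta}$ and $d(n) = \log n$, giving no separate proof. Your extra remark on the degenerate case $\Delta \le 2$ (where trees are paths and the stronger $o(n)$ gap applies a fortiori) is a harmless and sensible addition.
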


One cannot make the gap wider: there are problems that are $\Theta(\log n)$-mendable, for example, $\Delta$-coloring \cite{panconesi95delta}. However, we can prove another gap above $\Theta(\log n)$:
\begin{theorem}\label{thm:tree-gap-logn-n}
	There is no LCL problem $\Pi$ with mending radius between $\omega(\log n)$ and $o(n)$ on trees.
\end{theorem}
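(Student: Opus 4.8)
The plan is to show that mending radius $\omega(\log n)$ forces mending radius $\Omega(n)$ on trees, i.e., no intermediate behavior is possible. The strategy mirrors the proof of Theorem~\ref{thm:lowgap}: we want to take an instance witnessing super-logarithmic mending cost and ``fold'' or ``truncate'' it into an instance of controlled size $n$ that still requires linear mending. The obstacle compared to the cycle case is that in a tree the ball of radius $r$ around a node $v$ can contain up to $\Delta^r$ nodes, so radius $\omega(\log n)$ around a bad node may already involve a super-polynomial number of vertices; we cannot simply ``cut at radius $r$'' and expect to land at $n$ vertices while preserving a large radius. So the first step is to argue that if $\Pi$ has mending radius $f(n) = \omega(\log n)$, then in the witnessing instance $I_N$ (where node $v$ needs mending distance $f(N)$), there must be a \emph{long path} emanating from $v$ along which the mend is forced to propagate. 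Intuitively, the only way a patch can be forced to travel distance $\omega(\log n)$ in a tree without being ``cut short'' by a $\bot$-labeled node is along a long induced path; more precisely, I would look at a shortest path from $v$ to the farthest node whose label must change in any valid mend, and argue (using Lemma~\ref{lem:normalizedmending} and the finiteness of the LCL description) that along this path the local neighborhoods must look essentially path-like, since any branching vertex with a large subtree hanging off could be ``replaced'' by a small stub without affecting the obstruction near $v$ — otherwise we would get an intermediate-radius instance, contradicting the already-established gap $\omega(1)$ vs.\ $o(\log n)$ only indirectly, so this needs care.

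The cleaner route, which I expect to be the actual argument, is a pigeonhole/pumping argument on this forced path. Suppose for contradiction that $\Pi$ has mending radius $f(n)$ with $f(n) = \omega(\log n)$ and $f(n) = o(n)$. Fix $N$ and the bad instance $I_N$ with node $v$ requiring distance $f(N)$; let $W = v = w_0, w_1, \dots, w_m$ be a geodesic to the farthest node that is relabeled in some (hence every) valid mend, so $m \ge f(N)$. Since $m = \omega(\log N)$ but $\Pi$ has only a constant number $c$ of distinct labeled radius-$r$ neighborhood types, by pigeonhole two indices $i < j$ along this path (with $j - i$ bounded by a constant, or at least $j-i = O(1)$ after choosing them among the first $c\cdot\poly$ many) have isomorphic rooted neighborhoods, including the partial labeling. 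The hard part is then to show that the segment between $w_i$ and $w_j$ can be \emph{excised} — contracting the path — to yield a smaller instance in which $v$ still requires mending distance $m - (j-i)$; iterating, we drive the size down to $O(n)$ while keeping mending radius $\ge f(N) - O(\log N) = \Omega(f(N))$, and since $f(N) = \omega(\log n)$ this is $\omega(\log n)$, in particular $\Omega(\log n)$, which is fine — but we actually need $\Omega(n)$. So the excision must instead be done in the \emph{other} direction: we pump, not contract.

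Here is the version I believe works. Because the mend is forced along the path $W$ and two neighborhoods along it repeat, the segment between the repeated vertices can be \emph{repeated arbitrarily many times} (a standard automata-style pumping on the diagram of the LCL restricted to this path) to produce, for every $n' \ge N$, an instance $I_{n'}$ of size $\Theta(n')$ in which $v$ still cannot be mended without relabeling a node at distance $\Omega(n')$ from it — because the two $2$-colorings (or more generally, the two incompatible forced local configurations) on either side of the hole persist and remain incompatible no matter how long the inserted segment is. This is exactly the mechanism of Theorem~\ref{thm:path_logstar_linear}: an incompatibility straddling a hole that can only be resolved by relabeling everything on one side. Thus $\Pi$ has mending radius $\Omega(n)$ on trees, contradicting $f(n) = o(n)$. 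The main obstacle, and the step requiring the most care, is justifying that the forced-relabeling path $W$ genuinely ``looks like'' a path locally — i.e., that we may ignore the subtrees hanging off of it — so that the pumping argument on a one-dimensional automaton applies; this is where I would lean on Lemma~\ref{lem:normalizedmending} (mending depends only on the radius-$(f(n)+r)$ ball) together with the observation that if some hanging subtree mattered, we could prune it to a constant-size stub and obtain a \emph{smaller} witness, eventually reducing to a genuinely path-like witness without changing the order of magnitude of the mending radius.
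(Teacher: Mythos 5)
Your plan goes in the opposite direction from the paper's proof: you try to take a witness instance with mending radius $\omega(\log n)$ and pump it up to a linear lower bound, whereas the paper proves the implication constructively, showing that $o(n)$-mendability yields an explicit $O(\log n)$-radius mending procedure. The paper's procedure is built on a modified rake-and-compress decomposition (compress only paths of length at least $2k+1$), recursively pushes $\bot$-labels down through $O(\log n)$ layers, and invokes the $o(n)$-mendability hypothesis only on ``virtual'' subinstances consisting of a long path with constant-size \emph{equivalent} trees attached, where a Theorem~\ref{thm:lowgap}-style compactness argument gives a constant mending radius per layer.

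The gap in your proposal is precisely the step you flag twice as ``requiring the most care'' and then do not carry out: the claim that the forced relabeling lives on a single geodesic $W$ that ``genuinely looks like a path,'' so that hanging subtrees can be pruned to stubs and a one-dimensional pumping argument applies. In trees this is false as stated: for $\Theta(\log n)$-mendable problems such as $\Delta$-coloring, the hard instances are rigid \emph{branching} subtrees in which the relabeling is forced down every root-to-leaf path simultaneously (see the counterexamples in Appendix~\ref{app:restrictions:rooted_trees}); the obstruction is not carried by any single path, and pruning a hanging subtree to a stub can destroy it unless the stub provably imposes the same completable labelings near the attachment point. Making that pruning sound is exactly the content of the paper's $c(S)$ construction (replacing each subtree $S$ by a smallest tree $S'$ with $L_{2r}(S,u)=L_{2r}(S',u')$), and controlling how often one must descend into a pruned subtree and recurse is exactly what the layered decomposition and the separation property are for. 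Your pumping step is also only asserted by analogy with Theorem~\ref{thm:path_logstar_linear}: for a general LCL you would still need to prove that inserting copies of a repeated segment preserves the property that any mend must reach the far end, and that this yields witnesses for (essentially) all $n$, not just one arithmetic progression of sizes. As written, the proposal is a plausible research plan whose two hardest steps are named but not proved, so it does not constitute a proof.
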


\begin{proof}[Proof sketch.]
	We construct a recursive mending algorithm based on a modified version of the rake and compress decomposition of Miller and Reif~\cite{Miller1985}. Instead of compressing all paths, we only compress sufficiently long paths.
	
	The mending proceeds by recursively pushing the $\bot$ labels down along the decomposition into the compress paths of lower layers. These paths form subinstances that can be mended by changing labels inside at most constant distance along the path. This step relies on the $o(n)$-mendability of $\Pi$. 
	
	The mending algorithm finishes after $O(\log n)$ recursions and all modified labels are within distance $O(\log n)$ of the initial node to be mended.
\end{proof}

The full proof is presented in Appendix~\ref{app:tree-gap-proof}.

\subsection{Landscape of mendability in general graphs}
The mendability landscape on general graphs looks different than the one on cycles and trees. In fact, while for the latter cases we have got two wide gaps, it seems that in the case of general graphs the landscape of mendability is denser.
In this section, we will show that, in general graphs, 
\begin{itemize}[noitemsep]
	\item there \emph{is} a gap in the mendability landscape between $\omega(1)$ and $o(\log n)$, and
	\item there \emph{is not} a gap between $\omega(\log n)$ and $o(n)$. 
\end{itemize} 
As a proof of concept of the second point, we will present a problem that has mending radius $\Theta(\sqrt{n})$, but we believe that the landscape is much denser. The first point follows directly as a corollary of Theorem \ref{thm:lowgap}.

\begin{corollary}
	There are no LCL problems with mending radius between $\omega(1)$ and $o(\log n)$ on general graphs.
\end{corollary}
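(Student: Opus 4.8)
The plan is to obtain this statement as an immediate instantiation of Theorem~\ref{thm:lowgap}. Applying that theorem with $\mathcal{G} = \mathcal{G}_{G_\Delta}$, the family of general graphs of maximum degree $\Delta$, it already asserts that no LCL problem on $\mathcal{G}$ has mending radius strictly between $\omega(1)$ and $o(d(n))$, where $d(n) = \log n$ if $\Delta > 2$ and $d(n) = n$ if $\Delta \le 2$. So the only thing left is to identify which value of $d$ governs ``general graphs'' in the statement of the corollary.

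First I would note that the relevant regime is $\Delta \ge 3$. Indeed, for $\Delta \le 2$ a bounded-degree graph is a disjoint union of paths and cycles, and Theorem~\ref{thm:lowgap} gives the stronger gap between $\omega(1)$ and $o(n)$ there---exactly the two corollaries stated just above for paths and cycles---so those problems are in particular $o(\log n)$-gapped as well. For any fixed $\Delta \ge 3$ we have $d(n) = \log n$, and since by definition an LCL problem lives on a bounded-degree family with some fixed maximum degree $\Delta$, Theorem~\ref{thm:lowgap} directly yields that there is no LCL problem on general graphs with mending radius between $\omega(1)$ and $o(\log n)$.

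There is essentially no obstacle to overcome here; the entire content sits inside Theorem~\ref{thm:lowgap}. If pressed to highlight where the $\log n$ barrier originates, it is the shrinking step in that proof: when $\Delta \ge 3$, the largest radius $r$ for which the radius-$r$ ball around the bad node contains at most $cn$ nodes satisfies $r = \Omega(\log_\Delta n) = \Omega(\log n)$, and Lemma~\ref{lem:normalizedmending} ensures that the mending procedure on the shrunk instance $I''$ cannot distinguish it from the original $I_N$ within distance $\ell + t$; this forces the mending radius on $I''$ to be $\Omega(\log n)$ whenever the original mending complexity is $\omega(1)$.
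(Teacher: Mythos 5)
Your proposal is correct and matches the paper exactly: the corollary is stated there as an immediate consequence of Theorem~\ref{thm:lowgap} applied to the family of bounded-degree general graphs, with $d(n)=\log n$ in the relevant case $\Delta>2$. Your additional remarks on the $\Delta\le 2$ case and on the origin of the $\log n$ bound in the shrinking argument are accurate but not needed.
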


In Appendix~\ref{app:sqrtn-general} we prove the following result that shows that the landscape of mendability in general graphs is more diverse than the landscape in trees.
\begin{restatable}{theorem}{thmreplacesqrtn}\label{thm:replace-sqrtn-general}
	There are LCL problems that are $\Theta(\sqrt{n})$-mendable on general bounded-degree graphs. 
\end{restatable}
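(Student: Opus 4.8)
The plan is to construct an explicit LCL problem on general bounded-degree graphs whose mending radius is $\Theta(\sqrt n)$. The natural approach is to design a problem that forces "witness structures" of length $\Theta(\sqrt n)$: the solution should be locally checkable and locally solvable, but any partial solution with a hole can force a chain of corrections whose length is governed by the square root of the number of nodes. The key idea I would pursue is a problem in which a valid solution encodes a $\sqrt n \times \sqrt n$ grid-like structure (or a collection of paths of length $\sqrt n$ that are "counted off" against each other), so that to patch a hole one must rewrite an entire row/column of length $\sqrt n$, but never more. Concretely, I would take the graph family to be arbitrary bounded-degree graphs and have the LCL require either (a) a trivially completable "default" labeling almost everywhere, together with (b) the option of embedding a consistent coordinate system on a subgrid; the two options would be made incompatible in the same way as in Theorem~\ref{thm:path_logstar_linear}, but now the conflict propagates in two dimensions.

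**First** I would give the upper bound: show $\Pi$ is $O(\sqrt n)$-mendable. This amounts to exhibiting, for every partial solution accepted by $\psi^*$ and every hole $v$, a patch of radius $O(\sqrt n)$. The strategy is: around the hole, clear an $O(\sqrt n)$-radius ball, and within that ball rewrite everything to the "default" labeling, which by construction is always locally consistent regardless of the boundary — exactly the trick used in Theorem~\ref{thm:cycle_mend_hard_to_easy} and in the $\{1,3,4\}$-orientation example, but here the default region must be large enough ($\Theta(\sqrt n)$ radius) to "absorb" any embedded coordinate structure. The delicate point is ensuring the default labeling is compatible with an arbitrary boundary; I would design the labels so that a special "reset/boundary" label is always a legal neighbor of anything, so the ball can be filled from the inside out. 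Combined with Lemma~\ref{lem:normalizedmending}, this also yields the $O(\log^* n)$ (or at least $\polylog n$) solvability that justifies calling it an LCL of interest.

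**Next** comes the lower bound, which is the heart of the theorem: constructing, for infinitely many $n$, a partial solution with a hole that admits no $o(\sqrt n)$-mend. Following the template of Theorem~\ref{thm:path_logstar_linear} and the general gap argument of Theorem~\ref{thm:lowgap}, I would take a graph on $n$ nodes consisting of a $\Theta(\sqrt n) \times \Theta(\sqrt n)$ grid (so that $\Theta(\sqrt n)$ is simultaneously the side length and the graph-theoretic "radius" needed to escape it), decorated so its total node count is exactly $n$. I would place two incompatible coordinate systems on the left and right halves, meeting at a central hole, so that any patch using the coordinate labels must propagate a correction all the way across one half — distance $\Theta(\sqrt n)$ — while any patch using the default labels must also overwrite $\Theta(\sqrt n)$-many nodes because the coordinate structure is "locked in" by the far boundary (the grid's outer face, which cannot be relabeled without exceeding radius $\Theta(\sqrt n)$). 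The brute-force argument via Lemma~\ref{lem:normalizedmending} then shows that a smaller mending radius would let us mend the node without seeing the conflicting far side, a contradiction.

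**The main obstacle** I anticipate is the lower bound's rigidity: I must guarantee that a partial solution genuinely cannot be patched cheaply by some clever third option — in particular that switching to the default labeling near the hole doesn't somehow quietly resolve the far-away conflict with a small patch. This requires the problem to have the property that the *global* structure (which coordinate system is in force) is determined by an $o(\sqrt n)$-radius-invisible far boundary, so that no local patch can "see" enough to reconcile the two sides; engineering the label set and constraints so that this holds while keeping the problem both finite and $\polylog n$-solvable is the technically demanding part, and it is where I expect to lean on a pumping/surgery argument analogous to the one in the proof of Theorem~\ref{thm:lowgap}. The details of this construction are carried out in Appendix~\ref{app:sqrtn-general}.
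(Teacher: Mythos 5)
Your high-level target (a grid-like witness structure so that $\sqrt{n}$ is simultaneously the side length and the mending radius) matches the paper's, but both load-bearing steps are left unresolved, and the specific mechanisms you sketch would not work as stated. For the upper bound you propose a ``reset/boundary'' label that is ``always a legal neighbor of anything''; if such a label existed, the hole itself could be filled with it and the problem would be $O(1)$-mendable, destroying your own lower bound. The paper avoids this tension with a different mechanism: the grid structure is encoded in the \emph{input} labels, the outputs are $\{\lzero,\lflag,\lleft,\lup,\lright\}$, a node may output $\lflag$ only where the local grid constraints are violated, and pointer chains must either terminate at a $\lflag$/$\bot$ node or close into a cycle that never uses the $\ldown$ direction. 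The $O(\sqrt n)$ upper bound then comes from a pigeonhole argument --- following $\lup$/$\lright$ edges for $c\sqrt n$ steps in a locally valid instance enumerates more than $n$ positions, so a wrap-around cycle of length $O(\sqrt n)$ must exist and can be written as the patch --- not from a universally compatible default label.

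For the lower bound, your ``two incompatible coordinate systems meeting at a hole'' does not explain why a small patch cannot simply terminate both systems locally with default labels near the hole; this is exactly the failure mode you name as ``the main obstacle'' and then defer to future engineering, but it is the entire content of the theorem. The paper's instance is a balanced torus with a half-row of $\lleft$ pointers ending at the $\bot$ node and $\lzero$ everywhere else; because $\lzero$ forbids any neighbor pointing at it, the chain cannot be truncated without rewriting all $\sqrt n/2$ of its nodes, and the only alternative is to extend it into a wrap-around cycle of length at least $\sqrt n$. You also do not address how the problem remains well-defined and $O(\sqrt n)$-mendable on \emph{arbitrary} bounded-degree graphs rather than only on grids; the $\lflag$ label and the local characterization of grid-like inputs in the paper exist precisely for this purpose. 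As written, your proposal is a plan whose critical components are the parts still to be supplied.
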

This is merely one example; we conjecture that there are $\Theta(n^\alpha)$-mendable problems at least for all rational numbers $0 < \alpha < 1$.

\subsection{From sublinear mendability to logarithmic solvability}\label{sec:corollaries}

We now give an example of a result that we can easily derive by putting together results from Sections \ref{sec:mendability_to_solvability} and \ref{sec:landscape}:
\begin{corollary}
	Let $\Pi$ be an LCL problem defined on trees with $o(n)$ mending radius. Then $\Pi$ can be solved in $O(\log n)$ rounds in the LOCAL model. 
\end{corollary}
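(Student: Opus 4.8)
The plan is to compose three facts that have been (or will be) established elsewhere in the paper, following the outline of the ``An application'' discussion in Section~\ref{sec:contrib}: the mending gap theorem for trees, the reduction from mending to network decomposition, and the known speedup in the LCL complexity landscape on trees.

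\emph{Step 1 (from $o(n)$ mending to $O(\log n)$ mending).} First I would apply Theorem~\ref{thm:tree-gap-logn-n}: since no LCL on trees has mending radius strictly between $\omega(\log n)$ and $o(n)$, the hypothesis that $\Pi$ has $o(n)$ mending radius forces its mending radius to be $O(\log n)$. If it is in fact $O(1)$, then the corollary stated right after Theorem~\ref{thm:mending_to_nd} already gives an $O(\log^* n) = O(\log n)$ algorithm and we are done; so assume the mending radius is $k = \Theta(\log n)$ and let $r = O(1)$ be the checkability radius of the corresponding verifier.

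\emph{Step 2 (from $O(\log n)$ mending to a $\polylog n$-round algorithm).} Next I would invoke Theorem~\ref{thm:mending_to_nd}, which solves $\Pi$ in $O(cd(k+r))$ rounds of $G$ given a $(c,d)$-network decomposition of $G^{2k+r}$. Since $2k+r = O(\log n)$, one communication round in $G^{2k+r}$ is simulated by $O(\log n)$ rounds in $G$; note that $G^{2k+r}$ has only $n$ nodes, though its maximum degree may now be as large as $\Delta^{O(\log n)} = \poly(n)$ because $k$ is no longer constant. This is harmless: the deterministic network-decomposition algorithm of~\cite{Rozhon2019} computes a $(O(\log n), O(\log n))$-network decomposition of any $n$-node graph in $\polylog n$ rounds, and running it on $G^{2k+r}$ while communicating over the base graph $G$ inflates the round count by only an $O(\log n)$ factor, so the decomposition is obtained in $\polylog n$ rounds of $G$. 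Substituting $c = d = O(\log n)$ and $k+r = O(\log n)$ into Theorem~\ref{thm:mending_to_nd} then gives an $O(\log^3 n)$-round, hence $\polylog n$-round, algorithm for $\Pi$.

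\emph{Step 3 (from $\polylog n$ to $O(\log n)$).} Finally, $\Pi$ is an LCL on trees solvable in $\polylog n = n^{o(1)}$ rounds, and the complexity-landscape results for trees~\cite{Chang2019} both exclude LCL complexities strictly between $\omega(\log n)$ and $n^{o(1)}$ and supply the matching speedup; applying it turns the $\polylog n$-round algorithm into an $O(\log n)$-round algorithm, which is exactly the claim. The whole argument is a chaining of results already available to us, so I do not anticipate a genuine obstacle; the only point that deserves an explicit line rather than a bare citation is the claim in Step~2 that a polylogarithmic network decomposition of the power graph $G^{2k+r}$ with $k = \Theta(\log n)$ is still computable in $\polylog n$ rounds of $G$ — here one uses that $G^{2k+r}$, although no longer of bounded degree, still has only $n$ nodes and that each of its rounds is realized by $O(\log n)$ rounds of $G$, so the $\polylog n$ bound of~\cite{Rozhon2019} survives the simulation with merely an extra logarithmic factor.
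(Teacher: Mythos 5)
Your proposal is correct and follows exactly the same three-step chain as the paper's own proof: Theorem~\ref{thm:tree-gap-logn-n} to collapse $o(n)$ to $O(\log n)$ mending radius, Theorem~\ref{thm:mending_to_nd} with a Rozho\v{n}--Ghaffari network decomposition of the $O(\log n)$-power graph to get a $\polylog n$-round algorithm, and the Chang--Pettie gap on trees to speed this up to $O(\log n)$. If anything, you are slightly more careful than the paper in spelling out why computing the decomposition on $G^{2k+r}$ (which is no longer bounded-degree) still costs only $\polylog n$ rounds of the base graph.
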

\begin{proof}
	By \autoref{thm:tree-gap-logn-n}, on trees, a mending radius of $o(n)$ implies a mending radius of $O(\log n)$. We prove that a mending radius of $O(\log n)$ implies an algorithm running in $\polylog n$ rounds. Due to a known gap in the complexity hierarchy of LCLs on trees, this implies an algorithm running in $O(\log n)$~\cite[Theorem 3.21]{Chang2019}.
	
	Let $k = O(\log n)$ be the mending radius of $\Pi$, and $r$ be its checkability radius.
	We start by computing a $(c,d)$-network decomposition of $G^{2k+r+1}$, for $c=d=O(\log n)$, that can be done in $\polylog n$ time~\cite{Rozhon2019}. We then apply \autoref{thm:mending_to_nd} to solve $\Pi$ in $\polylog n$ rounds.
\end{proof}

\section*{Acknowledgments}
This project has received funding from the European Union's Horizon 2020 research and innovation programme under the Marie Sk{\l}odowska-Curie grant agreement No 840605. This work was supported in part by the Academy of Finland, Grants 314888 and 333837. The authors would also like to thank David Harris and the  anonymous reviewers for their very helpful comments and feedback on previous versions of this work.

\bibliography{bibliography}

\newpage
\appendix

\section{\{1,3,4\}-Orientation in two-dimensional grids}\label{app:134orientation}
In this section, we will consider the $\{1,3,4\}$-orientation problem in two-dimensional grids. This problem is defined as follows: given a two-dimensional grid, orient the edges of the grid such that each node $v$ of the grid has $\operatorname{indegree}(v)\in \{1,3,4\}$. 
In other words, we forbid the nodes to only have outgoing edges or to have two incoming and two outgoing edges. 

In the following lemma, we will show that this problem is $2$-mendable.

\begin{lemma}\label{lem:mending_134orientation}
	The $\{1,3,4\}$-orientation problem in two-dimensional grids is $2$-mendable.
\end{lemma}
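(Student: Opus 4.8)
The plan is to show that for any $3\times 3$ induced subgrid $H$ of a two-dimensional grid $G$, and for any orientation of the edges of $G$ not internal to $H$ (i.e., edges with at least one endpoint outside $H$), we can orient the $12$ internal edges of $H$ so that every one of the $9$ nodes of $H$ has indegree in $\{1,3,4\}$. Given this, mending works as follows: to patch a hole at a node $v$, take $H$ to be a $3\times 3$ block containing $v$ (pushing $H$ toward the interior if $v$ is near the boundary, so that $H$ stays inside $G$); the edges touched are exactly the internal edges of $H$, all within distance $2$ of $v$, and the resulting orientation is globally feasible because nodes outside $H$ keep their indegree and nodes inside $H$ are handled by the claim. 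This matches the informal description already given in the introduction, so the only real content is the combinatorial claim about $H$.

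The main step is a case analysis organized by the indegree contributed to each boundary node of $H$ by its \emph{external} edges (the edges of $G$ incident to $H$ but not internal to $H$). The four corner nodes of $H$ each have $2$ external edges, the four edge-midpoint nodes each have $1$ external edge, and the center node has $0$ external edges. So the adversary's choice is captured by a vector of external indegrees: a value in $\{0,1,2\}$ for each corner and in $\{0,1\}$ for each edge-midpoint, hence at most $3^4 \cdot 2^4 = 1296$ cases (as the paper notes). First I would exploit symmetry: the $3\times 3$ grid has the dihedral group of order $8$ acting on it, and also a global reversal of all orientations is available, cutting the cases down substantially. Then I would give an explicit orientation rule. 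A clean approach is: orient the four "spoke" edges from the center to the edge-midpoints all \emph{outward} (so the center has indegree $0$ from these) — wait, that forces center indegree $0$, which is forbidden; instead orient at least one spoke inward. A better uniform choice: direct the inner $4$-cycle (the edges among the four edge-midpoints — note in a $3\times3$ grid these are not adjacent, so actually the "inner" structure is the center connected to the four midpoints, a star $K_{1,4}$, plus each midpoint connected to its two neighboring corners, plus... let me recount). The $3\times 3$ grid has $12$ edges forming the union of $4$ unit squares; the center node has degree $4$, each midpoint degree $3$, each corner degree $2$. The strategy I would actually pursue: first orient the $4$ center-spokes so the center gets indegree $1$ or $3$; then for each of the four unit squares, we still have freedom in its two "corner" edges, and we use that freedom to fix each corner node's indegree (a corner sees exactly one internal corner-edge from each of... no, a corner of $H$ lies in one unit square and has exactly $2$ internal edges, both in that square). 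So each corner's internal indegree is determined by orienting its $2$ edges, independently across corners, giving internal indegree $\in\{0,1,2\}$ freely — combined with external indegree $\in\{0,1,2\}$, we can always hit a total in $\{1,3,4\}$ except we must avoid total $\in\{0,2\}$: if external is $0$ we need internal $1$; if external is $2$ we need internal $1$ or $2$; if external is $1$ we need internal $0$ or $2$. Each is achievable. The midpoints are the delicate ones, since a midpoint has $3$ internal edges (one spoke, two edges to adjacent corners) but those corner-edges are shared with the corner constraints.

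The hard part, and where I expect the real work to be, is reconciling the midpoint constraints with the corner constraints simultaneously, since the eight corner-edges are exactly the edges shared between corner nodes and midpoint nodes. I would handle this by choosing spoke orientations and corner-edge orientations together: parametrize the two corner-edges at each midpoint, observe each such edge is the \emph{unique} internal edge at that midpoint shared with a given corner, and set up the $8$ corner-edges plus $4$ spokes ($12$ binary variables) against $9$ indegree constraints, then argue feasibility either by a short direct construction (e.g., pick all spokes pointing toward the center giving center indegree $4$, then orient the outer $8$-cycle of corner-edges as a consistently directed cycle, which gives every midpoint internal indegree exactly... each midpoint has two corner-edges on this cycle, in a directed cycle one in one out, so internal indegree $1$ from corner-edges plus $0$ from its spoke $= 1$, always fine; and each corner has internal indegree $1$ from the cycle, so total external$+1$, which is in $\{1,2,3\}$ — we must kill the bad value $2$, i.e. corners with external indegree $1$) and then locally reversing individual corner-edges to repair those corners without breaking midpoints (reversing one edge of the directed $8$-cycle changes that edge's head and tail indegrees by $\pm 1$: it flips one corner and one midpoint; so to fix a corner we'd damage a midpoint from $1$ to $0$ or $2$ — both still... $0$ is bad, $2$ is fine). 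This kind of local-repair argument, pushed carefully, should close it; alternatively I would simply assert that a computer check of all $1296$ cases (or the $\approx 1296/16$ cases modulo symmetry) verifies the claim, as the paper explicitly endorses, and present the human-readable construction above as the conceptual explanation. I would present the spoke-plus-directed-outer-cycle construction with explicit corner repairs as the main proof, noting the brute-force verification as a backup.
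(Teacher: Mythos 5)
Your reduction to the $3\times 3$ claim (for any orientation of the external edges, the $12$ internal edges of $H$ can be oriented so that all $9$ nodes are valid) is the right frame, and your $1296$-case count matches the paper's computer-assisted alternative proof. But the human-readable construction you designate as your \emph{main} proof has a concrete gap. In your base orientation (all four spokes pointing toward the center, outer $8$-cycle consistently directed), each midpoint gets internal indegree exactly $1$, and you declare this ``always fine''---but a midpoint also has one \emph{external} edge, so its total indegree is $1$ or $2$, and $2$ is forbidden. Thus every midpoint whose external edge is incoming is broken in the base configuration, not just the corners with external indegree $1$ that you flag. Your repair analysis then conflates internal and total indegrees (``$0$ is bad, $2$ is fine'' is stated for a midpoint's internal count, but whether the resulting \emph{total} lies in $\{1,3,4\}$ depends on the external edge), and you never establish that the required corner repairs and midpoint repairs can be satisfied simultaneously---each cycle-edge flip couples one corner to one midpoint, and spoke flips couple midpoints to the center (flipping $2$ or $4$ spokes drives the center to the forbidden indegree $2$ or $0$). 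This is exactly the ``hard part'' you acknowledge, and it is left unresolved; the fallback of asserting that a brute-force check would succeed is not a proof you have carried out.

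For comparison, the paper avoids rebuilding $H$ from scratch. It keeps the given orientation, orients the edges at $v$ arbitrarily, and then \emph{flips} edges: if $v$ has a neighbor of indegree $4$, flip that one edge; else if some corner of the $3\times 3$ block has indegree $4$, flip both edges of a $2$-path from $v$ to it (the intermediate node flips two edges and stays valid); else all other nodes have indegree $1$ or $3$, and one flips entire $2\times 2$ cycles (which preserve indegree parity at every node on the cycle) chosen so that $v$ gains two incoming edges, repeating once if needed. The invariants there (indegree-$4$ nodes tolerate one flip, indegree-$1$/$3$ nodes tolerate two flips, cycle flips touch each node twice) do the bookkeeping that your construction leaves open. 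If you want to keep your rebuild approach, you must either complete the joint corner/midpoint/center repair argument or actually perform the exhaustive verification.
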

\begin{proof}
	Assume that we have a partial solution with only one hole, i.e. the edges around one node $v$ are not oriented. Assume that there is no feasible orientation of these edges given the partial solution. We will show that it is possible to flip edges inside a $3\times 3$ substructure centered around $v$ such that the resulting orientation is a valid solution. Since the corners of the $3\times 3$ substructure are at distance $2$ from $v$, the problem is $2$-mendable.
	
	We first observe some useful properties of valid orientations around any single fixed node. 
	\begin{itemize}
		\item We can always flip one edge of a node with indegree $4$, as the node will have indegree $3$.
		\item If we flip any two edges of a node with indegree $1$, the node will have indegree $3$ or $1$ (depending on whether we flip two incoming edges, or one incoming and one outgoing edge). Equivalently we can always flip any two edges of a node with indegree 3.
		\item If a node has indegree 0 or 2, flipping an edge gives indegree 1 or 3.
	\end{itemize}
	
	Assume that the edges around $v$ are oriented such that we have a forbidden structure (either $0$ or $2$ incoming edges). We will consider three possible cases for the $3\times 3$ substructure of nodes with $v$ in the center.
	
	Consider first the case where $v$ has a neighbor of indegree $4$. In this case, we can flip the outgoing edge between $v$ and this node: now $v$ has indegree $1$ or $3$ and the other node still has a valid orientation.
	
	In the second case, we exclude the previous case and assume that there is a corner node in the $3\times 3$ substructure with indegree $4$. We can flip both edges on a path from $v$ to this corner node in order to reach a valid orientation. Note that for the node with indegree $4$ and for $v$ we only flip one edge, while we flip two edges of the intermediate node with indegree $1$ or $3$.
	
	The remaining case is when all nodes in the $3\times 3$ substructure, except for $v$, have indegree $1$ or $3$. Consider the cycles spanned by all four $2\times 2$ substructures. Note that we can flip the edges along any cycle such that nodes with indegree $1$ and $3$ are still valid. This is because we flip two edges for each node in the cycle. Assume that there exists a cycle in which $v$ has two outgoing edges. If there is no such cycle, we can first flip all edges in one of the incident $4$-cycles in order to get a cycle with the above property. We can then flip all edges along this cycle with two outgoing edges at $v$ and will either reach a valid orientation with $\operatorname{indeg}(v)=0$, or we will receive a setting with $\operatorname{indeg}(v)=2$. In the latter case, we need to repeat the above step one more time in order to reach a valid configuration.
\end{proof}

\paragraph{An alternative computer-assisted proof.}

We can also use a computational approach to prove Lemma~\ref{lem:mending_134orientation}. We start by assuming that the mending radius of a problem is small and implement a script that checks all possible label assignments in a small neighborhood. If a valid assignment of labels exists for each input configuration, then the problem is mendable at our assumed radius. For the $\{1,3,4\}$-orientation, we will mend a node $v$ following the steps
\begin{enumerate}
	\item Remove all edges in the $3\times 3$ substructure of nodes centered at $v$.
	\item Check the number of incoming edges at each node in this substructure. 
	\item Add directed edges in a brute force manner to the substructure and check if there exists a valid orientation of edges.
\end{enumerate}

In this example, we can check all cases and verify that there exists a solution for any initial assignment of incoming edges in the $3\times 3$ substructure. We first enumerate all possible input configurations: the corner nodes can have up to two incoming edges, the node $v$ has no incoming edges, and the remaining nodes can have at most one incoming edge. This results in $1296$ possible input configurations. For each of the configurations, we need to determine whether there is a valid orientation of the edges by checking at most $2^{12}$ possible edge orientations. We implemented a Python program that checks all possible edge orientations, and it found a valid assignment for each of the input configurations.\footnote{The code is publicly available, but to keep the submission anonymous, we have made the files available in the following shared Dropbox folder: \url{https://www.dropbox.com/sh/857xfi0dtininli/AACFDLOm_McykEMkea9xw-jAa?dl=0}} This computational approach hence also shows that the $\{1,3,4\}$-orientation problem is $2$-mendable.

We expect that a similar computational approach (possibly with a more efficient implementation that only checks non-symmetric cases and uses e.g.\ modern SAT solvers as a subroutine to the right patch for each configuration) can be applied in the study of many other problems, especially in highly structured settings such as grids and trees.

\newpage

\section{Paths, Cycles and Trees}\label{app:paths,cycles,trees}
In Section~\ref{sec:solvability_to_mendability}, we observed that not all efficiently solvable problems are also efficiently mendable and showed that under certain restrictions of a problem instance, LCLs on directed cycles without input can be made mendable. In this section, we will consider analogous results for LCLs on undirected paths and cycles and rooted trees. We will start by recalling the automata-theoretic view for studying LCLs~\cite{chang21automata-theoretic,Brandt2017}.

\subsection{Diagram representation of LCLs}\label{app:intro_automata}

We start by considering a node-edge-checkable LCL problem $\Pi$ that is defined as a tuple $\Pi=\{\Gamma, C_{\Edge}, C_{\Node}, C_{\Start},C_{\End}\}$. Thereby $\Gamma$ is the set of possible output labels, $C_{\Edge}$ and $C_{\Node} \subseteq \Gamma\times\Gamma$ the edge and the node constraints, and $C_{\Start}$ and $C_{\End} \subseteq \Gamma$ the start and the end constraints. In the case of undirected paths and cycles, we will consider \emph{symmetric} LCLs. These are LCLs where $C_{\Start}=C_{\End}$ and where the set of edge constraints as well as the set of node constraints are symmetric relations.  

We can represent an LCL as a diagram $D$ as follows: the nodes of $D$ correspond to the edge constraints $C_{\Edge}$. Between two states $(a,b)$ and $(c,d)$, we draw an edge if $(b,c)\in C_{\Node}$. We further say that a node is a starting state if its first state is in $C_{\Start}$ and an ending state if its last state is in $C_{\End}$.

In order to classify LCL problems defined on diagrams, we need to differentiate between different states that the nodes of a diagram can have:

\begin{definition}[repeatable state]
	A state is called \emph{repeatable} if there exists a directed walk of length $\geq 1$ from a node to itself. 
\end{definition}
\begin{definition}[loop]
	A repeatable state is called a \emph{loop} if there is a walk of length $1$ from a node to itself. 
\end{definition}
\begin{definition}[flexible state]
	A state is called \emph{flexible}, if there exists a flexibility parameter $K$, such that for every constant $k>K$ there is a walk of length $k$ from a node to itself.
\end{definition}
\begin{definition}[mirror-flexible state]
	A state is called \emph{mirror-flexible}, if there exists a flexibility parameter $K$, such that for all $k>K$ there is a walk of length $k$ from a state $(a,b)$ to $(b,a)$.
\end{definition}

In the case of cycles and paths of length $n$, a walk of length $n$ in the diagram representation of the respective LCL problem corresponds to a valid labeling of the nodes of the path or cycle. Different properties of the diagram define different asymptotic running times for solving the LCL problems. If, for example, there exists a state with a self-loop in a directed cycle, then the corresponding problem is $O(1)$-solvable, as all nodes can label their edges using the same states~\cite{Brandt2017}. In the case of undirected path or cycles, a mirror-flexible state with a self-loop is required and sufficient for $O(1)$-solvability~\cite{chang21automata-theoretic}. Note that $O(1)$-mendability is not possible in such diagrams in general. One can however restrict the diagrams to only consist of a mirror flexible state with a loop (or just the self-loop in the directed case) and make the new problem $O(1)$-mendable. 

In the following sections, we will consider similar restrictions for $O(\log^* n)$-solvable LCLs on undirected path and cycles as well as on rooted trees. We will show that for undirected path and cycles there exist similar restrictions to Theorem~\ref{thm:cycle_mend_hard_to_easy} that would make the problems $O(1)$-mendable. The case with rooted trees is trickier. For this case, we first present a generalized restriction to the one in Theorem~\ref{thm:Three-coloring_binary_trees} in order to show that the $\Delta$-coloring problem on $\Delta$-regular trees can be made $O(1)$-mendable. We then show that not all problems on directed trees can have such a natural restriction.

\subsection{LCLs on undirected paths}\label{app:restrictions:undirected_paths}
For undirected paths, the situation is slightly different from the directed case. Here, we need to consider symmetric LCL problems. Such problems on paths or cycles are $O(\log^* n)$-solvable if there exists a mirror-flexible state in the diagram representation~\cite{chang21automata-theoretic}. Note that symmetry and a mirror-flexible state are not yet sufficient to make a problem efficiently mendable. This is because two parts of a path can be colored using colors from two different color sets, similar to the proof of Theorem~\ref{thm:path_logstar_linear}. In order to make the problem efficiently mendable, we need to restrict the problem to only one connected component with a mirror-flexible state. The precise restriction is given in the following theorem:

\begin{theorem}\label{thm:restriction_for_mendability_on_paths}
	Suppose $\Pi$ is a symmetric LCL problem on undirected paths without input. If $\Pi$ is $O(\log^* n)$-solvable, we can define a new LCL problem $\Pi'$ with the same time complexity, such that a solution for $\Pi'$ is also a solution for $\Pi$, and $\Pi'$ is $O(1)$-mendable.  
\end{theorem}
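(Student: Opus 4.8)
The plan is to mirror the argument used in Theorem~\ref{thm:cycle_mend_hard_to_easy}, but working with the diagram of a \emph{symmetric} LCL and using \emph{mirror-flexible} states instead of ordinary flexible states. First I would recall from~\cite{chang21automata-theoretic} that a symmetric LCL $\Pi$ on undirected paths is $O(\log^* n)$-solvable if and only if its diagram $D$ contains a mirror-flexible state; let $f = (a,b)$ be a mirror-flexible state lying in the smallest strongly connected component (SCC) of $D$, breaking ties arbitrarily. Let $S$ be the sub-diagram induced by that SCC (one must check it is still symmetric: since the whole SCC is determined by the symmetric relations $C_{\Node}, C_{\Edge}$, the mirror image $(b,a)$ of any state in the SCC, reachable from $f$ since $f$ is mirror-flexible and the SCC is strongly connected, is also in the same SCC, so $S$ is closed under the mirror involution). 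Let $\Pi'$ be the LCL induced by $S$, together with the original start/end constraints restricted to states of $S$ (or, if no state of $S$ is a start state, replace the start constraint by "the path must begin at a state of $S$ reachable by a suitable short prefix" — the existence of such prefixes follows from $O(\log^* n)$-solvability of $\Pi$, exactly as in the directed case). Since $f$ survives into $\Pi'$, the canonical $O(\log^* n)$ algorithm for $\Pi$ — which essentially produces a walk in $D$ — can be carried out entirely inside $S$, so $\Pi'$ has the same round complexity as $\Pi$, and every solution of $\Pi'$ is a solution of $\Pi$ by construction.

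Next I would prove $O(1)$-mendability of $\Pi'$. Let $K = O(1)$ be the flexibility parameter of $f$ in $S$ and let $r$ be the radius of the neighborhoods used by the diagram. To mend a hole at node $u$: delete the labels of all nodes within distance roughly $2K + r$ of $u$; this leaves two "boundary" segments of $r$ correctly labeled nodes on each side of the hole, whose labels name two states $s_L, s_R$ of $S$. Because $S$ is strongly connected, there is a walk from $s_L$ to $f$ and from $f$ to $s_R$; because $f$ is \emph{mirror-}flexible we can moreover adjust the length of a walk that visits $f$ to hit \emph{any} sufficiently large target length, in \emph{either orientation}, which is exactly what is needed to bridge a gap of a fixed size while respecting the undirected (symmetric) reading of the path. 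Concretely, pick a target gap length $g = \Theta(K)$, route $s_L \rightsquigarrow f \rightsquigarrow s_R$ through a walk of length exactly $g$ using the mirror-flexibility of $f$ to correct the parity/length, and write the corresponding labels into the cleared region. All modified nodes lie within distance $O(K + r) = O(1)$ of $u$, so $\Pi'$ is $O(1)$-mendable.

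The main obstacle I anticipate is the interaction between the \emph{endpoints} of the path and the restriction to a single SCC: near an endpoint there is no "far side" state $s_R$, so one must instead route from the boundary state toward the prescribed end constraint $C_{\End}$, and one has to argue that a short suffix realizing a valid ending always exists inside $S$. This is handled by the same ingredient that gives $O(\log^* n)$-solvability — the algorithm must be able to terminate correctly near both endpoints — but making it rigorous requires carefully tracking that the start/end constraints chosen for $\Pi'$ are both satisfiable in $S$ and still permit the fast algorithm; I would phrase the restriction on $C_{\Start} = C_{\End}$ so that an endpoint must carry a state from a fixed finite set of "$S$-compatible boundary states," which adds only $O(1)$ to the mending radius near endpoints. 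A secondary, purely bookkeeping obstacle is verifying that $\Pi'$ as defined is itself a \emph{symmetric} LCL (so that the statement is internally consistent and the cited $O(\log^* n)$ characterization applies to it): this follows because restricting to an SCC closed under the mirror involution preserves symmetry of $C_{\Node}$ and $C_{\Edge}$, and we take $C_{\Start}' = C_{\End}'$ by fiat.
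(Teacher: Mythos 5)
Your overall strategy---restrict the diagram to a strongly connected component containing a mirror-flexible state, then mend by deleting a constant-radius window and re-bridging it with a walk whose length is adjusted using flexibility---is the same as the paper's. The paper additionally observes that, by symmetry, once the repeatable states of the chosen connected component form a single SCC containing a flexible state, \emph{all} states in that SCC are automatically mirror-flexible, so it does not need to single out $f$ the way you do.

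The genuine gap is in your treatment of the path endpoints, which you correctly flag as the main obstacle but do not resolve. Your proposed fix---``replace the start constraint by: the path must begin at a state of $S$ reachable by a suitable short prefix''---is not well-defined as stated, and under the natural reading it breaks one of the two properties you must preserve. If the prefix states lie outside $S$ and you simply allow the path to begin directly in $S$, then a solution of $\Pi'$ need not satisfy $C_{\Start}$ and hence is \emph{not} a solution of $\Pi$. If instead the prefix states are actually written on the path, then they must be part of $\Pi'$'s diagram, i.e., $\Pi'$ is not ``the LCL induced by $S$'' but by $S$ together with the non-repeatable states connecting $C_{\Start}$ and $C_{\End}$ to $S$. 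The paper takes exactly this second route explicitly: it restricts to a \emph{connected component} of the diagram whose repeatable states form one SCC, keeps the non-repeatable states (which is where the start/end states live), bounds the length of any non-repeatable prefix or suffix by the component size $q = O(1)$, and then mends inner nodes with radius $k+q$ while treating nodes within distance $O(k+q)$ of an endpoint as a separate case in which the mender re-routes into and out of the non-repeatable component. It also handles the case where another $\bot$ separates the node being mended from a non-repeatable boundary region, observing that the remaining hole absorbs the obligation to reconnect. To complete your proof you would need to make this bookkeeping explicit; once you do, your construction coincides with the paper's.
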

\begin{proof}
	Observe first that if $\Pi$ is $O(1)$-solvable, then there exists a flexible state with a self-loop that is a start and an end state. Therefore, we can restrict the $\Pi'$ to be this state and achieve an $O(1)$-mendable problem. 
	
	For the general case, the following restrictions to the graph representation are required in order to achieve efficient mendability of $\Pi'$:
	\begin{itemize}
		\item The considered states must form a connected component.
		\item All repeatable states in this component form a strongly connected component.
		\item There exists a flexible state inside this component.
	\end{itemize}
	These conditions imply that a flexible state will always be inside the strongly connected component of repeatable states, as it is a repeatable state as well. This, on the other hand, implies that all states inside the strongly connected component are flexible, since all states in the strongly connected component can be reached from the flexible state in a constant number of steps. Finally, we consider symmetric LCLs and therefore all symmetric states will be inside the strongly connected component. Therefore, all flexible states must also be mirror-flexible. Let $k$ be the maximum flexibility of these nodes. Note that such a restriction is always possible, since there always exists a repeatable state in the graph representation that either has a self-loop (trivial case) or belongs to a larger connected component of the graph.
	
	Observe that our restriction is $O(\log^* n)$-solvable as the remaining problem is still symmetric and preserves the mirror-flexible states. Next, we will show that the problem is $O(1)$-mendable. We first need to address the non-repeatable nodes that can be contained in the connected component. The size of connected components of non-repeatable nodes is bounded by the size of the connected component, which we denote as a constant $q$. Moreover, we will assume that the start and the end states are in such components, as the states could be removed from the graph otherwise. 
	
	Consider some node $u$ that is an inner node of the path, i.e., it has distance $> k+q+1$ to each of the endpoints. Assume further that the $k+q$-neighborhood of $u$ is already colored. In order to mend the solution of $u$, we will first delete the solution of all nodes at a radius $k+q$. 
	Next, if the color of the node at distance $k+q+1$ is from the strongly connected component, we fix the coloring at radius $k+1,\ldots,k+q$ using the labels from the strongly connected component in $\Pi'$. If the color is inside a non-repeatable component, we use the labels from the non-repeatable component for coloring, until we reach the labels from the strongly connected component. Note that this situation can only appear if a node at distance $k+q+1$ is close to one of the endpoints. It remains to fix the colors on the path between the neighbors at distance $k+1$ from $u$. We call the nodes at distance $k+1$ $v_1$ and $v_2$, respectively. We can find a walk in the graph of length $2k$ that starts in the state of $v_1$ and ends in the state of $v_2$ and use the labels along this walk as colors on the path between $v_1$ and $v_2$.
	
	We now will consider some special cases. We start with the case where not all neighbors of $u$ are colored. In this case, it might not always be possible for $u$ to pick a consistent color: assume that one node at distance $k+q+1$ has a color from a non-repeatable component. We call this node $w$. If there exists an uncolored node between $u$ and $w$, there does not necessarily exist a walk of length $k+q+1$ in the connected component starting in the state of $w$. In such a case, the other uncolored node will be able to fix the coloring.  
	The second case is when $u$ is close to one of the endpoints. Then, $u$ can delete the solution at a radius at $k+q$, and make sure that the end node satisfies the start and the end constraints.
\end{proof}

\subsection{LCLs on undirected cycles}\label{app:restrictions:undirected_cycles}
Undirected cycles and paths are treated equally when solvability of LCLs is considered. When we consider mending, this property changes. For paths, we could assume that there exist connected components in the diagram representation of $\Pi'$ that only consist of non-repeatable states. In the case of cycles, the start and the end states cannot be inside the non-repeatable component. This is because a mending algorithm cannot rule out that there will be only one start and end point along the cycle without seeing the whole cycle. We therefore need to restrict the considered graph representation to only consist of the strongly connected component.    

\begin{theorem}\label{thm:restriction_for_mendability_on_cycles}
	Suppose $\Pi$ is a symmetric LCL problem on undirected cycles without input. If $\Pi$ is $O(\log^* n)$-solvable, we can define a new LCL problem $\Pi'$ with the same time complexity, such that a solution for $\Pi'$ is also a solution for $\Pi$, and $\Pi'$ is $O(1)$-mendable.
\end{theorem}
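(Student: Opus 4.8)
The plan is to follow the same template as Theorems~\ref{thm:cycle_mend_hard_to_easy} and~\ref{thm:restriction_for_mendability_on_paths}, but to be more aggressive in the restriction: on undirected cycles there are no endpoints, so a mending algorithm can never certify that a non-repeatable ``tail'' component of the diagram is used only near an endpoint. We therefore keep only a single strongly connected component. Concretely, let $D$ be the diagram of $\Pi$. If $\Pi$ is $O(1)$-solvable, then by the automata-theoretic characterization~\cite{chang21automata-theoretic} $D$ has a mirror-flexible self-loop, and we simply let $\Pi'$ be the one-label problem consisting of that loop, which is $O(1)$-solvable and trivially $O(1)$-mendable. Otherwise $\Pi$ is $\Theta(\log^* n)$, and by~\cite{chang21automata-theoretic} $D$ contains a mirror-flexible state; among all mirror-flexible states pick one, call it $f$, lying in a smallest strongly connected component, breaking ties arbitrarily. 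Let $S$ be that component and let $\Pi'$ be the LCL induced by $S$. Since every walk in $S$ is a walk in $D$, any solution of $\Pi'$ is a solution of $\Pi$.

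Next I would verify that $\Pi'$ has the same round complexity as $\Pi$. The key observation is that $S$ is closed under the mirror (label-reversal) map: a closed walk through $f$ reverses to a closed walk through $\bar f$, and mirror-flexibility provides a walk from $f$ to $\bar f$, so $f$ and $\bar f$ lie in the same component, forcing $S$ to be mirror-closed. Hence every state of $S$ is reachable from $f$ and reaches $f$, so it is flexible; composing such reachability walks with the $f \to \bar f$ walk and its reversal shows that every state of $S$ is in fact mirror-flexible, so $\Pi'$ is still $O(\log^* n)$-solvable. Moreover $S \subseteq D$ contains no mirror-flexible self-loop (such a loop in $D$ would already make $\Pi$ be $O(1)$-solvable~\cite{chang21automata-theoretic}, contrary to assumption), so $\Pi'$ is $\Theta(\log^* n)$, matching $\Pi$.

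For $O(1)$-mendability of $\Pi'$, let $k$ be a common flexibility bound and $\rho$ the diameter of $S$, let $r$ be the verification radius of the diagram, and fix a sufficiently large constant $D_0$ (depending only on $k,\rho,r$). To mend at a node $u$, fix one of the two cyclic traversals of the cycle, erase the output of every node within distance $D_0$ of $u$, and inspect the two nearest still-labeled edge-states $s^-,s^+$ on either side; if one side reaches a pre-existing $\bot$ first, the relaxed verifier ${\psi'}^*$ is already satisfied there and only the other seam must be matched, so the mend only becomes easier. Because the partial labeling is a valid partial solution of $\Pi'$, both $s^-,s^+$ lie in $S$; by strong connectivity there is a directed $s^-\to s^+$ walk in $S$ of length at most $\rho$, and padding it by a detour through a flexible state lets us realize a walk of exactly the length needed to refill the erased stretch along the chosen traversal. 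Writing that walk's labels onto the erased nodes yields a labeling accepted by ${\psi'}^*$ everywhere — inside the patch because it is a walk in $S$, at the seams because the walk starts at $s^-$ and ends at $s^+$, and elsewhere because nothing changed — and only nodes within distance $D_0=O(1)$ of $u$ were touched, so this is an $O(1)$-mend. By \autoref{lem:normalizedmending} this also gives a mending procedure that inspects only a constant-radius neighborhood.

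As in the path case, the main obstacle is not the mending step but the bookkeeping around the diagram: one must verify that collapsing to a single strongly connected component does not silently destroy $O(\log^* n)$-solvability (this is where the mirror-closedness of $S$ and the characterization of~\cite{chang21automata-theoretic,Brandt2017} are essential), and one must handle the orientation subtleties of the undirected setting — the seam states $s^-,s^+$ are read along one chosen traversal while the ambient solution carries no orientation, so the walk-splicing must be consistent with symmetry, which is precisely where mirror-flexibility of \emph{all} states of $S$, rather than mere flexibility, is needed. Compared with Theorem~\ref{thm:restriction_for_mendability_on_paths} the argument is actually slightly cleaner, since the absence of endpoints removes the need to reason separately about non-repeatable start/end components.
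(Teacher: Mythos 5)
Your proposal is correct and follows essentially the same route as the paper's proof: restrict $\Pi$ to the sub-diagram induced by a strongly connected component containing a (mirror-)flexible state, observe that symmetry forces all states of that component to be mirror-flexible so the $O(\log^* n)$ complexity is preserved, and mend by erasing a constant-radius ball around $u$ and splicing in a walk of the required length between the two seam states, using flexibility to pad the walk. Your treatment of the mirror-closedness of the component and of the orientation issue is somewhat more explicit than the paper's, but it is the same argument.
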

\begin{proof}
	Observe first that the case where the problem is $O(1)$-solvable is equivalent to the proof of Theorem~\ref{thm:restriction_for_mendability_on_paths}.
	
	In order to achieve efficient mendability on cycles, we need to restrict the graph representation of $\Pi$ to $\Pi'$ as follows:
	\begin{itemize}[noitemsep]
		\item The considered states form a strongly connected component.
		\item There exists a flexible state inside this component.
	\end{itemize}
	Note that this restriction is always possible, since there will always be a connected component in the graph representation that contains a mirror-flexible state, otherwise, the problem would not be $O(\log^*n)$-solvable. Add all states that lie on any path between the two mirror-flexible nodes. Note that all these states are repeatable and form a strongly connected component. As in the proof of Theorem~\ref{thm:restriction_for_mendability_on_paths}, we can conclude that all states in the strongly connected component are mirror-flexible.
	
	It remains to show that all such problem instances are constant radius mendable. Let $k$ be the maximum flexibility in the strongly connected component. Further, let $q$ be the number of nodes in the considered strongly connected component. In order to mend the solution, node $u$ needs to remove the solution of all nodes in its $k+q$ neighborhood. It then can use labels from a walk of length $2(k+q)$ in the graph in order to complete the coloring between the two neighbors at distance $k+q+1$.
\end{proof}

\subsection{LCLs on rooted trees}\label{app:restrictions:rooted_trees}
In~\cite{chang21automata-theoretic}, it was shown how to use diagram representation in order to classify edge-checkable problems in rooted trees. The asymptotic round complexity of solvability has been shown to be the same on rooted trees and directed paths. We cannot derive a similarly strong statement for mendability. In fact, already for the problem of $3$-coloring binary trees, we can show that no restriction of the admissible configurations can make the problem efficiently mendable.

\begin{figure}
	\centering
	\includegraphics[width=0.9\textwidth]{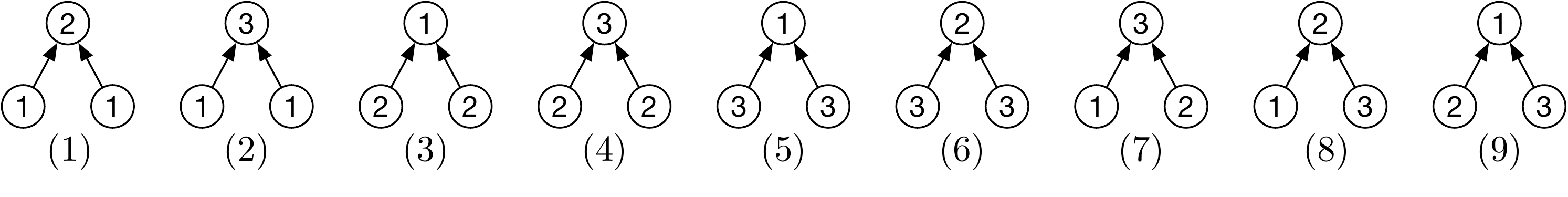}
	\caption{Possible configurations of the $3$-coloring problem on binary trees in diagram representation with colors $1,2$ and $3$.}\label{fig:three_coloring_possible_configs}
\end{figure}

\begin{theorem}
	Consider the diagram representation of the $3$-coloring problem on binary trees. All restrictions of the configurations of this problem are $\Omega(\log n)$-mendable.
\end{theorem}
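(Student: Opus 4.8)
The plan is to show that every restriction $\Pi'$ of the $3$-coloring configurations forces, on arbitrarily large complete binary trees, a partial solution with a single hole whose mending must touch a node at distance $\Omega(\log n)$ from it. First I would dispose of the trivial case: if $\Pi'$ is not solvable on some complete binary tree, then the empty labeling has no $\Pi'$-mend at all, so there is nothing to prove; hence assume $\Pi'$ is solvable on every complete binary tree $B_d$ of depth $d$.

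The core of the argument is a \emph{rigidity} statement: for every such $\Pi'$ and every $d$ there is a valid $\Pi'$-coloring of $B_d$ in which the color of the root is determined by the colors near the leaves, so that changing the root's color is impossible without changing some node at distance $d-O(1)$ from it. To construct such a coloring I would single out one root-to-leaf path $P=(w_0=\text{root},w_1,\dots,w_d)$ and use the off-path subtrees as gadgets that pin down, level by level, the colors of the two children of each $w_i$. The key point is that a node is \emph{forced} whenever it sees two children of distinct colors, or two equal children of color $c$ together with a parent of color $\neq c$; so if the gadgets can always supply such a forcing situation along all of $P$, the root color propagates uniquely from the leaf end. Whether a suitable gadget exists is checked by a case analysis over which of the six monochromatic configurations $M_{ab}$ and three mixed configurations $X_c$ of Figure~\ref{fig:three_coloring_possible_configs} survive in $\Pi'$: solvability of $\Pi'$ on every $B_d$ already excludes the most impoverished configuration sets, and in each surviving case one exhibits either a chain of mixed configurations that forces a designated color at a gadget root or, when mixed configurations are scarce, a parity/cyclic pattern among the available $M_{ab}$ that is itself rigid.

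Given rigidity, the bad instance is obtained by gluing. Take $B_{d+1}$ with root $\rho$ and its two depth-$d$ subtrees $T_L,T_R$; color $T_L$ by a rigid coloring and $T_R$ by a second rigid coloring obtained from the first via a permutation of the colors (or the mirror pattern), chosen so that the two subtree-roots $v_L,v_R$ receive colors $x\neq y$ for which the multiset $\{x,y\}$ is \emph{not} the children-pattern of any surviving configuration of $\Pi'$ — such a choice exists because no single configuration can simultaneously realise the two patterns that the rigid colorings impose at $v_L$ and $v_R$. Now set $\rho=\bot$: the resulting partial labeling is accepted by $\psi^*$, since the only hole is $\rho$, which makes $\rho$ and all of $\rho$'s neighbors happy. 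Any $\Pi'$-mend at $\rho$ assigns $\rho$ a non-$\bot$ color and must realise some configuration at $\rho$, which is impossible with $v_L,v_R$ as they are; hence the mend changes the color of $v_L$ or of $v_R$, and by rigidity this forces a change at a leaf of the corresponding subtree, at distance $d=\Theta(\log n)$ from $\rho$. Thus $\Pi'$ is $\Omega(\log n)$-mendable.

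The step I expect to be the main obstacle is the rigidity statement, and within it the restrictions that retain a lot of freedom — in particular those that keep many monochromatic configurations but few or no mixed ones. For such $\Pi'$ a single subtree is far from rigid, so the off-path gadgets must be engineered very carefully, and one must also argue that a restriction that is ``too loose'' is either not genuinely a restriction of $3$-coloring (it collapses to a two-color problem, which is a different problem) or still admits a rigid core. Pinning down exactly which configuration sets can occur and producing the matching gadgets — together with checking the incompatibility of the two recolored copies at $\rho$ in every case — is the technical heart; I expect the write-up to proceed by enumerating the possible surviving configuration sets up to color symmetry and treating each.
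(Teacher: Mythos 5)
Your overall strategy matches the paper's: build recursively rigid substructures in which each node's color is forced by the labels below it, place a single hole whose already-colored neighborhood is incompatible with every surviving configuration, and conclude that any mend must recolor a forced node and hence cascade to distance $\Theta(\log n)$. The paper likewise does not carry out the exhaustive case analysis over all configuration subsets --- it exhibits two representative counterexamples (dropping one mixed configuration; keeping only certain monochromatic ones) and asserts the remaining cases are analogous --- so deferring that enumeration is not, by itself, a departure from the paper.

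There is, however, a concrete flaw in your gluing step. You place the hole at the \emph{global root} $\rho$ and require a pair of colors $x \neq y$ for $v_L, v_R$ such that $\{x,y\}$ is not the children-pattern of any surviving configuration. If the restriction retains all three mixed configurations (in particular, for the unrestricted problem, which the theorem's phrasing covers), every pair of distinct colors \emph{is} a valid children-pattern, so no such choice exists and your instance is $0$-mendable at $\rho$. Your stated justification (``no single configuration can simultaneously realise the two patterns imposed at $v_L$ and $v_R$'') conflates realizing a pattern at the hole with realizing the patterns at $v_L$ and $v_R$ themselves. The fix, which is what the paper implicitly uses, is to put the hole at an \emph{internal} node $v$ so that the parent's color provides a third blocker: with parent and two children pinned to three distinct colors, no color is available for $v$ under any configuration set, and rigidity of the three incident subtrees does the rest. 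Separately, your dispatch of the unsolvable case is justified incorrectly: the empty labeling always admits a $0$-mend at any node (label it arbitrarily; all its neighbors are still $\bot$, so $\psi^*$ stays happy everywhere). The correct argument is that $T$-mendability for finite $T$ would let one build a complete solution by mending greedily, contradicting unsolvability; the conclusion survives, but not via the labeling you chose.
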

\begin{proof}
	Figure~\ref{fig:three_coloring_possible_configs} presents the possible configurations for the $3$-coloring problem on binary trees. Note that we cannot keep all configurations with mixed nodes, as the three configurations can form a rigid structure that can be arbitrarily large. It can be shown that for each subset of configurations there exists a counterexample that has a mending radius of $\Omega(\log n)$. We will restrict our analysis to two selected counterexamples, other counterexamples can be derived in a similar way.
	
	Assume that we restrict the configurations by removing configuration $(9)$ from the list. This means that no node in the tree is allowed to have children of colors $2$ and $3$. We can use this restriction to construct a rigid substructure that can be repeatedly attached to its leaves. This substructure is represented in Figure~\ref{fig:three_coloring_counterexample_1}. Note that the uncolored node has two children of colors $2$ and $3$. Therefore, one of the children has to be recolored. The child of color $2$ has children of different colors such that the recoloring step is propagated to its children. Changing the child of color $1$ to color $2$ would create two siblings of colors $2$ and $3$. Therefore, the recoloring step will always be propagated into the substructure $S$. Note that $S$ is a rigid substructure, i.e. if the root of $S$ has to be recolored, all nodes along one root to leaf path have to be recolored as well. Since $S$ is attached to each leaf node of $S$ recursively, the recoloring propagates all the way to the leaves of the tree. Therefore, mending requires $\Omega(\log n)$ radius in this case. A similar counterexample can be derived if two of the configurations $(7), (8)$ and $(9)$ are removed. 
	
	We will next consider the case when the problem $\Pi'$ is restricted to the configurations $(1),(2),(3)$ and $(6)$. The rigid structure for this example is presented in Figure~\ref{fig:three_coloring_counterexample_2}. Here, the fact that the nodes cannot have children of two different colors is used in order to restrict the number of color choices for the children. Since the children of the uncolored nodes have two different colors, one of them has to be recolored. Note that recoloring the node of color $2$ to color $1$ would require recoloring the roots of $S$. The rigid structure $S$ makes sure that both children of the uncolored node have to be recolored together with the root. For one of the children, the color is fixed by its children, and therefore the recoloring would propagate down to the grandchildren. Therefore, any color change propagates down to the leaves of $S$. Counterexamples for other choices of configurations that do not contain the configurations $(7), (8)$ and $(9)$ can be derived in a similar way.
\end{proof}

\begin{figure}[tbh]
	\centering
	\includegraphics[width=0.9\textwidth]{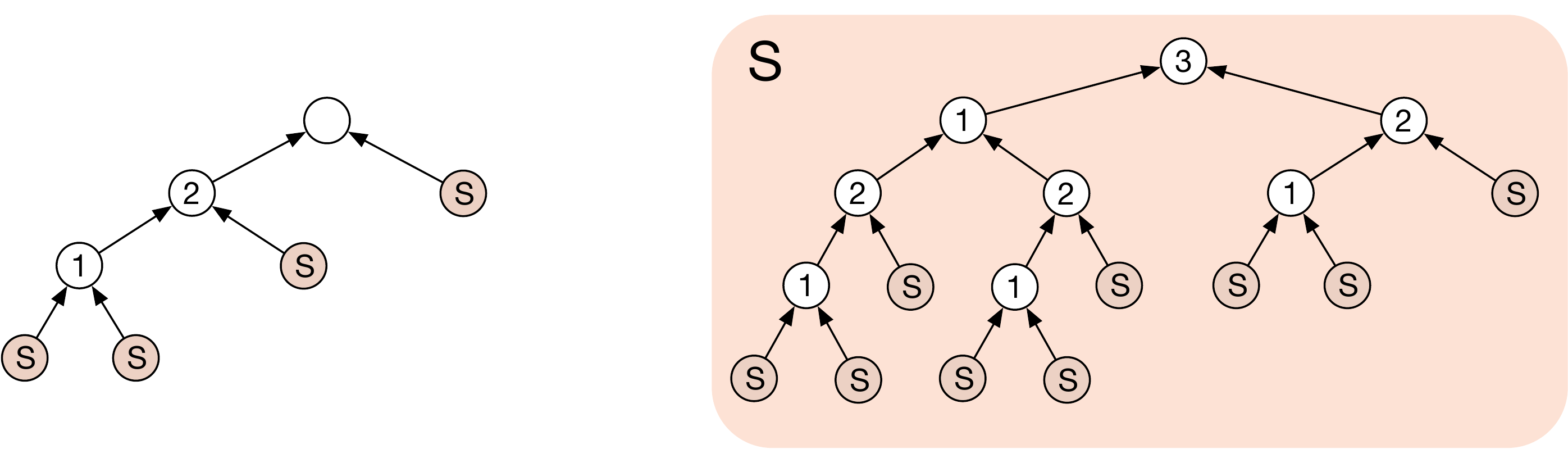}
	\caption{Counterexample for the 3-coloring problem on binary trees, where configuration $(9)$ from Figure~\ref{fig:three_coloring_possible_configs} is missing. On the left side, the construction of the tree with the uncolored node is presented. The leaves $S$ are replaced by the recursive substructure from the right side. }\label{fig:three_coloring_counterexample_1}
\end{figure}

\begin{figure}[tbh]
	\centering
	\includegraphics[width=0.9\textwidth]{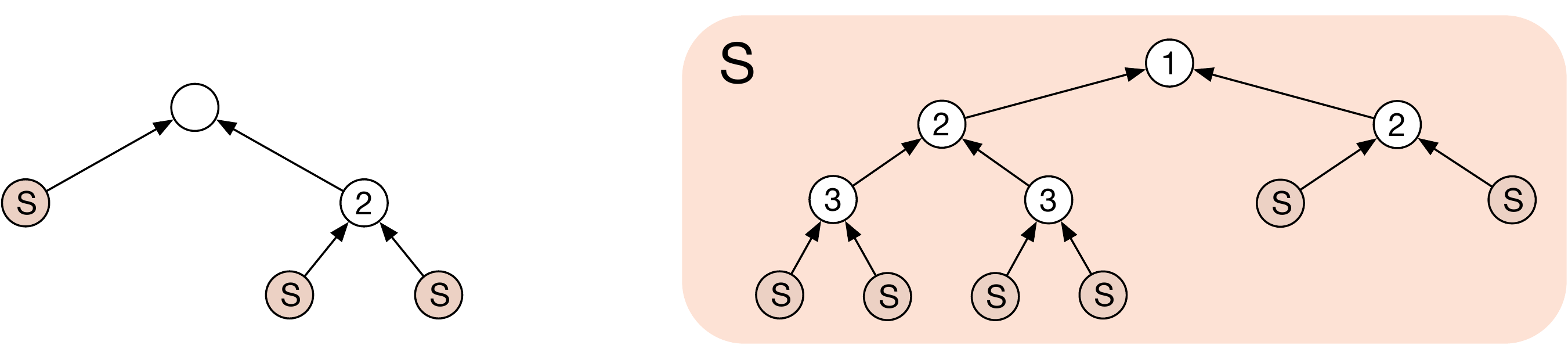}
	\caption{Counterexample for the 3-coloring problem on binary trees, where only configurations $(1),(2),(3)$ and $(6)$ from Figure~\ref{fig:three_coloring_possible_configs} are considered. On the left side, the construction of the tree with some uncolored inner node is presented. The leaves $S$ are replaced by the recursive substructure from the right side. }\label{fig:three_coloring_counterexample_2}
\end{figure}

In Theorem~\ref{thm:Three-coloring_binary_trees}, we showed that by differentiating between monochromatic and mixed nodes in the possible configurations, we can make the new problem $O(1)$-mendable. In the following theorem, we generalize this result to the $\Delta$-coloring problem on $\Delta$-regular rooted trees and show that this problem is constant radius mendable, even if mixed nodes form connected components in the tree.

\begin{theorem}\label{thm:Delta-coloring_Delta-regular_trees}
	Given the $\Delta$-coloring problem $\Pi$ on $\Delta$-regular rooted trees without input, where $\Delta > 2$, we can define a new LCL problem $\Pi'$ with the same time complexity, such that a solution for $\Pi'$ is also a solution for $\Pi$, and $\Pi'$ is $O(1)$-mendable. 
\end{theorem}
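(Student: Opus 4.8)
The plan is to mimic the construction used for Theorem~\ref{thm:Three-coloring_binary_trees}, but with the extra freedom that $\Delta > 2$ gives us. As before, call a node \emph{monochromatic} if all of its children have the same color, and \emph{mixed} otherwise; assume leaves are monochromatic by convention. The key difference from the binary case is that now a mixed node still has $\Delta - 1$ children with at most $\Delta - 1$ distinct colors used among them, so the parent still has at least one free color from the palette of size $\Delta$; more importantly, a monochromatic node whose children all share one color $c$ has $\Delta - 1 \ge 2$ colors still available for itself. I would define $\Pi'$ to be $\Pi$ together with the monochromatic/mixed auxiliary bit, subject to the restriction that \emph{mixed nodes may form connected components but each such component has bounded depth} — in fact it suffices to require that no mixed node has a mixed grandchild, or even just to carry a small counter. (I would first try the cleanest version: mixed nodes form an independent set, exactly as in the binary case; if $\Delta > 2$ makes that unnecessarily restrictive for the solvability argument, I would relax it.)

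First I would check that $\Pi'$ is still $O(\log^* n)$-solvable: compute a $(\Delta+1)$-coloring (or any proper coloring with few colors) in $O(\log^* n)$ rounds, then apply the ``shift down'' strategy where every node copies its parent's color; after one shift-down every non-root node has the same color as its parent had, so all nodes become monochromatic (since all children of a node copied the same color, namely that node's old color), giving a solution with no mixed nodes at all, which trivially satisfies any restriction on mixed nodes. This is the same argument as in Theorem~\ref{thm:Three-coloring_binary_trees} and goes through unchanged for $\Delta > 2$.

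Next I would prove $O(1)$-mendability by the same case analysis on the node $u$ to be mended. If $u$ is monochromatic (its $\Delta - 1$ children, already labeled, all share a color), then $u$ has $\ge 2$ free colors, so it can pick one avoiding its parent — done with radius $0$ or $1$. If $u$ is mixed, the children of $u$ have $\ge 2$ distinct colors; by the restriction, the children of $u$ are monochromatic (no mixed child of a mixed node), so for each child $v_i$ of $u$ its own $\Delta-1$ grandchildren share one color, leaving $\ge 2$ colors available for $v_i$; I recolor the $v_i$'s to a common color (possible since each has $\ge 2$ choices and we only need them to agree on one value, and with $\Delta \ge 3$ there are enough colors — this is where I must be slightly careful that the grandchildren's forced colors don't block all common choices, but $\Delta - 1 \ge 2$ free colors per child against $\Delta-1$ children gives room), turning $u$ monochromatic, and then color $u$ avoiding its parent. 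Recoloring the $v_i$'s may turn grandchildren mixed or monochromatic, but since grandchildren were monochromatic and we only changed $v_i$'s color, a grandchild $w$ now sees parent-color changed — but $w$'s children are untouched, so $w$ stays monochromatic; I just need to re-examine whether any grandchild now violates the mixed-independent-set restriction, and since nothing below $v_i$ moved, it does not. The root is the one special case: if $u$ is the root and mixed we cannot ``color avoiding the parent,'' so instead we recolor children so that at least two of them (there are $\ge 2$) share a color, freeing a color for the root; with $\Delta \ge 3$ this is an easy pigeonhole argument. All modifications stay within distance $3$ of $u$, so the mending radius is $O(1)$.

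The main obstacle I anticipate is the bookkeeping in the mixed case when $\Delta > 2$: with more children, recoloring all children of $u$ to a single common color must be shown compatible with the colors forced on them by their own (untouched) children, i.e.\ I need that the intersection of the ``available color'' sets of the $\Delta - 1$ children of $u$ is nonempty, or failing that, that I can reach a common color via a constant number of local recolorings propagating at most one more level. Since each child has $\ge 2$ available colors out of $\Delta$, and $\Delta - 1$ children can have available-sets that pairwise intersect but with empty common intersection only when $\Delta$ is small, I expect a short argument (or a one-level push to grandchildren, still $O(1)$ radius) closes it; handling this cleanly — and confirming the claimed radius of $3$ rather than a larger constant — is the part that needs the most care, and it is also the place where the precise choice of restriction (independent set vs.\ bounded-depth components, as promised in the theorem statement's ``even if mixed nodes form connected components'') must be pinned down.
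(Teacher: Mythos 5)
Your overall strategy matches the paper's: the same monochromatic/mixed dichotomy, the same shift-down argument for $O(\log^*n)$-solvability, and the same bottom-up recoloring idea for mending. (The paper's restriction, for the record, is that connected components of mixed nodes have height below a constant $k\ge 1$, not an independent set.) Your worry about the common color is resolvable and you essentially have the ingredients: since $u$ is uncolored, each monochromatic child of $u$ is blocked by exactly \emph{one} color (the shared color of its own children), so it has $\Delta-1$ available colors, and $\Delta-1$ children forbid at most $\Delta-1<\Delta$ colors in total, leaving a common choice. Be careful, though, that your weaker count of ``$\ge 2$ available colors per child'' does not by itself give a nonempty intersection for $\Delta\ge 4$.

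The genuine gap is on the \emph{parent} side, which your case analysis (organized entirely around whether $u$ itself is monochromatic or mixed) never examines. After you make $u$ monochromatic and color it ``avoiding its parent,'' $u$'s color may differ from that of its siblings, turning the parent into a mixed node; this can violate the restriction (under your independent-set version, whenever the grandparent or any sibling of $u$ is mixed; under the bounded-height version, whenever the parent sits atop an existing mixed component). Worse, $u$ may be \emph{unable} to take its siblings' color $c$ without further recoloring, namely when $u$'s own children are all colored $c$. The paper's proof devotes its central case split to exactly this: when the parent is a colored monochromatic node, the mend deliberately recolors $u$'s children away from $c$ (possible since they are monochromatic and $\Delta\ge 3$), assigns $u$ the color $c$ so the parent stays monochromatic, and accepts that $u$ becomes a mixed node --- which is harmless because it is then an \emph{isolated} mixed node. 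Without this case (and the corresponding special cases where the parent of $u$ is the root and the siblings may themselves be mixed), your mend can produce a labeling that violates $\Pi'$, so the argument as written does not go through.
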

\begin{proof}
	Note first that the $\Delta$-coloring problem on $\Delta$-regular rooted trees is not $o(\log n)$-mendable in general: if all $\Delta-1$ children of a node pick different colors, this node can only be colored with one remaining color. This means, that the leaf nodes can define a unique coloring of the whole tree. For mending, we can assume that the root node has $\Delta$ children, all colored in different colors, and that the color of a child node is fixed at a leaf level. Then, in order to mend the color of the root, we would need to undo the coloring of all nodes in one of the subtrees, thus requiring a mending radius of $\Omega(\log n)$.
	
	In order to prevent such cases, we need to restrict the problem. We will therefore differentiate between two kinds of nodes. We call nodes whose children are colored and all have the same color \emph{monochromatic}, and nodes that have children of at least two different colors \emph{mixed}. WLOG all leaf nodes are assumed to be monochromatic. We define a new problem $\Pi'$ by restricting the original problem $\Pi$ as follows: 
	\begin{itemize}
		\item Assume that a set of mixed nodes forms a connected component in the tree. Then, the tree height of the connected component must be smaller than a constant $k\geq 1$.  
	\end{itemize}
	Note that under such a restriction, $\Pi'$ can still be solved in $O(\log^* n)$ rounds. This solution is based on a $\Delta+1$-coloring algorithm and a subsequent "shift down" strategy, where, in one round, all nodes pick the color of their parent node. Thus, the algorithm always computes a solution with only monochromatic nodes.
	
	We will next show that $\Pi'$ is $O(1)$-mendable. The main idea of this proof is based on the fact that nodes in connected components of mixed nodes can be turned into monochromatic nodes by recoloring them as follows: find the lowest mixed node inside the connected component whose children are all monochromatic. Call this node $v$. Observe that each of the monochromatic children can be colored with at least $\Delta-1$ different colors, as their children (the grandchildren of $v$) can only block one color. The $\Delta-1$ children of $v$ can therefore pick a common color, thus making $v$ monochromatic. The same procedure can be continued in a bottom to top fashion for all nodes in the connected component of mixed nodes. 
	
	In order to mend the solution around an uncolored node $u$, we will apply the above strategy to each mixed child of $u$. We first consider the case where neither $u$ nor the parent of $u$ are the root. If the parent of $u$ is a mixed node, or if it is uncolored, we can turn all children of $u$ into monochromatic nodes and color them with the same color. Then, we pick a color for $u$ that is not in conflict with the color of its parent. Since $u$ is a monochromatic node in this strategy and since we did not introduce any new mixed nodes, the new (partial) solution is still valid for $\Pi'$.
	
	If the parent of $u$ is a colored monochromatic node, we need to make sure that we do not turn it into a mixed node, as this might increase the size of its connected component of mixed nodes. In this case, we will turn $u$ into a mixed node instead. Since $u$ only has monochromatic children after recoloring its subtrees, we can color $u$ with the same color as its siblings. In the last step, we need to find a valid coloring for the children of $u$. This is always possible, since the children of $u$ are monochromatic and therefore can choose one of the $\Delta-2 > 1$ possible colors. This strategy can turn $u$ into a mixed node. Since the parent and the children of $u$ are all monochromatic, $u$ is an isolated mixed node and therefore the new (partial) solution is valid for $\Pi'$.   
	
	Two special cases remain in our analysis. In the first case, the parent of $u$ is the root. Here, the siblings of $u$ can be mixed nodes as well. Turning the root into a mixed node will increase the size of the connected component of mixed nodes by $1$. In order to prevent this situation, we need to apply our recoloring strategy to $u$'s own subtree as well as the subtrees of its siblings. This way, all siblings will be turned into monochromatic nodes. Finally, we can pick the color of the siblings such that they have a different color from the root.
	In the second case, $u$ itself is the root of the tree. Then, it might not be possible to turn the root into a monochromatic node. However, the root can turn all its children into monochromatic nodes using a strategy similar to the above. The root can then choose an arbitrary color and recolor the children. In these cases, the root might be turned into an isolated mixed node. Such a solution still satisfies $\Pi'$.
	
	Observe that, in the worst case, we need to apply the recoloring strategy to a node and its siblings. This would require a mending radius of $k+3$.
\end{proof}

While such a restriction to monochromatic and mixed nodes is natural for edge-checkable problems on directed trees, not all edge-checkable problems can be made $O(1)$-mendable this way. The next theorem shows a construction, where all nodes in a directed tree are monochromatic, but where the solution is not $O(1)$-mendable, even if we allow the mending algorithm to turn all monochromatic nodes into mixed ones.    

\begin{figure}
	\centering
	\includegraphics[width=0.6\textwidth]{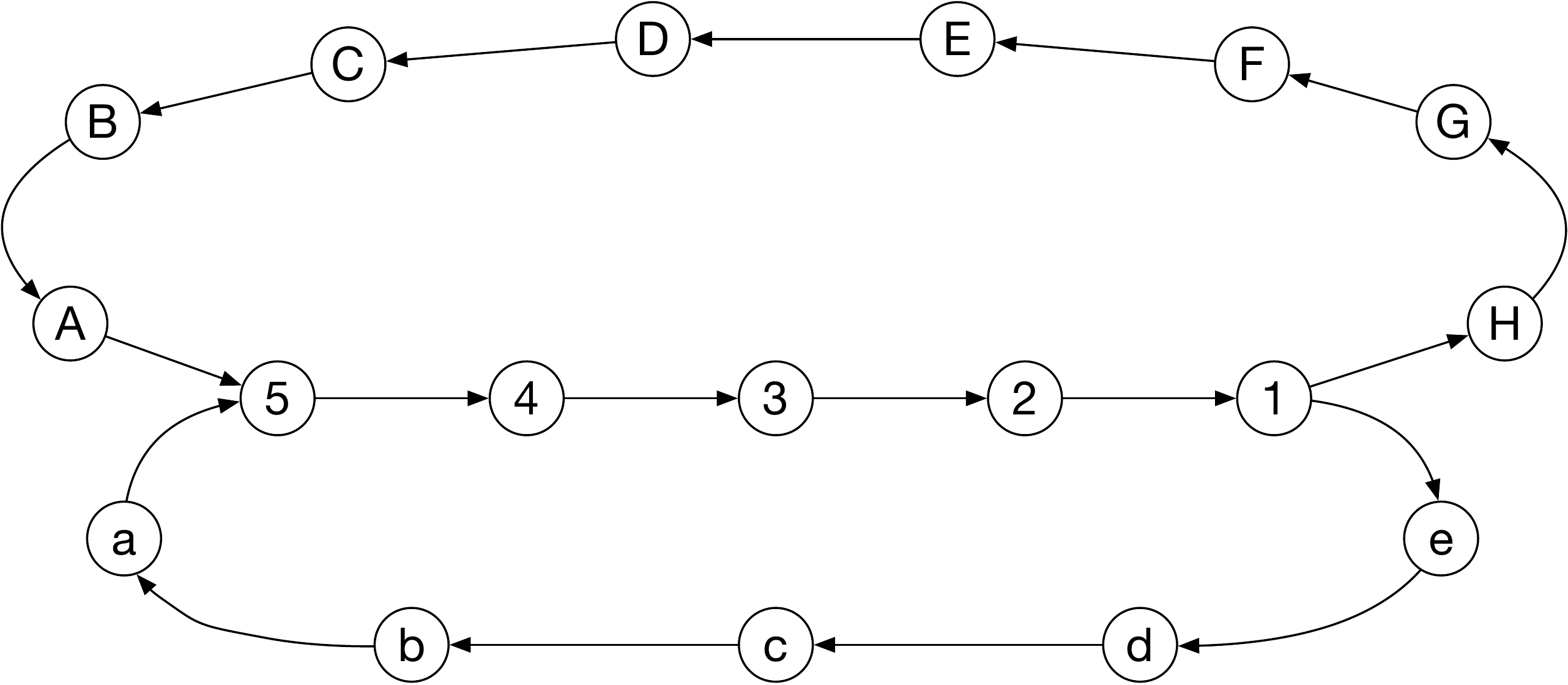}
	\caption{Diagram representation of the counterexample from Theorem~\ref{thm:coloring_trees_impossibility_example}.}\label{fig:coloring_impossibility_def}
\end{figure}

\begin{theorem}\label{thm:coloring_trees_impossibility_example}
	There exist a symmetric LCL $\Pi$ on rooted trees without input, for which any restriction using monochromatic and mixed nodes to a new LCL problem $\Pi'$ with the same time complexity results in $\Omega(\log n)$-mendability. 
\end{theorem}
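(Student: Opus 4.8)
The plan is to build a single, concrete edge-checkable LCL $\Pi$ on rooted binary trees that is $O(\log^* n)$-solvable but whose hard mending instance consists only of monochromatic nodes, so that no restriction phrased in terms of the mono/mixed dichotomy (in the style of Theorems~\ref{thm:Three-coloring_binary_trees} and~\ref{thm:Delta-coloring_Delta-regular_trees}, where the extra constraint only limits how mixed nodes may cluster) can weaken it. Concretely I would take $\Gamma = \{0,1,2\}\cup\{0^\star,1^\star,2^\star\}$ with two non-interacting "modes": the unstarred labels behave like ordinary $3$-coloring (each parent--child edge requires distinct labels), which already makes $\Pi$ solvable in $O(\log^* n)$ rounds on every tree; the starred labels form a rigid mode where a parent--child edge with labels $p^\star$ (parent), $c^\star$ (child) is allowed only when $p\equiv c+1\pmod 3$, and every edge mixing a starred with an unstarred label is forbidden. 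Two easy structural facts fall out of the edge constraints alone: (i) a starred node's two edges to its children pin down both children, so a starred node is automatically monochromatic and any valid all-starred labeling of a complete binary tree is the unique "offset-by-depth" pattern once the root label is fixed; and (ii) since every edge joins labels of the same mode, on a connected tree every valid $\Pi$-solution is entirely starred or entirely unstarred. Note $\Pi$ is symmetric in the children and is a genuine LCL ($\Delta=3$, finite $\Gamma$).

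Next I would exhibit the hard partial solution. Let $v$ be the root of a complete binary tree of depth $d+1$, so $n=\Theta(2^d)$ and $d=\Theta(\log n)$; label the left subtree by the starred offset pattern forcing $v$'s left child to be $0^\star$, label the right subtree by the starred offset pattern forcing $v$'s right child to be $1^\star$, and set $\lambda(v)=\bot$. Then every labeled node is monochromatic, and $\psi^*$ accepts $\lambda$: the only node whose edge constraints could fail is $v$, which is empty, and since $\Pi$ is edge-checkable nothing at distance $\ge 2$ from $v$ is affected. I would then show that any $t$-mend $\mu$ of $\lambda$ at $v$ forces $t\ge d+1=\Omega(\log n)$, \emph{even allowing $\mu$ to use the full label set and to create mixed nodes anywhere}. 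Since $\lambda$'s only empty node is $v$ and $\mu(v)\ne\bot$, $\mu$ has no empty labels, so $\psi$ is happy everywhere in $\mu$ and $\mu$ is a complete valid $\Pi$-solution; by fact~(ii) the whole tree is unstarred or starred in $\mu$. If unstarred, $\mu$ disagrees with the all-starred $\lambda$ at every node, in particular at leaves at distance $d+1$ from $v$. If starred, $\mu$ is the unique offset pattern determined by $\mu(v)$, and that pattern cannot match $\lambda$ on both subtrees at once (matching the left subtree needs $v$'s left child to be $0^\star$, matching the right needs $1^\star$); two distinct offset patterns disagree at every depth, so $\mu$ disagrees with $\lambda$ throughout one whole subtree, again out to distance $d+1$.

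Finally I would lift the conclusion to every admissible restriction $\Pi'$. Such a restriction only adds an LCL constraint about the mono/mixed-tagged configuration structure and comes with a projection sending valid $\Pi'$-solutions to valid $\Pi$-solutions. Since every labeled node of $\lambda$ is monochromatic, the canonical tagged lift of $\lambda$ is accepted by $(\psi')^*$ (any constraint restricting mixed-node clusters is vacuous on it). Any $t$-mend $\mu'$ of the lifted $\lambda$ in $\Pi'$ then projects to a $t$-mend of $\lambda$ in $\Pi$: it has no empty labels off the original hole, it agrees with $\lambda$ wherever the tags agree, and it is $\psi^*$-accepted by the projection property. So by the previous paragraph $t=\Omega(\log n)$, i.e. $\Pi'$ is $\Omega(\log n)$-mendable; in particular this holds for every restriction with the same round complexity as $\Pi$, proving the theorem (the identity restriction $\Pi'=\Pi$ also shows $\Pi$ itself, while $O(\log^* n)$-solvable, is $\Omega(\log n)$-mendable).

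I expect the main obstacle to be the tension in the requirements on $\Pi$: it must be genuinely $O(\log^* n)$-solvable yet carry a rigid substructure, and that substructure must survive arbitrary mono/mixed restrictions, which forces it to live entirely among monochromatic nodes — whereas the natural rigid gadgets used earlier (in the $3$-coloring counterexamples) are made of mixed nodes. The two-mode design reconciles this, but one then has to be careful in the mend analysis, since the mend may switch modes and introduce mixed nodes near $v$; ruling out such cheap escapes is exactly what facts~(i) and~(ii) are for. Some care is also needed to keep $\Pi$ edge-checkable (monochromaticity of starred nodes must be \emph{forced} by the edge constraints, not assumed) and to set up the two subtrees below $v$ with provably incompatible offsets.
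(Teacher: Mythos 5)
Your proposal is correct, but it reaches the theorem by a genuinely different construction than the paper. The paper builds a single ``coloring-like'' problem from a diagram consisting of two overlapping directed cycles of coprime lengths $13$ and $10$ sharing a short segment of states; rigidity of the recursive gadget $S$ then comes from the number-theoretic fact that no label other than $4$ can return to label $4$ in both $10$ and $13$ steps, so recoloring the root of $S$ necessarily propagates into one of its two subtrees and hence down to the leaves. Your two-mode design separates the two roles that the paper's cycles play simultaneously: the unstarred $3$-coloring mode supplies $O(\log^* n)$-solvability, while the starred depth-offset mode supplies an all-monochromatic rigid instance, and the mode dichotomy (every complete solution is globally one mode) plus the uniqueness of the offset pattern given the root's label make the $\Omega(\log n)$ lower bound essentially immediate. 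Both arguments share the same key insight --- the hard partial solution must consist entirely of monochromatic nodes so that any restriction phrased in terms of mixed-node structure is vacuous on it --- and both carry the same mild imprecision about what exactly counts as a ``restriction using monochromatic and mixed nodes'' (a restriction that outright forbids all-monochromatic configurations is implicitly excluded in both). What your route buys is a cleaner and more modular lower-bound analysis; what the paper's route buys is that its problem lives entirely inside one connected diagram and extends, as noted in its closing remark, to strengthened height-$k$ definitions of monochromaticity by enlarging the overlap of the two cycles, a robustness you would need to argue separately since your starred gadget repeats with period $3$ in depth.
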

\begin{proof}
	We will start by defining the considered symmetric LCL problem on trees. This problem is represented by two overlapping directed cycles in the diagram representation as presented in Figure~\ref{fig:coloring_impossibility_def}. The first cycle $c_1$ consists of labels $A,B,C,D,E,F,G,H$ and the second cycle, $c_2$, of $a,b,c,d,e$. Both cycles overlap in the states $1,2,3,4,5$. Note that the two constructed cycles of length $13$ and $10$ are coprime and that all states in the graph are therefore flexible states. That is, the presented problem is $O(\log^* n)$-solvable. 
	The idea of this proof is to show that under any definition of monochromatic and mixed nodes, it is possible to construct a graph consisting only of monochromatic nodes which is $\Omega(\log n)$-mendable.
	
	We will construct a counterexample on a rooted binary tree using the following substructure $S$: Let the root of $S$ have color $4$. We will split the subtree under the root into a left subtree of height $13$ that will be colored using labels from $c_1$, and a right subtree of height $10$ that we will be colored with labels from $c_2$. Note that the leaves of $S$ will be colored with label $4$. The idea is then to attach the same substructure $S$ to each of the leaves of $S$ recursively.
	
	In the considered counterexample, the root of the tree is assumed to be the only uncolored node that will be mended. Let the left child of the root have color $4$. Attach $S$ recursively at this child until the leaves of the tree are reached. Further, let the right child of the root have color $3$ and its grandchildren color $4$. Apply the same strategy to the right grandchildren of the root. Observe first that all colored nodes in the constructed example are monochromatic nodes according to the definition of monochromatic and mixed nodes in Theorem~\ref{thm:Three-coloring_binary_trees}. Figure~\ref{fig:coloring_impossibility_counterexample} visualizes this construction and the substructure $S$.
	
	Observe further that the presented substructure $S$ is rigid, i.e. changing the label at the root of the substructure will imply that we have to change the labels of some of its leaves as well. Assume that we replace label $4$ with any other admissible label. The closest descendants with label $4$ lie either at distance $10$ or at distance $13$ from the root. Assume that we replace the label $4$ with label $5$. From label $5$, we can reach label $4$ within $9, 12, 19, 25, 29, \ldots$ steps. In particular, we will not be able to match the labels of the leaves of $S$ which would occur within $10$ and $13$ steps. That is, if we assign label $5$ to the root, we would have to undo the colors of all nodes in $S$ and also change the colors of all leaves. The same holds for almost all other labels besides $B$ and $1$. Note that from label $B$, we would be able to reach label $4$ within $10$, but not within $13$ steps, and from label $1$ we would reach label $4$ within $13$, but not $10$ steps. However, there is no other label, from which we would be able to reach $4$ within $10$ and $13$ steps. This means, that, no matter which other label we use to replace label $4$ in the root of $S$, there will be a subtree in $S$ for which we will have to change the labels of the leaves. The same argument can then be applied to the leaves of $S$ and thus the recoloring would propagate down to the leaves of the tree. 
	
	Note that the solution of the constructed example cannot be extended at the root. In fact, since all labels in the tree are uniquely defined by the colors of the leaves, a mending algorithm would have to undo the coloring in one of the subtrees of the root in order to produce a valid coloring. Such a change in the root would propagate to some leaves of the tree. This example would therefore have a mending radius of $\Omega(\log n)$.
\end{proof}
\begin{figure}[tbh]
	\centering
	\includegraphics[width=0.9\textwidth]{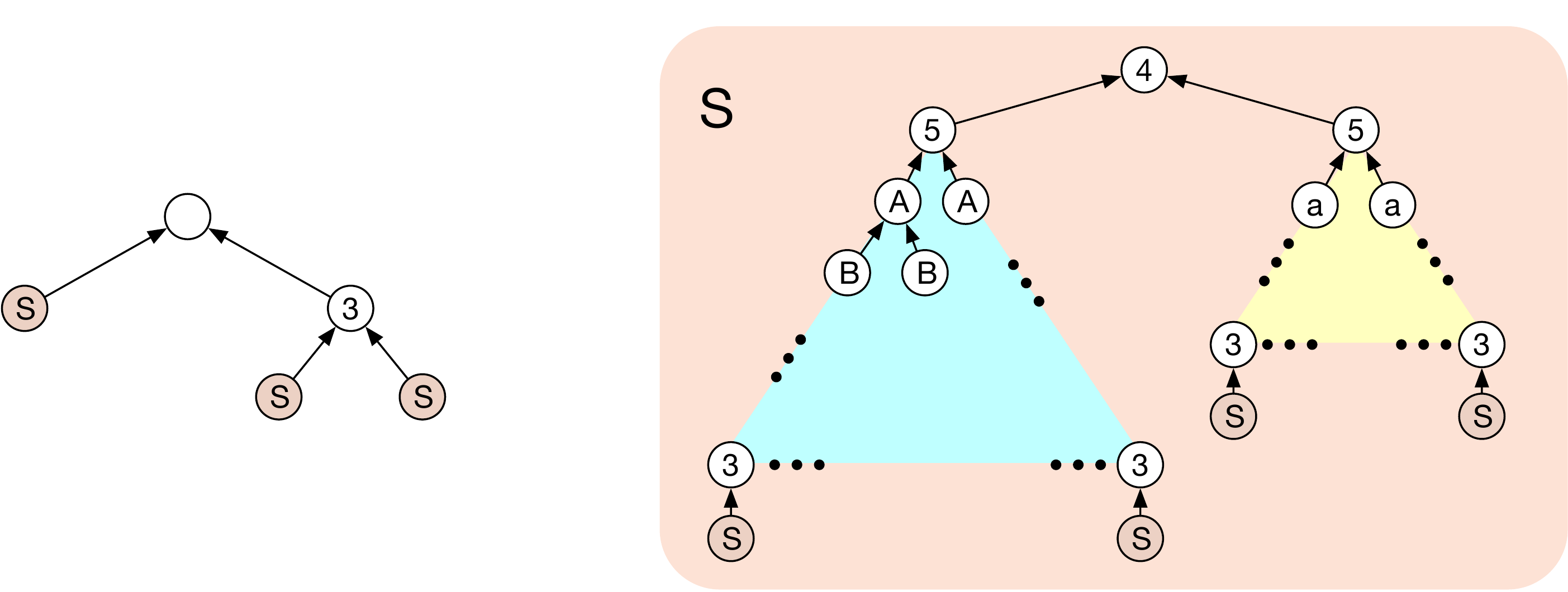}
	\caption{Counterexample from Theorem~\ref{thm:coloring_trees_impossibility_example}. On the left side, the construction of the tree with some uncolored inner node is presented. The leaves $S$ are replaced by the recursive substructure from the right side.}\label{fig:coloring_impossibility_counterexample}
\end{figure}
Observe that the proof of this theorem also holds if we restrict the definition of monochromatic nodes even more: let a node be monochromatic, if the descendants in each layer of its height-$k$ subtree have the same colors. In a height-$2$ subtree the children, as well as the grandchildren, would have the same colors. We can modify the above construction by performing the splitting step at the node of color $3$. This would make sure that the nodes in all height-2 subtrees have the same color and are therefore monochromatic. For larger $k$, we would have to extend the size of the overlapping set of the two cycles, while keeping the property that the cycle length are coprime. 

\newpage

\section{Proof of the complexity gap for trees} \label{app:tree-gap-proof}

In this appendix, we give the full proof of Theorem~\ref{thm:tree-gap-logn-n}.

\begin{proof}[Proof of Theorem~\ref{thm:tree-gap-logn-n}]
	Let $\Pi$ be an arbitrary LCL problem on trees that is $o(n)$-mendable. We prove that it is also $O(\log n)$-mendable.
	
	Let $\mathcal{T} = (V,E)$ be a tree and let $\loutput(v) \in \Gamma \cup \{\bot\}$ be the output assigned to nodes $v \in V$. We describe a procedure that, given a node $v \in V$ such that $\loutput(v) = \bot$, it is able to mend the solution by modifying the output labels of nodes at distance at most $O(\log n)$ from $v$.
	
	We start by assigning to each node $u \in V$ a temporary label $\ell(u) \in \bigcup_{1\le i\le L}\{R_i,C_i\}$, for some $L \in O(\log n)$, by running the following procedure that depends on a constant $k$ to be fixed later. The value $i$ for which $\ell(u) \in \{R_i,C_i\}$ is also called the layer of $u$. This procedure is a modified version of the rake and compress procedure of Miller and Reif~\cite{Miller1985}: In the compress phases, we remove degree-$2$ nodes only if they are inside paths of length at least $2k+1$.
	
	Repeat the following for all $i$ from 1 to $L = O(\log n)$.
	\begin{enumerate}
		\item Remove nodes of degree $\le 1$, assign the label $R_i$ to these nodes.
		\item Remove nodes of degree exactly $2$ that are in paths of length at least $2k+1$. Assign the label $C_i$ to these nodes.
	\end{enumerate}
	
	By choosing a sufficiently large $L$ the obtained graph is always empty: the analysis is the same as for the standard rake and compress, except short paths may require an additional $2k$ rake steps. Since $k$ is a constant, $O(\log n)$ iterations suffice.
	
	Call the nodes that have been assigned label $R_i$, for any $i$, \emph{rake nodes}, and the nodes that have been assigned label $C_i$, for any $i$, \emph{compress nodes}. In the latter case, \emph{compress path} of a node is the path of length at least $2k+1$ in which the node was when it was removed. We prove that the decomposition produced by the modified rake and compress procedure has the following \emph{separation property}:
	\begin{quote}
	Let $v_1$ and $v_2$ be two nodes such that $\{\ell(v_1),\ell(v_2)\} \subseteq \{ C_i, R_i\}$, for some $i$, such that if $\ell(v_1)= \ell(v_2) = C_i$ then $v_1$ and $v_2$ are in different compress paths. 
	Then, in the forest $\mathcal{T}[\leq i]$ induced by the nodes in layers $j \leq i$, the shortest path between $v_1$ and $v_2$ must contain at least one node $v_3$ satisfying $\ell(v_3) = C_j$ for some $j < i$, or $v_1$ and $v_2$ are disconnected.
    \end{quote}
	
	Since every rake node has by definition at most one neighbor at a higher layer, following only rake nodes cannot lead to a different compress path in $\mathcal{T}[\leq i]$. Therefore the path must contain nodes of a compress path. Since $v_1$ and $v_2$ were assumed to be either rake nodes or compress nodes from different paths, this compress node must be from another compress path of a lower layer. 
	
	Let $\mathcal{F}$ be the forest obtained by the following procedure: start with $\mathcal{T}$ and remove all compress nodes that are at distance of at least $k$ from both endpoints of their compress path. Note that this procedure removes at least one node for each compress path. Let $Z$ be the set of removed nodes. Each tree $T$ of $\mathcal{F}$ has diameter $O(\log n)$: By construction, each rake node has at most one neighbor at the same or a higher layer, and each compress node has at most one neighbor at a higher layer. A path starting at a node of the top layer must pass into a lower layer every $k+2$ steps, since by the separation property there is no path from one rake node of any layer to another rake node of the same layer, and no path from a node of a compress path of any layer to a node of a different compress path in the same layer that does not pass through a node of a higher layer. This implies that every path from the top layer of each tree $T$ has length at most $(k+1)L$, and therefore each tree $T$ has radius at most $O(\log n)$.
	
	The mending procedure starts as follows. If $v$ is in $Z$, we skip this first phase. Let $T \in \mathcal{F}$ be the tree containing $v$. Such a tree exists because $v$ is not in $Z$. We start by writing $\bot$ on all the nodes of $Z$ that are neighbors of some node of $T$. We will later remove these temporary labels by mending at those nodes. Since $\Pi$ is mendable, then at least one valid output labeling $s : V \rightarrow \Gamma$ for $\Pi$ in $\mathcal{T}$ must exist. 
	Hence, we can mend $v$ and all nodes in $T$ by assigning $s(u)$ as output label for all nodes $u \in T$, and since $T$ is connected to the rest of the graph only through nodes labeled $\bot$, then the verifier accepts at all nodes of $T$. Since $T$ has diameter $O(\log n)$, this step changes the labeling at distance at most $O(\log n)$ from $v$.
	
	In order to perform this mending, we had to set $\bot$ as output for some nodes that were possibly not labeled $\bot$. Since this is not valid mending procedure, we next  recursively mend at those $\bot$. The nodes at which we have written $\bot$ are special, since the following property holds: all nodes that we still need to mend are inside compress paths, at distance at least $k$ from both endpoints.
	
	We now use the following procedure to mend the nodes where we have written a $\bot$. Let the set of these nodes be $X$. Note that all nodes $u \in X$ are labeled $\ell(u) = C_i$ for possibly different values of $i$. Let $x$ be the largest value, that is, $x = \max\{ i ~|~ \ell(u) = C_i, u \in X \}$. We show a procedure that removes the $\bot$ from all nodes $u$ satisfying $\ell(u) = C_x$ by possibly assigning $\loutput(w) = \bot$ to some nodes $w \in Z$ satisfying $\ell(u) = C_j$, for $j < x$, that are at distance $O(\log n)$ from $u$. By repeating this procedure $O(\log n)$ times we have mended the nodes of all layers, and thus the mending is complete. We will later show that the mending radius is $O(\log n)$.
	
	All nodes $v$ that need to be mended satisfy the following. There is a path $P$ of length at least $2k+1$ of nodes labeled $C_x$ and $v$ is at distance at least $k$ from both endpoints. Let $\mathcal{T} \setminus P$ be the forest obtained by removing the nodes of $P$ from $\mathcal{T}$, and let $\mathcal{S}$ be the set of trees of $\mathcal{T}\setminus P$ connected to the nodes of $P$ that are at distance at most $c$ from $v$, for $c < k$ to be fixed later. 
	
	We start by inserting $\bot$ to the compress paths just as in the first phase of mending. This ensures that the trees in $\mathcal{S}$, after removing the $\bot$-labeled nodes, have again diameter $O(\log n)$.
	Let $S$ be an arbitrary tree in $\mathcal{S}$ and let $u$ be the node of $S$ connected to a node of $P$. Let $T \in \mathcal{F}$ be the tree containing $u$ (recall that $\mathcal{F}$ was the tree obtained by removing the center nodes $Z$ of the compress paths from the original tree $\mathcal{T}$). Remove all nodes labeled $C_x$ from $T$, and from the obtained forest let $T'$ be the tree containing $u$. We write $\bot$ on all the nodes of $Z' \subseteq Z$ that are neighbors of some node of $T'$. 
	Note that by the separation property all nodes of $Z'$ are in layers $j < x$, since higher layer nodes are connected to $T'$ by a path of compress nodes which is cut by $Z$.
	
	It would be easy to now mend the tree $T$ in $O(\log n)$ in the same way as in the first phase of the procedure, except for the nodes of $P$ that the tree is connected to. We prove that, in order to mend $v$, it is enough to modify the labels in the subgraph obtained by considering the nodes of the path $P$ at distance at most $c$ from $v$, for some constant $c$, and all the subtrees starting from them, noting that all these subtrees have now diameter $O(\log n)$.
	
	Intuitively, we proceed as follows: we create a new virtual instance where we replace all these subtrees with ``equivalent'' trees of diameter $O(1)$, where the constant does not depend on $k$. In this way, we obtain an instance consisting of a long path with small trees connected to it. Since $\Pi$ is $o(n)$-mendable, there exists a minimum $k$ such that in all instances where the path has length $2k+1$ the mending radius is at most $c = k - r$ (where $r$ is the checkability radius of $\Pi$). Arguing similarly as in the proof of Theorem~\ref{thm:lowgap}, we can mend all such instances with radius $c$.
	
	The size of the equivalent trees will be independent of $k$ and they satisfy the following property: we can project the output of the mending procedure from the virtual instance to the original one such that the solution near the path $P$ is the same, and that we can also mend the remaining nodes of the subtrees.
	
	We will now describe how to construct the virtual instance. Let $S$ be an arbitrary tree in $\mathcal{S}$ and let $u$ be the node of $S$ connected to a node of $P$. Let $L(S)$ be the set of all possible output labelings for the nodes of $S$ satisfying that the nodes at distance at most $r$ (the checkability radius of $\Pi$) from $u$ have the same labeling as under $\loutput$ on $\mathcal{T}$, and such that all nodes of $S$ satisfy the verifier. For each labeling $l \in L(S)$, let $l_t(l,u)$ be the labeling of nodes at distance at most $t$ from $u$. Let $L_t(S,u) = \{ l_t(l,u) | l \in L(S)\}$. Let $S'$ be the smallest tree (breaking ties arbitrarily) satisfying the following:
	\begin{itemize}
		\item $S'$ contains a node $u'$ such that the radius-$r$ view around $u'$ is isomorphic to the radius-$r$ view of $u$ in $S$, including the partial output labeling.
		\item $L_{2r}(S,u) = L_{2r}(S',u')$, that is, the sets of possible completable labelings of $S$ and $S'$ up to distance $2r$ are the same.
	\end{itemize}
	The tree $S'$ must exist, as the original tree $S$ satisfies the properties. Since the maximum degree, the checkability radius, and the number of labels are constant, and since the constraints cannot depend on identifiers, there are only finitely many different possible values for $L_{2r}(S,u)$. Hence, $S'$ must have a number of nodes that is independent of $n$ or $k$, meaning that it has constant size. Also, $S'$ is computable (by brute force) given $S$.
	The tree $S'$ is equivalent to $S$ in the following sense: if we fix some labeling near $u'$ that can be completed to a valid labeling for all nodes of $S'$, then we can fix the same labeling near $u$ and complete $S$ as well. Let $c(S)$ denote the tree $S'$ computed for $S$ with the process described above.
	
	We now create the following instance: consider $P$ and all the subtrees $S \in \mathcal{S}$ connected to it. For each $S$, replace it with $c(S)$. This gives a path-like instance, where the path has a length of at least $2k+1$, and to each node of the path is attached a tree of constant size. We now run the mending procedure on the node of this instance corresponding to $v$. As observed previously, we can do this by modifying the solution of nodes at distance at most $c \leq k - r$ from $v$. This ensures that the modifications do not affect nodes outside $P$ and $\mathcal{S}$. We now copy the solution written on the nodes of the path and all nodes at distance at most $2r$ from them to the original instance. Then, for all the subtrees $S \in \mathcal{S}$ connected through the path via node $u$, we modify the solution of all the nodes at a distance strictly larger than $2r$ from $u$, such that the verifier accepts the solution on all nodes of $S$. Such a solution must exist by definition of $c(S)$. 
	
	We can apply this procedure recursively to obtain a mending of $\mathcal{T}$. It remains to show that the mending radius is $O(\log n)$.
	On a high level, this holds because whenever we recurse on a node $u$ far away from the original node $v$ that was labeled with a temporary $\bot$, then the path from $v$ to $u$ must go through rake nodes and at most two partial compress paths of descending layers. Therefore the recursion at $u$ will start from a lower layer. 
	
	More formally, consider the second phase of the procedure we have described. We have a node $v$ in some layer $i$, and in order to mend this node we recursively need to mend nodes $u$ in layers $j < i$. We can observe that the distance between $u$ and $v$ must be at most $2k + i - j$. Let $(r_1,\ldots,r_t)$ be an arbitrary sequence of the layers of the nodes at which we recurse, that is, we want to mend a node on layer $r_1$ and in order to do so we mend a node on layer $r_2$, and so on down to $r_t$. 
	The total distance between the nodes in which we recurse is at most $\sum_{i=1}^{t-1} 2k + r_i - r_{i+1} = (t-1)2k + \sum_{i=1}^{t-1} r_i - r_{i+1} = (t-1)2k + r_i - r_t$. Since $t \leq L$ and $r_i \leq L$, we get that total distance is at most $\le (2k+1)(L-1) = O(\log n)$. 
	
	In order to mend the original node we pay an additional $O(\log n)$ radius in the first phase, and in each recursive step we mend at distance $O(\log n)$ from the nodes in which we recurse. Hence, the total mending radius is $O(\log n)$.
\end{proof}

\newpage

\section{Additional mendability classes in general graphs}\label{app:sqrtn-general}

In this section, we will present an LCL $\Pi$ that has mending radius of $\Theta(\sqrt{n})$. In order to do so, we will present an LCL that is well-defined on any bounded-degree general graph and show that its mending radius is  $O(\sqrt{n})$.  The problem that we will define is such that for every incident edge, each node has an input label assigned to it (note that each edge has two input labels, one for each endpoint), while nodes need to give a single label as output. Then, we prove the existence of a family of graphs, that are grid-like structures, where $\Pi$ requires $\Omega(\sqrt{n})$ to be mended.

In other words, grid-like structures will be the ``hard'' instances for the LCL problem $\Pi$: in fact, we will show that mending $\Pi$ on bounded-degree general graphs that do not have a grid-like structure is not harder than mending $\Pi$ on grid-like instances. In the following we will start by defining what we call a grid-like structure, then we will describe a \emph{local characterization} of such instances, following by the formal description of our LCL problem $\Pi$. Finally, we will show matching upper and lower bounds for $\Pi$. 

\subsection{Grid-like structure}\label{sec:grid-structure}
We will now formally define a set of labels for the nodes, and local constraints over the labels, and we define a family of graphs $\mathcal{G}$ as the one containing all and only the labeled graphs that satisfy the local constraints over all nodes. Informally, a graph $G\in \mathcal{G}$ locally looks like a $2$-dimensional oriented grid (that is, each node has $4$ incident edges consistently labeled with the directions), and globally it wraps around along the two dimensions.

The constant-size set of labels is defined as $\mathcal{L^{\lgrid}} = \{\lup,\ldown,\lleft,\lright$\}.  We denote with $L_e(u)\in \mathcal{L^{\lgrid}}$ the label assigned to the incident edge $e$ of node $u$. We now define some local constraints that, if satisfied over all nodes of a graph, will imply that the graph is a grid-like structure that is properly oriented and wraps around (possibly in some non-aligned way) along the two dimensions.

\paragraph{Local constraints.} 
We define the local constraints $\mathcal{C^{\lgrid}}$ as follows:
\begin{enumerate}
	\item Basic constraints.
	\begin{enumerate}
		\item A node must have degree $4$.
		\item There are no parallel edges or self-loops.
		\item For each label $L_e(u)$, it holds that $L_e(u)\in \mathcal{L^{\lgrid}}$.
		\item Each incident edge has only one label.
		\item For any pair of incident edges $e$, $e'$, for each node $u$, it holds that $L_e(u)\neq L_{e'}(u)$. 
	\end{enumerate}
	\item Structural constraints. Let $z_u(L_1,\ldots,L_k)$ denote the node that we reach starting from node $u$ and following labels $L_1,\ldots,L_k$, where $L_i\in\{ \lup,\ldown,\lleft,\lright \}$ for each $i\in\{1,\ldots,k \}$.
	\begin{enumerate}
		\item If $L_e(u)=\lup$, then $L_e(z_u(\lup))=\ldown$.
		\item If $L_e(u)=\lright$, then $L_e(z_u(\lright))=\lleft$.
		\item $z_u(\ldown,\lright,\lup)=z_u(\lright)$.
		
	\end{enumerate}
\end{enumerate}
In the following, we will refer to a graph $G \in \mathcal{G}$ as a \emph{valid} instance. That is, a labeled graph is a valid instance if the constraints $\mathcal{C^{\lgrid}}$ are satisfied over all nodes.
The family $\mathcal{G}$ of valid instances contains, for example, torus graphs, but it also contains graphs that are not exactly torus graphs (for example grids that wrap around in some unaligned way, differently from a torus). Nevertheless, in some informal sense, the family of valid instances is restricted enough to exclude graphs that are far from being a torus.
Note that, on arbitrary graphs, nodes can check if the constraints $\mathcal{C^{\lgrid}}$ are locally satisfied by exploring a constant-radius neighborhood.

\subsection{The LCL problem \texorpdfstring{\boldmath$\Pi$}{Pi}}

We will now define a constant-radius checkable LCL $\Pi$, for which we will show that it has mending radius $\Theta(\sqrt{n})$, while it can trivially be solved in $0$-rounds of communication. The high-level idea behind this LCL problem $\Pi$ is the following. On the one hand, we allow nodes to output blindly a specific label (which makes the problem trivial to solve), and on the other hand, we design some local constraints on a subset of output labels such that, if we have a partial solution using those labels, then mending requires, in the worst case, $\Theta(\sqrt{n})$. In more detail, we will make sure that, if a node does not output the ``trivial'' label, it must either (1) detect that the graph does not locally look like a valid instance, or (2) it must point towards a node that detects it, forming a proper pointer chain, or (3) it must use pointers to create a cycle that satisfies some special properties. On the upper bound side, we show that we can always mend $\Pi$ in $O(\sqrt{n})$. On the lower bound side, we show that on specific valid instances, that are $\sqrt{n} \times \sqrt{n}$ torus graphs, we can have a partial solution that requires $\Omega(\sqrt{n})$ to mend.

More precisely, we define our LCL $\Pi$ as follows. The set of input labels is defined as $\Sigma =\mathcal{L^{\lgrid}}= \{\lup,\ldown,\lleft,\lright  \}$. The set of output labels is defined as $\Gamma=\{\lzero, \lflag, \lleft, \lup, \lright \}$. Informally, the label $\lzero$ can be used to trivially solve the problem in $0$ rounds, the label $\lflag$ can be used to claim that the graph does not locally look like a valid instance, while labels $\lleft$, $\lup$, and $\lright$ will be used as \emph{pointers}. Let $\loutput(u) \in \Gamma$ denote the output of a node $u$. The local constraints $\mathcal{C^{\llcl}}$ of $\Pi$ are as follows.
\begin{enumerate}
	\item If a node outputs $\lflag$, then there is at least one constraint in $\mathcal{C^{\lgrid}}$ (defined in Section \ref{sec:grid-structure}) that $u$ does not satisfy.
	\item If a node outputs $\lzero$, then there is no neighbor pointing at it. More formally, suppose $\loutput(u)=\lzero$, then, for each neighbor $v$ of $u$, it holds that either
	\begin{itemize}
		\item $\loutput(v)\in \{\lzero,\lflag \}$, or
		\item $\loutput(v)\in \{\lleft, \lup, \lright \}$ such that $z_v(\loutput(v))\neq u$ (i.e., by starting from $v$ and following the edge labeled $\loutput(v)$ we do not reach $u$).
	\end{itemize} 
	\item Two nodes cannot point to each other, and the pointer chains can only end on nodes that output $\lflag$. That is: 
	\begin{itemize}
		\item if $\loutput(u)=\lup$, then $\loutput(z_u(\lup))\in \{\lup, \lleft,\lright,\lflag \}$;
		\item if $\loutput(u)=\lleft$, then $\loutput(z_u(\lleft))\in \{\lleft,\lup, \lflag \}$;
		\item if $\loutput(u)=\lright$, then $\loutput(z_u(\lright))\in \{\lright,\lup, \lflag \}$.
	\end{itemize}
\end{enumerate}
Notice that problem $\Pi$ is a valid LCL problem, since the set of input and output labels have constant size, and all constraints in $\mathcal{C^{\llcl}}$ are locally checkable. Moreover, it is easy to see that $\Pi$ can be trivially solved in $0$ rounds of communication:  the solution  ``all nodes output $\lzero$'' is a valid one.

\paragraph{Upper bound on the mendability of $\Pi$.} Suppose that we need to perform mending at node $u$. We will show that, whatever is the partial solution that we are given, we can mend by modifying the output of nodes up to distance $r = c \cdot \sqrt{n}$, for some large enough constant $c$. We will show this by handling all the scenarios node $u$ may be in, that will boil down into whether or not there is a node $v$ that does not satisfy the constraints in $\mathcal{C^{\lgrid}}$ and can be reached from $u$ by only following edges that are not labeled $\ldown$ for at most $r$ steps. In the first case, we create a pointer chain that starts at $u$ and ends at $v$, while in the second case we create a pointer chain that starts at $u$ and ends at $u$ itself without using the label $\ldown$ (that is, a cycle that passes through $u$ without using $\ldown$). The mending algorithm at node $u$ behaves as follows.

\begin{enumerate}
	
	\item If node $u$ has an inconsistency in its neighborhood, that is, there is at least one constraint in $\mathcal{C^{\lgrid}}$ that is not satisfied by node $u$, then node $u$ is mended with the output label $\lflag$.\label{item:near-inconsistency}
	
	\item Otherwise, if each neighbor of $u$ has a label in $\{\lzero, \lflag, \bot\}$ then node $u$ is mended with the output label $\lzero$.\label{item:freedom}
	
	\item Otherwise, if there exists a neighbor $w$ of $u$ such that 
	\begin{itemize}
		\item $z_u(L)=w$ where $L\in\{\lleft,\lup,\lright\}$, that is, $u$ can reach $w$ by not using label $\ldown$, and
		\item $\loutput(w)= L'\in \{\lleft,\lup,\lright\}$ such that $z_w(L')\neq u$, i.e., $w$ outputs a pointer but it does not point at $u$,
	\end{itemize}
	then we mend $u$ by pointing at $w$, that is $\loutput(u)=L$.\label{item:safe-pointer}
	
	\item Otherwise, if there is a node $v$ within radius $r$ from $u$ such that $v$ has a label in $\{\lflag, \bot\}$ and $v$ can be reached from $u$ by following only edges labeled with $\lleft$, $\lup$, or $\lright$, then we mend node $u$ by creating a pointer chain using output labels in $\{\lleft,\lup,\lright\}$ that starts at $u$ and ends at $v$. More precisely, let $P = (p_1, \ldots, p_k)$ be an arbitrary path where $p_1 = u$ and $p_k = v$, such that $z_{p_i}(L_i) = p_{i+1}$ and $L_i \in \{\lleft,\lup,\lright\}$ for all $i < k$, and all nodes of $P$ are within radius $r$ from $u$. We assign to $p_i$ the label $L_i$ for all $i < k$. \label{item:far-inconsistency}
	
	\item The last case that remains is the one where everything up to radius $r$ from $u$ that can be reached from $u$ using only labels in $\{\lleft,\lup,\lright\}$ looks like a valid instance, that is, all these nodes satisfy the constraints $\mathcal{C^{\lgrid}}$ (and none of these nodes has the label $\bot$). In this case, $u$ is mended by creating a pointer chain using output labels in $\{\lleft,\lup,\lright\}$ that ends up creating a consistently oriented cycle. 
	
	More precisely, let $C = (c_1, \ldots, c_k)$ be a sequence of nodes where $c_1 = c_k = u$, such that $z_{c_i}(L_i) = c_{i+1}$ and $L_i \in \{\lleft,\lup,\lright\}$ for all $i < k$, and all nodes of $C$ are within radius $r$ from $u$. In other words, $C$ is a cycle containing $u$. We assign to $c_i$ the label $L_i$ for all $i < k$. We will later prove that such a sequence of nodes always exists.
	\label{item:no-inconsistency}
\end{enumerate}

\paragraph{Correctness of the mending algorithm.} We now show the correctness of the mending algorithm described above. Let us call a node $v$ that does not satisfy the constraints $\mathcal{C^{\llcl}}$ as an \emph{unhappy} node, and \emph{happy} otherwise. In the following, we will show that, no matter what is the partial solution of $\Pi$ that we are given, we do not create unhappy nodes by mending a node $u$ with the above algorithm.

\begin{itemize}
	\item If we are in case \ref{item:near-inconsistency}, whatever is the label that $u$'s neighbors output, the local constraints $\mathcal{C^{\llcl}}$ of $\Pi$ will continue to be satisfied even after mending $u$ with label $\lflag$. In fact, according to $\mathcal{C^{\llcl}}$, the label $\lflag$ is compatible with all labels in $\Gamma$.
	
	\item If we are in case \ref{item:freedom}, we mend a node $u$ using label $\lzero$ while all neighbors have an output in $\{\lzero, \lflag, \bot\}$, that are all compatible with $\lzero$.
	
	\item If we are in the case described in point \ref{item:safe-pointer}, node $u$ is mended by pointing to a neighbor $w$ who in turn outputs a pointer that does not point to $u$. Clearly, $w$ is still happy after the change. Also, all neighbors $v \neq w$ of $u$ are still happy, since a pointer is always compatible with any label assigned to all nodes it does not point to.
	
	\item Suppose we are in the case described in point \ref{item:far-inconsistency}. All nodes in the path $P$ are happy after the change, since there are no pointers pointing to each other, and the path ends on a node labeled $\lflag$ or $\bot$. Also, all neighbors of nodes in the path are happy, since a pointer is always compatible with any label assigned to all nodes it does not point to.
	
	\item The last case that remains to handle is the one described in point \ref{item:no-inconsistency}. The mending is correct for the same reasoning as for point \ref{item:far-inconsistency}. We need to prove that such a sequence $C$ of nodes always exists. By starting from $u$ and following edges $\lup$ and $\lright$ we must see a grid like structure, since all constraints $\mathcal{C^{\lgrid}}$ are locally satisfied. Note that at some point we must see this grid wrapping around. More formally, there must exist a node $v$ and two paths that start at $u$ and end at $v$ such that the number of followed edges labeled $\lup$ or $\lright$ is different in the two paths. Otherwise, it means that all the visited nodes are different, but this implies that we visit $r^2 = c^2 n > n$ different nodes, which is a contradiction. We can normalize such paths and represent them as a pair $(x,y)$ where we first go $x \ge 0$ times to the right and then $y \ge 0$ times up. Let $w(x,y)$ be the node reached by such a path. We have that there exist two different pairs satisfying $w(x,y) = w(x',y')$. Assume w.l.o.g.\ that $y \ge y'$. Since all constraints $\mathcal{C^{\lgrid}}$ are satisfied, then we also have $w(x,y-y') = w(x',0)$.
	If $x \ge x'$, then we also have that $w(x-x',y-y') = w(0,0) = u$, and this implies that $u$ can reach itself by only following labels $\lup$ and $\lright$. Otherwise, if $x < x'$, we have that $w(0,y-y') = w(x'-x,0)$. We obtain a cycle passing through $u$ by starting from $u$, going up for $y-y'$ steps, and then going left for $x'-x$ steps.
\end{itemize}
The above mending algorithm and its analysis imply the following lemma.

\begin{lemma}\label{lem:upper-bound-general}
	The LCL problem $\Pi$ is $O(\sqrt{n})$-mendable.
\end{lemma}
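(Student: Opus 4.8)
The plan is to verify that the mending procedure described above always succeeds by modifying only labels within radius $r=c\sqrt{n}$, and never creates an unhappy node; by the definition of mendability this gives $O(\sqrt{n})$-mendability. First I would observe that the five cases of the algorithm are \emph{exhaustive}, since they are stated as a chain of ``otherwise if'' clauses ending in a catch-all: either $u$ locally witnesses a violation of $\mathcal{C^{\lgrid}}$ (case~\ref{item:near-inconsistency}); or all neighbors of $u$ carry a label in $\{\lzero,\lflag,\bot\}$ (case~\ref{item:freedom}); or some neighbor of $u$ carries a pointer, and either one such pointer is reached from $u$ without the edge $\ldown$ and does not point back at $u$ (case~\ref{item:safe-pointer}), or not; in the latter situation one walks along $\{\lleft,\lup,\lright\}$-edges within radius $r$ and either hits a node labeled $\lflag$ or $\bot$ (case~\ref{item:far-inconsistency}) or sees only a coherent grid (case~\ref{item:no-inconsistency}).

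Next I would check \emph{correctness}, namely that after the patch every node is happy for $\mathcal{C^{\llcl}}$; this is where most of the routine bookkeeping lies. The label $\lflag$ is compatible with every output label, so case~\ref{item:near-inconsistency} is immediate; in case~\ref{item:freedom} the neighbors of $u$ all carry labels compatible with $\lzero$; in the pointer cases, a pointer placed at a node $p$ is compatible with whatever label sits on any neighbor of $p$ that $p$ does not point to, so I only have to check the endpoints of each new pointer chain: in case~\ref{item:safe-pointer} the chain is a single pointer ending on a pointer that points elsewhere; in case~\ref{item:far-inconsistency} it ends on a $\lflag$- or $\bot$-labeled node; and in both cases the internal transitions use only $\lleft,\lup,\lright$ with orderings (``right then up'' or ``up then left'') that respect the transition rules $\lup\to\{\lup,\lleft,\lright\}$, $\lleft\to\{\lleft,\lup\}$, $\lright\to\{\lright,\lup\}$ of constraint~3 of $\mathcal{C^{\llcl}}$.

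The genuinely non-trivial step, and the one I expect to be the main obstacle, is case~\ref{item:no-inconsistency}: proving that when the part of the radius-$r$ ball reachable from $u$ via $\{\lleft,\lup,\lright\}$ is a coherent grid with no $\bot$, there is a cycle through $u$ using only those three pointer labels. I would argue by pigeonhole: following $\lup$- and $\lright$-edges from $u$ and normalizing each walk to ``$x$ steps right, then $y$ steps up'', let $w(x,y)$ denote its endpoint; if all $w(x,y)$ with $x,y\le r$ were distinct we would have seen $r^2 = c^2 n > n$ nodes, a contradiction, so $w(x,y)=w(x',y')$ for two distinct pairs. Using the structural constraints $\mathcal{C^{\lgrid}}$ to shift such equalities (e.g.\ $z_u(\ldown,\lright,\lup)=z_u(\lright)$ lets one cancel a common final segment), I would reduce to $x'=0$ or $y'=0$ and read off an explicit closed walk through $u$: the subcase $x\ge x'$ yields a ``right then up'' cycle $w(x-x',y-y')=u$, and the subcase $x<x'$ yields an ``up then left'' cycle, both avoiding the $\ldown$ label, both of length at least $3$ since the grid has girth $4$, hence both legal pointer cycles for $\mathcal{C^{\llcl}}$.

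Finally I would note that in each case every relabelled node lies on a path or cycle of length at most $r$ emanating from $u$, so the mend has radius $O(\sqrt{n})$; combined with the case analysis and the correctness argument this proves the lemma. (The matching $\Omega(\sqrt{n})$ bound needed for the $\Theta(\sqrt{n})$ claim will come from a separate construction on $\sqrt{n}\times\sqrt{n}$ tori and is not part of this lemma.)
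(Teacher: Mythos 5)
Your proposal follows the paper's proof essentially verbatim: the same five-case mending algorithm, the same compatibility checks for $\lflag$, $\lzero$, and pointer labels, and the same pigeonhole-plus-normalization argument (comparing $w(x,y)$ and $w(x',y')$ and splitting on $x\ge x'$ versus $x<x'$) to produce the $\ldown$-free cycle in the final case. The only additions are minor explicit checks (exhaustiveness of the cases, cycle length at least $3$) that the paper leaves implicit, so this is correct and takes the same route.
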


\paragraph{Lower bound on the mendability.} In the following we will show that the LCL problem $\Pi$ has mending radius $\Omega(\sqrt{n})$. Let $n$ be any square number. The lower bound graph will be a graph $G$ in our family of graphs $\mathcal{G}$. More precisely $G$ will be a $\sqrt{n} \times \sqrt{n}$ $2$-dimensional grid that wraps around in both dimensions in an aligned way, forming a torus graph. More formally, let $(x,y)$ be the coordinates of the node on the $y$-th row and the $x$-th column, where $x,y \in \{ 1,2,\dotsc, \sqrt{n} \}$. There is an edge between any two nodes $u=(x_u,y_u)$ and $v=(x_v,y_v)$ in $G$ if either (1) $x_v=x_u$ and $y_v \equiv y_u + 1 \pmod{\sqrt{n}}$, or (2) $y_v=y_u$ and $x_v \equiv x_u + 1 \pmod{\sqrt{n}}$. Also, it is clear that a torus is a valid instance and it can be labeled with labels in $\mathcal{L^{\lgrid}}$ such that constraints in $\mathcal{C^{\lgrid}}$ are satisfied at all nodes. More precisely, consider an edge $e=\{u,v\}$ and let $u=(x_u,y_u)$ and $v=(x_v,y_v)$. 
The edge $e = \{u,v\}$ is labeled as follows.   
\begin{itemize}
	\item If $x_u=x_v$ and $y_v \equiv y_u + 1 \pmod{\sqrt{n}}$, then $L_e(u) = \lup$ and $L_e(v)=\ldown$.
	\item If $y_u=y_v$ and $x_v \equiv x_u + 1 \pmod{\sqrt{n}}$, then $L_e(u) = \lright$ and $L_e(v)=\lleft$.
\end{itemize}
We construct the following partial solution of the LCL problem $\Pi$ on $G$.

\begin{figure}
	\centering
	\includegraphics[width=0.6\textwidth]{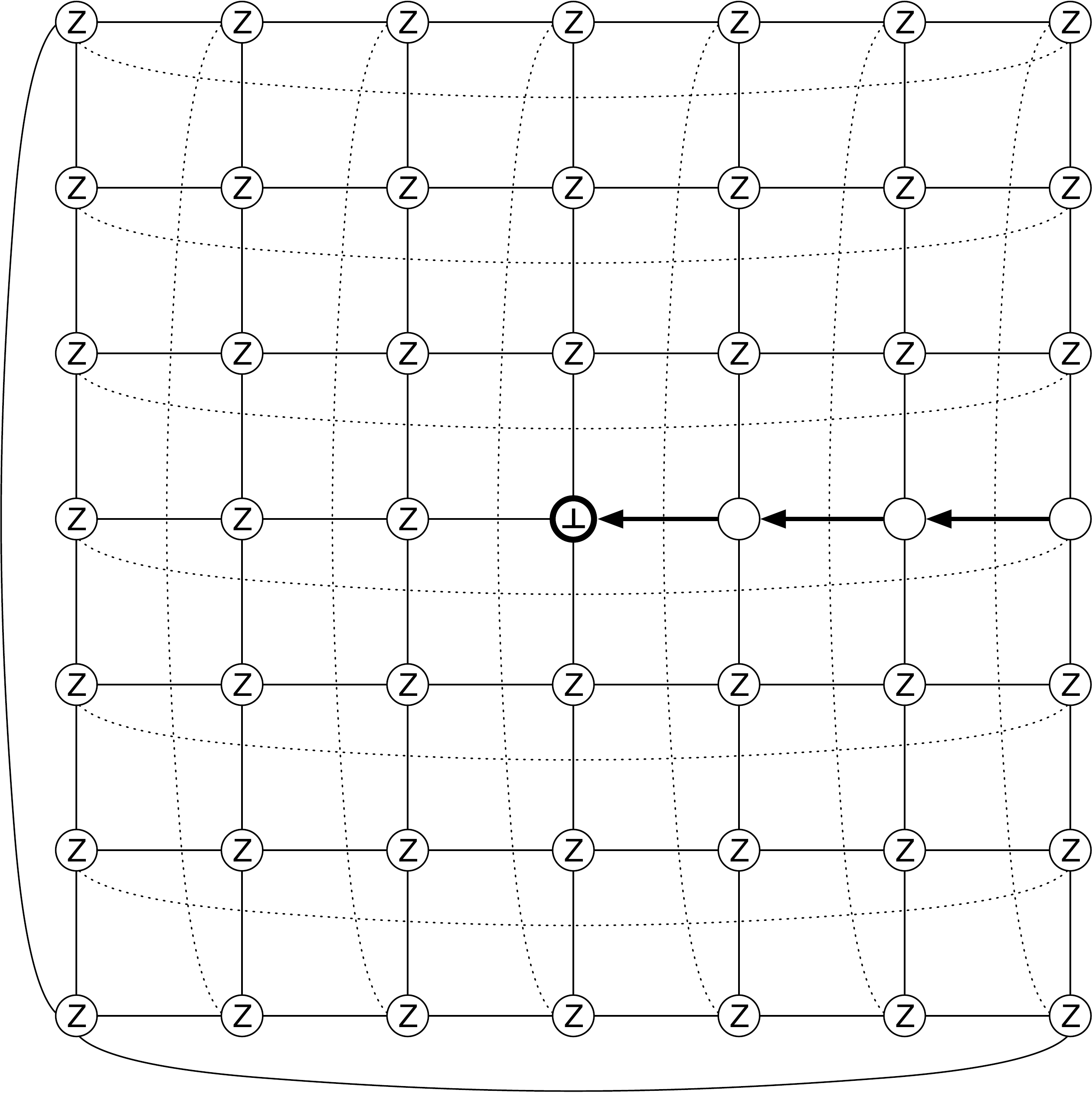}
	\caption{Example of a balanced torus graph that has in input a partial solution of $\Pi$, that is the one that we construct for showing our lower bound on the mending radius; the node that needs mending is the one shown in bold that has label $\bot$, while $\mathsf{Z}$ represents the label $\lzero$.}\label{fig:lower-bound-general-graphs}
\end{figure}

\begin{itemize}
	\item Node $u$ with coordinates $x_u=y_u=\lceil\sqrt{n}/2\rceil$ is labeled $\bot$.
	\item Each node $v$ with coordinates $\lceil\sqrt{n}/2\rceil + 1\le x_v\le \sqrt{n}$ and $y_v= \lceil\sqrt{n}/2\rceil$ is labeled $\lleft$. Let us call $P$ the path containing these nodes.
	\item All other nodes are labeled with $\lzero$.
\end{itemize}
See Figure \ref{fig:lower-bound-general-graphs} for an example. We argue that such a partial solution has a mending radius that is strictly larger than $r=c\sqrt{n}$, for some small enough constant $c$. First of all, note that in such an instance, the only node labeled $\bot$ is $u$, and hence, after mending $u$, all nodes must have a valid label for $\Pi$. Since the graph is a valid instance, the only valid solutions for $\Pi$ are the following:
\begin{enumerate}
	\item all nodes are labeled $\lzero$, or
	\item at least one node has a label in $\{\lleft,\lup,\lright\}$. Since no node can be labeled $\lflag$, pointer chains must form cycles. Also, since no cycle can use label $\ldown$, then any cycle must wrap around on one of the two dimensions, hence the length of each cycle must be at least $\sqrt{n}$.
\end{enumerate}
If the mending procedure tries to obtain a solution of the first type, then it must change the output of all nodes of $P$ to $\lzero$, but this mending has radius larger than $r$.

If the mending procedure tries to obtain a solution of the second type, then it must create a cycle of length at least $\sqrt{n}$. Since the number of nodes labeled $\lleft$ is at most $\sqrt{n}/2$, then the mending procedure has to modify at least $\sqrt{n}/2 > r$ other nodes in order to close a cycle. This implies the following lemma.

\begin{lemma}\label{lem:lower-bound-general}
	The LCL problem $\Pi$ has a mending radius of $\Omega(\sqrt{n})$.
\end{lemma}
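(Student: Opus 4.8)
The plan is to promote the torus instance $G$ and the partial labelling $\lambda$ described just before the statement into a full proof, filling in the two places where the informal argument needs care: (i) that $\lambda$ is genuinely accepted by the relaxed verifier $\psi^*$, and (ii) that \emph{every} way of mending the hole at $u$ changes a label at distance $\Omega(\sqrt n)$, not merely changes $\Omega(\sqrt n)$ labels. For (i) I would fix the natural constant-radius verifier $\psi$ of $\Pi$ and argue: every node within the checkability radius of $u$ is happy because it sees the unique $\bot$; a node $v$ farther from $u$ has no $\bot$ in its view, so one checks $\mathcal{C}^{\llcl}$ directly — a node of $P$ is labelled $\lleft$ and its left neighbour is again a node of $P$ labelled $\lleft$ (allowed by constraint~3), a $\lzero$ node is never pointed at because the only pointers present are the $\lleft$'s of $P$, which point along $P$ or into $u$, and nobody outputs $\lflag$. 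This part is routine bookkeeping.

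For the core claim, note that a $t$-mend $\mu$ of $\lambda$ at $u$ contains no empty label ($u$ was the only $\bot$ and $\mu(u)\neq\bot$), so $\mu$ is a complete feasible solution of $\Pi$ on $G$, hence satisfies $\mathcal{C}^{\llcl}$ everywhere. Since $G$ is a valid instance, constraint~1 forbids $\lflag$ anywhere, so on $G$ the effective output alphabet is $\{\lzero,\lleft,\lup,\lright\}$. I would then read off the structure of such solutions: by constraint~2 a $\lzero$ node cannot be pointed at, and by constraint~3 the target of a pointer is again a pointer, so the pointer-labelled nodes decompose into directed cycles using only $\lleft,\lup,\lright$ (never $\ldown$, which is not an output label). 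The asymmetry of constraint~3 — $\lleft$ may be followed only by $\lleft$ or $\lup$, and $\lright$ only by $\lright$ or $\lup$ — is the crucial point: a pointer cycle either contains a $\lup$ step, in which case, since vertical displacement on the torus accumulates exclusively through $\lup$ steps, it must contain at least $\sqrt n$ of them and therefore has nonzero vertical winding; or it consists solely of $\lleft$ steps (or solely of $\lright$ steps), in which case its length is a positive multiple of $\sqrt n$ and it has nonzero horizontal winding. Either way the cycle is non-contractible, and by a discrete intermediate-value argument it meets every column (if it winds horizontally) or every row (if it winds vertically).

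The conclusion is then a short case analysis. If $\mu$ is identically $\lzero$, every node of $P$ is relabelled; the endpoint of $P$, in column $\sqrt n$ of $u$'s row, is at distance $\lfloor\sqrt n/2\rfloor$ from $u$, so $t\ge\lfloor\sqrt n/2\rfloor$. Otherwise $\mu$ contains a pointer cycle $C$. If $C$ winds horizontally it contains a node in column $1$; that node is at distance at least $\lceil\sqrt n/2\rceil-1$ from $u$ (column distance alone, since a step of $G$ changes the column by at most one), it is not a node of $P$ (whose columns all lie strictly to the right of $u$'s) and it is not $u$, so — in $\lambda$ only $P$-nodes carry pointers while every other non-$u$ node carries $\lzero$ — it is relabelled, giving $t\ge\lceil\sqrt n/2\rceil-1$. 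If $C$ winds vertically it contains a node in the row antipodal to $u$'s, at distance at least $\lfloor\sqrt n/2\rfloor$ from $u$; that node lies outside $P$ (all of $P$ is in one row) and is not $u$, hence is relabelled. In every case $t=\Omega(\sqrt n)$, so $\psi$ is not $o(\sqrt n)$-mendable; since the mending radius is robust up to additive constants (Section~\ref{sec:mendability}), no verifier of $\Pi$ is $o(\sqrt n)$-mendable, i.e.\ the mending radius of $\Pi$ is $\Omega(\sqrt n)$.

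I expect the main obstacle to be the structural step in the second paragraph: proving that any pointer cycle in a complete solution on the torus must be non-contractible. This is exactly where the forbidden $\lleft$–$\lright$ and $\lright$–$\lleft$ transitions of constraint~3, together with the absence of $\ldown$ among the output labels, do the real work, and it is what upgrades the naive observation ``at least $\sqrt n/2$ labels must change'' — which on its own only yields an $\Omega(n^{1/4})$ bound via a ball-volume count — into a genuine $\Omega(\sqrt n)$ bound on the mending distance. Everything else (verifying $\lambda$ is accepted, noting that a mend is necessarily a complete solution, and the final case analysis) is straightforward.
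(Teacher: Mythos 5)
Your proposal follows the paper's proof essentially step for step: the same torus instance, the same partial labelling with the half-row pointer chain $P$ ending at the hole $u$, and the same dichotomy between the all-$\lzero$ completion and completions containing a pointer cycle that must wrap around the torus. The one place where you go beyond the paper is the cycle case: the paper concludes by observing that at least $\sqrt{n}/2$ nodes outside $P$ must be relabelled and compares that \emph{count} to $r=c\sqrt{n}$, which, as you correctly point out, does not by itself bound the mending \emph{radius} in a two-dimensional structure (a ball-volume count alone would only give $\Omega(n^{1/4})$). Your winding argument --- no $\ldown$ among the output labels, the asymmetric $\lleft$/$\lright$ transition rules, hence every pointer cycle meets every row or every column, hence contains a relabelled node outside $P$ at distance $\Omega(\sqrt{n})$ from $u$ --- is the right way to make this step airtight, and your closing appeal to the additive-constant robustness of the definition to pass from a single verifier to the problem is also appropriate (the paper leaves both points implicit). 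No gaps.
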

Combining the results of Lemma \ref{lem:upper-bound-general} and Lemma \ref{lem:lower-bound-general} we get the following theorem.

\thmreplacesqrtn*

\end{document}